\newtheorem{thm}{\textbf{Theorem}}
\newtheorem{propo}{\textbf{Proposition}}
\newtheorem{rmq}{\textbf{Remark}}
\newcommand{\R}{\mathbb{R}}
\newcommand{\G}{\Tilde{G}}
\newcommand{\comment}[1]{}
\newcommand{\tcb}{\textcolor{blue}}
\begin{document}

\title{On the usefulness of a minimalistic model to study tree-grass biomass distributions along biogeographic gradients in the savanna biome}
\author{I.V. Yatat Djeumen$^1$\footnote{Corresponding Author: ivric.yatatdjeumen@up.ac.za}, Y. Dumont$^{2,3,1}$, A. Doizy$^{4,5}$, P. Couteron$^3$ \\
\small{$^1$University of Pretoria, Department of Mathematics and Applied Mathematics, Pretoria, South Africa} \\
\small{$^2$CIRAD, UMR AMAP, F-97410 St Pierre, Reunion island, France} \\
\small{$^3$AMAP, University of Montpellier, CIRAD, CNRS, INRAE, IRD, Montpellier, France,} \\
\small{$^4$CIRAD, UMR PVBMT, F-97410 St Pierre, Reunion island, France} \\
\small{$^5$DoAna - Statistiques R\'eunion, F-97480 Saint-Joseph, R\'eunion island, France}
 }
\maketitle

\begin{abstract}
We present and analyze a model aiming at recovering as dynamical outcomes of tree-grass interactions the wide range of vegetation physiognomies observable in the savanna biome along rainfall gradients at regional/continental scales. The model is based on two ordinary differential equations (ODE), for woody and grass biomass. It is parameterized from literature and retains mathematical tractability, since we restricted it to the main processes, notably tree-grass asymmetric interactions (either facilitative or competitive) and the grass-fire feedback.  We used a fully qualitative analysis to derive all possible long term dynamics and express them in a bifurcation diagram in relation to mean annual rainfall and fire frequency. We delineated domains of monostability (forest, grassland, savanna), of bistability (e.g. forest-grassland or forest-savanna) and even tristability. Notably, we highlighted regions in which two savanna equilibria may be jointly stable (possibly in addition to forest or grassland). We verified that common knowledge about decreasing woody biomass with increasing fire frequency is recovered for all levels of rainfall, contrary to previous attempts using analogous ODE frameworks. Thus, this framework appears able to render more realistic and diversified outcomes than often thought of. Our model can help figure out the ongoing dynamics of savanna vegetation in large territories for which local data are sparse or absent. 
To explore the bifurcation diagram with different combinations of the model parameters, we have developed a user-friendly R-Shiny application freely available at : \url{https://gitlab.com/cirad-apps/tree-grass}.

\end{abstract}

\textbf{Key words}:
Forest, Savanna, Grassland, Mean annual rainfall, Fires, Ordinary differential equations, Alternative stable states, Qualitative analysis, Sensitivity analysis, Bifurcation diagram, R-shiny app.



\section{Introduction}\label{intro}
Savannas, as broadly defined as systems where tree and grass coexist
(\citet{Scholes1997}), occupy about $20\%$ of the Earth land
surface and are observed in a large range of Mean Annual
Precipitation (MAP). In Africa, they particularly occur between 100
mm and 1500 mm (and sometimes more) of total mean annual
precipitation (\citet{Lehmann2011},  \citet{Baudena2013}), that
is along a precipitation gradient leading from dense tropical forest
to desert. There is widespread evidence that fire and water
availability are variables which can exert determinant roles in
mixed tree-grass systems (\citet{Scholes1997}, \citet{Yatat2018} and references therein).
Empirical studies showed that vegetation properties such as biomass, leaf
area, net primary production, maximal tree height and annual maximum
standing crop of grasses vary along gradients of precipitation
(\citet{PenningDjiteye1982}, \citet{Abbadie2006lamto}). It is
widely accepted that water availability directly limits woody
vegetation in the driest part of the rainfall gradient, see e.g. \citet{Sankaran2005determinants}. Along the rest of this gradient, rainfall is
known to influence indirectly the fire regime through what can be
referred to as the grass-fire feedback
(\citet{Yatat2018}, \citet{Scholes2003convex} and references therein): grass biomass that grows during
rainfall periods is fuel for fires occurring in the dry months.
Sufficiently frequent and intense fires are known to prevent or at
least delay the development of woody vegetation
(\citet{Yatat2018},
\citet{Govender2006}), thereby preventing trees and shrubs to
depress grass production through competition for light and nutrients. The grass-fire feedback is
widely acknowledged in literature as a force able to counteract the
asymmetric competition of trees onto grasses, at least for climatic
conditions within the savanna biome that enables sufficient grass
production during wet months.

\par

Dynamical processes underlying savanna vegetation have been the
subject of many models. Some of them explicitly considered the
influence of soil water resource on the respective productions of
grass and woody vegetation components
(see the review of
\citet{Yatat2018}). Most of the models also incorporated the
grass-fire positive feedback, several of them distinguishing
fire-sensitive small trees and shrubs from non-sensitive large trees (\citet{Higgins2000fire},
\citet{Beckage2009}, \citet{Baudena2010},
\citet{Staver2011tree}, \citet{Yatat2014, Yatat2018}), while the rest stuck to the simplest formalism featuring just grass
and tree state variables (\citet{vanLangevelde2003},
\citet{DOdorico2006probabilistic}, \citet{Higgins2010stability},
\citet{Accatino2010tree}, \citet{Beckage2011grass},
\citet{YuDOdoricco2014ecohydrological}, \citet{Tchuinte2014}, see also the review of \citet{Yatat2018}). Models featuring the grass-fire feedbacks have
shown that complex physiognomies displaying tree-grass coexistence (i.e.
savannas) may be stable (\citet{vanLangevelde2003},
\citet{DOdorico2006probabilistic}, \citet{Baudena2010},
\citet{Accatino2010tree}, \citet{Yatat2014},
\citet{Tchuinte2014}) as well as more ``trivial" equilibria such as
desert, dense forest or open grassland. Some models also predict
alternative stable physiognomies under similar rainfall conditions
(\citet{Accatino2010tree}, \citet{Staver2011tree},
 \citet{Tchuinte2014}, \citet{Yatat2014, Yatat2018}) while field observations report contrasted savanna-forest mosaics at landscape scale (see Figure \ref{forest_savanna_mosaic}). 
However, the ability to predict, along the whole rainfall gradient, all the physiognomies that are suggested by observations as possible stable
or multi-stable outcomes was not fully mastered and established. Indeed, most models focused on specific contexts or questions and often feature parameters difficult to assess over large territories, especially in Africa (\citet{Accatino2010tree},
\citet{Higgins2010stability}, \citet{Baudena2010},
\citet{DeMichele2011}, \citet{Beckage2011grass},
\citet{YuDOdoricco2014ecohydrological}). Nonetheless, the \citet{Accatino2010tree}'s attempt was a seminal step in that direction but with some notable imperfections (see below).

\begin{figure}[h!]
    \centering
    \subfloat[][]{\includegraphics[scale=0.5]{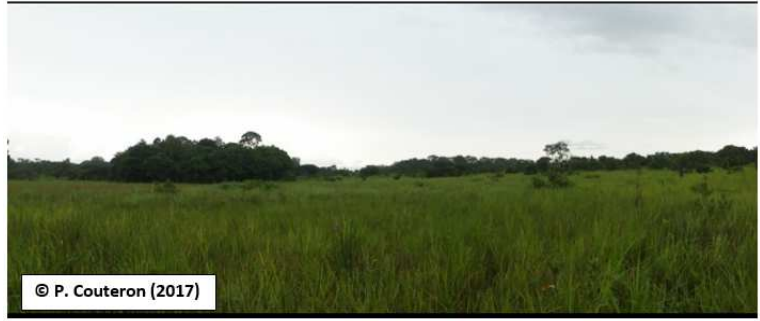}}
    \vspace{0.25cm}
    \subfloat[][]{\includegraphics[scale=0.43]{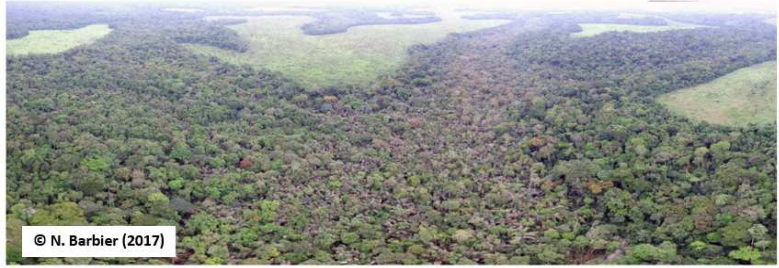}}
    \caption{{\scriptsize (a) Photo of forest--grassland boundary in
    Mpem \& Djim National Park, Central Cameroon. (b) An abrupt Forest--savanna (grassland) mosaics in Ayos, Cameroon.}}
    \label{forest_savanna_mosaic}
\end{figure}

The \citet{Accatino2010tree} model 
was pioneering in the sense
that it allowed these authors to provide a ``broad picture", by
delimiting stability domains for a variety of possible vegetation
equilibria as functions of gradients in rainfall and fire frequency.
This result was especially  interesting and the considered model
was sufficiently simple (two vegetation variables, i.e. grass and
tree covers) to provide analytical forecasts.  However, results from
\citet{Accatino2010tree} were questionable regarding the role of
fire return time. In fact, all over the rainfall gradient their
model predicted that increasing fire frequency would lead to an
increase in woody cover which contradicts empirical
knowledge on the subject. The features of the model that led to this
problem were barely debated in the ensuing publications. And more
recent papers instead either devised more complex models or shift to
stochastic modelling (see the review of \citet{Yatat2018}) that did not allow much
analytical exploration of their fundamental properties.

\par

 In this paper, we aim to account for a wide range of physiognomies and dynamical outcomes of the tree--grass interactions system at both regional and continental scales by
relying on a simple model that explicitly address some essential
processes that are: (i) limits put by rainfall on woody and grassy biomasses development, (ii) asymmetric interactions between woody and herbaceous
plant life forms, (iii) positive feedback between grass biomass and
fire intensity and, decreased fire impact with tree height.

\par

Starting from \citet{Yatat2018}, we explicitly express the growth of both
woody and herbaceous vegetation as functions of the mean annual rainfall, 
with the aim to study
model predictions in direct relation to rainfall and fire frequency gradients.
Through the present contribution we aim at extending and improving a framework for modelling
vegetation in the savanna biome through an ODE-based model, that is minimal (in terms of state variables and parameters), mathematically tractable and generic in the sense that its structure does not pertain to particular locations in the savanna biome. 

\par
 An idiosyncrasy of our minimalistic tree-grass
model is that we considered the fire-induced loss of woody
biomass by mean of two independent non-linear functions, namely
$\omega$ (see (\ref{omega_fction})) and $\vartheta$ (see
(\ref{theta_fction})). Introducing these two functions,
\citet{Tchuinte2017} showed that the previous model substantially
improve previously published results on tree-grass dynamical systems
(see also \citet{Yatat2018}). For example, they showed that
increasing fire return period systematically leads the system to
switch from grassland or savanna to forest (woody biomass build-up).
This result is entirely consistent with field observations
(\citet{Bond2005global}, \citet{Yatat2018} and references therein). From this sound basis, we introduced improvements in the model which are exposed in the present paper. Notably, we now let  influences of trees on grasses range from facilitation to competition according to climate.

The goal of the present paper is to show that the theoretical analysis of our minimalistic tree-grass
ODE model is able to provide, at  broad scales, an array of sensible predictions about possible vegetation
physiognomies that was not attained by tree--grass models of similar levels of complexity. Hence, relying only on qualitative results, we will construct a bifurcation diagram depicting the possible vegetation types along the rainfall vs. fire frequency gradients. Last but not least, in order to render our approach easy-to-use, we have developed a R-Shiny application to build the previous bifurcation diagram taking into account all the model parameters that can be changed easily according to the reader's wish.

\par

This paper is organized as follows. Section \ref{section2}  presents the ODE model. Section \ref{longtermbehavior} gives
the main theoretical results. Section \ref{AS} presents parameter ranges as well as results of the sensitivity analysis of the ODE model.   In section
\ref{section4}, the R-Shiny application is presented,  bifurcation diagrams in the rainfall-fire frequency space are given and numerical simulations are also provided to illustrate vegetation shifts in relation to rainfall and fire drivers as basis for discussions in section \ref{discussion}. Finally, in section \ref{conclusion}, we summarize the main results of this paper and how they can be improved or extended.

\section{The minimalistic ODE model formulation}\label{section2}

Our model features two coupled ordinary differential equations (eq. \eqref{swv_eq1} below) expressing the dynamics of tree and grass biomasses. Each equation entails a term of logistic growth (with parameters depending on MAP, section \ref{croissance-water}) and terms of biomass suppression by external agents (e.g. grazers or browsers) and fire. Coupling of the equations occurs because fire intensity experienced by woody biomass is a non-linear increasing function of grass biomass (see section \ref{fire-section}), while the grass biomass dynamics is asymmetrically influenced by woody biomass (see section \ref{assymmetrie}). The model presented here is built on a previous ODE framework that models fire-induced mortality on woody biomass by mean of two independent non-linear functions, namely $\omega$ (see \eqref{omega_fction}) and
$\vartheta$ (see \citet{Tchuinte2017}, \citet{Yatat2018}). The present contribution improves it by allowing both facilitative and competitive effects of trees on grasses. We thus take into account the fire-mediated negative feedback of grasses onto trees and the negative (in the case of competition) or positive (in the case of facilitation) feedback of grown-up trees on grasses. 

\subsection{Grass and tree biomass growths along the rainfall gradient}\label{croissance-water}
\subsubsection{Annual growths}

We assume that the annual productions of grasses and trees are non-linear and
saturating functions of MAP. Following
\citet{vandeKoppel1997}, \citet{Higgins2010stability} and
\citet{Nes2014tipping}, a Monod equation is judged adequate
to describe how limiting water resource modulates the maximal growth of both life forms (e.g.,  \citet{Whittaker1975},
see also \citet[Figure 4.6.3, page 191]{PenningDjiteye1982}).  We
assume that
$\displaystyle\frac{\gamma_{G}\textbf{W}}{b_{G}+\textbf{W}}$ and
$\displaystyle\frac{\gamma_{T}\textbf{W}}{b_{T}+\textbf{W}}$ are
annual biomass productions of grass and trees respectively, where
$\gamma_{G}$  and $\gamma_{T}$ (in yr$^{-1}$) express maximal
growths of grass and tree biomasses respectively. Half saturations
$b_{G}$ and $b_{T}$ (in mm.yr$^{-1}$) determine how quickly growth
increases with water availability.

\par

\citet{Accatino2010tree}
considered that vegetation growths are linear functions of soil
moisture, however, the nonlinear relationship between soil-water and
biomass production is widely observed in the field
(\citet{Mordelet1993influence}, \citet{Yatat2018} and references therein) as soon as the most favourable part of the rainfall gradient is taken into account.

\subsubsection{Carrying capacities}

We further assume that carrying capacities  of grass $K_{G}(\textbf{W})$ and
tree $K_{T}(\textbf{W})$ are  increasing and bounded functions of
water availability $\textbf{W}$. There are empirical field data sets (e.g.
\citet{UNESCO1981}, \citet{Sankaran2005determinants} and references therein) which expressed how maximum standing tree biomass increases with rainfall. Some more studies have
dealt with tree cover in relation to MAP at a continental or regional scale (see e.g.,  \citet{BuciniHanan2007}
and Figure 2 (a) in \citet{Favier2012abrupt} that observed increasing and saturating curves).  
To determine $K_T$, we combined field plot data reported in
\citet{Higgins2010stability} for the savanna side and \citet{Lewis2013AGBspatial} for the forest side (see also Figure \ref{swv_fig1}). To fit the data, we used the following function
$K_{T}(\textbf{W})=\dfrac{c_T}{1+d_{T}e^{-a_{T}\textbf{W}}}$, where
$c_T$ (in t.ha$^{-1}$) stands for the maximum value of the tree biomass
carrying capacity, $a_{T}$ (mm$^{-1}$yr) controls the steepness of
the curve, and $d_{T}$ controls the location of the inflection
point.  
We used the nonlinear quantile
regression (\citet{KoenkerPark1996}), as implemented in the ``quantreg" library of the R software \cite{R}. According to the 0.75$^{th}$ quantile regression (Figure \ref{swv_fig1} left, blue curve), we found $c_T=498.6$ t.ha$^{-1}$, $d_{T}=106.7$, and $a_{T}=0.0045$ mm$^{-1}$yr.

Concerning the  grass biomass standing crop, $K_G$, we used
empirical field data from  \citet{Braun1972a, Braun1972b},   \citet{MenautCesar1979} and
\citet{Abbadie2006lamto}.  We consider  the following function:
$K_{G}(\textbf{W})=\dfrac{c_G}{1+d_{G}e^{-a_{G}\textbf{W}}}$, where
$c_G$ (in t.ha$^{-1}$) denotes the maximum value of the grass
biomass carrying capacity,  $a_{G}$ (mm$^{-1}$yr) controls the
steepness of the curve, and $d_{G}$ controls the location of
the inflection point. We reached the following values: $c_G=17.06$
t.ha$^{-1}$, $d_{G}=14.73$, and $a_{G}=0.0029$ mm$^{-1}$yr for the 0.75$^{th}$ quantile regression (Figure \ref{swv_fig1} right, blue curve).

\begin{figure}[H]
	\centering
	\subfloat[][]{\includegraphics[scale=0.5]{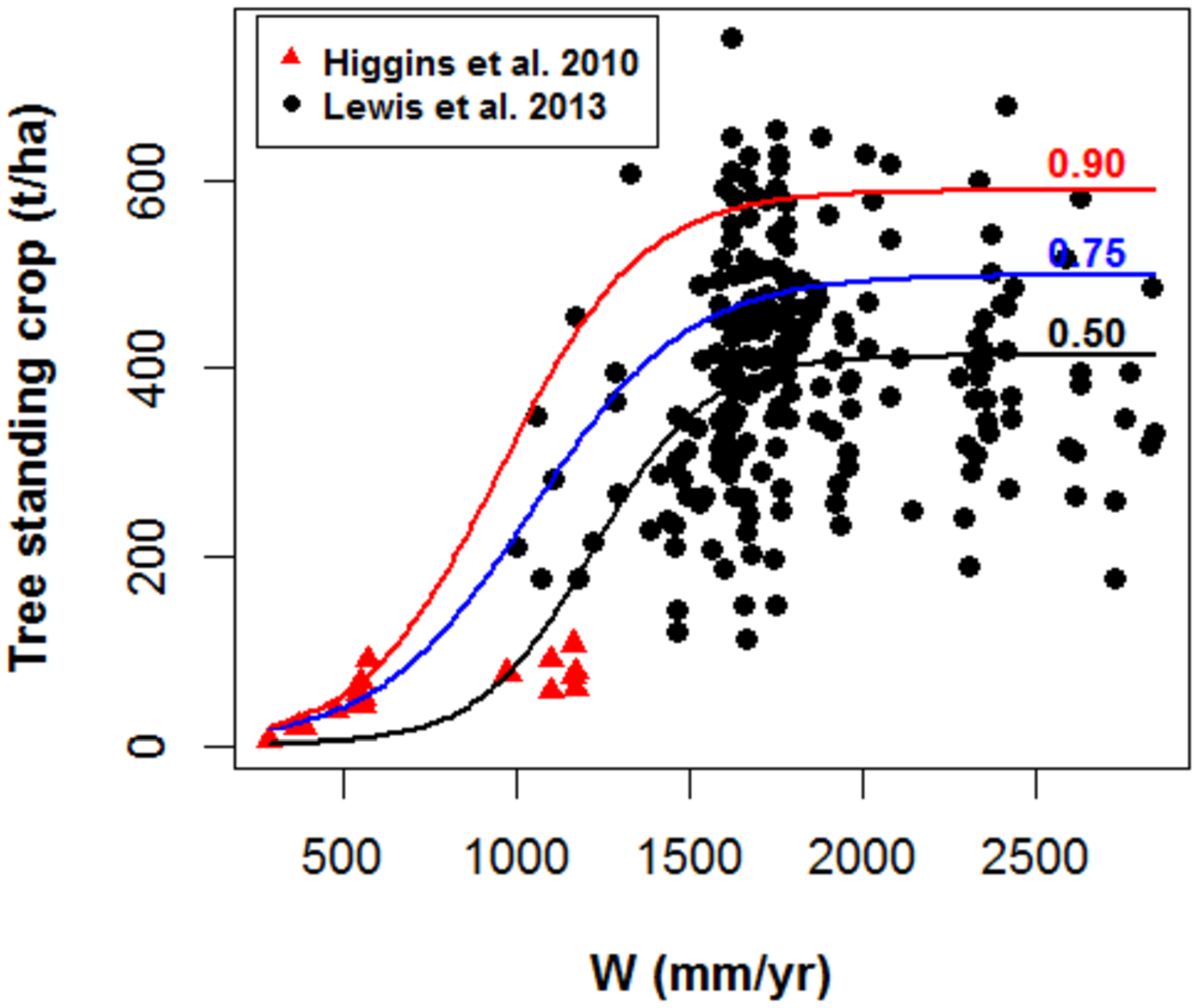}}
	\subfloat[][]{\includegraphics[scale=0.5]{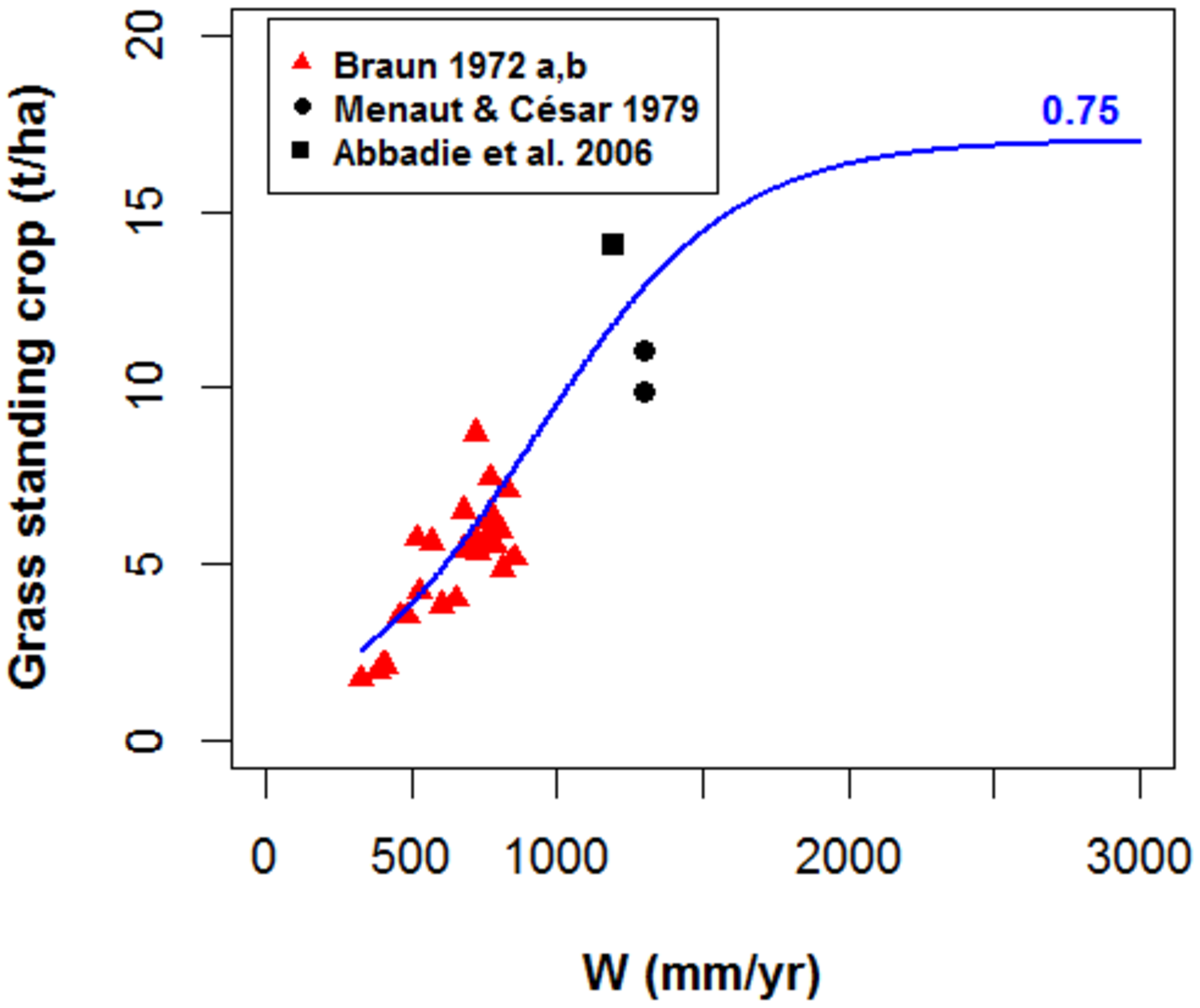}}
	\caption{{\scriptsize (a) Maximum standing tree biomass $K_{T}$  versus Mean Annual Rainfall.
			Data are drawn from figures in \citet{Higgins2010stability} and  \citet{Lewis2013AGBspatial}. Solid blue, red and black curves represent the quantile regression fits for 0.75$^{th}$, 0.9$^{th}$ and 0.5$^{th}$ quantiles, respectively. (b)
			Maximum grass biomass (standing crop) $K_{G}$ versus rainfall. Data are
			from  \citet{MenautCesar1979}, \citet{Braun1972a, Braun1972b} and  \citet{Abbadie2006lamto}.
			}}
	\label{swv_fig1}
\end{figure}

\subsection{Asymmetric tree-grass interactions}\label{assymmetrie}

Several studies, located
under different rainfall regimes, compared grass production under and outside a tree crown. The synthesis by Mordelet \& Le Roux (see \citet[page 156]{Abbadie2006lamto}) concluded that the relative production (within to outside crown) is a decreasing function of rainfall. This means that the impact of tree biomass on grass biomass ranges from possible facilitation, in arid and semi-arid parts of the rainfall gradient, to competition in the humid part with the tipping point located around a mean annual rainfall of \textit{ca.} 600 mm.yr$^{-1}$. However, despite 
empirical evidence possible facilitation has never been integrated in published tree-grass interactions models, even in those claiming genericity with respect to geographical location (see the review of \citet{Yatat2018}). In this contribution, we assume for the effect of tree biomass on grass biomass, a non-linear function of the mean annual rainfall, $\textbf{W}$ (in mm.yr$^{-1}$) named $\eta_{TG}(\textbf{W})$ (in (t.yr)$^{-1}$), that can take either negative values, meaning facilitation or positive values for competition. More specifically, 
\begin{equation}
   \eta_{TG}(\textbf{W})=a\times\tanh\left(\dfrac{\textbf{W}-b}{c}\right)+d
   \label{etaTG}
\end{equation}

where $b$ (in mm.yr$^{-1}$) controls the location of the inflection point, $c$ (in mm.yr$^{-1}$) controls the steepness of the curve.  The parameter $a$ (resp. $d$) (in (t.yr)$^{-1}$) shapes the minimal  facilitation (resp. maximal competition) level.  After re-interpretation of \citet[page 156]{Abbadie2006lamto}, \citet{Yatat2016} found $-0.0412$ as the minimal facilitation value for $\eta_{TG}$ and, $0.0913$ for the maximal competition value. 

\subsection{Grass biomass, fire intensity and fire-induced mortality}\label{fire-section}

\subsubsection{Fire intensity}
In savanna ecology it is overwhelmingly admitted that dried-up grass biomass is the main factor controlling both fire intensity and spreading capacity. Since our model is non-spatial, we combined these two properties of fire in a single, increasing function of grass-biomass (actually 'fire momentum', though we termed it 'fire intensity' for simplicity), expressing that whence average herbaceous biomass is in its highest range, fires both display the highest intensity and affect all the landscape. Conversely, low grass biomass due to aridity, grazing or tree competition, will make fires of low intensity and/or unable to reach all locations in a given year thereby decreasing the actual average frequency. We thus assume that the fire intensity noted $\omega$ is an increasing and bounded function (in [0,1]) of the grass biomass given as follows:
    \begin{equation}
    \omega(G)=\dfrac{G^{2}}{G^{2}+\alpha^{2}},
    \label{omega_fction}
    \end{equation}
where, $G$ (in t.ha$^{-1}$) is the grass biomass,
$\alpha$ (in t.ha$^{-1}$) is the value taken by $G$ when fire intensity is half its
maximum. Reader is also referred to \citet[section IV-B]{Yatat2018} for a detailed discussion about possible shapes of $\omega(G)$.

\subsubsection{Fire-induced woody biomass mortality}
For a given level of $\omega(G)$,  fire-induced tree/shrub mortality, noted $\vartheta$ is assumed to be a decreasing,
 non-linear function of tree biomass. Indeed, fires affect differently large and small trees since fires with high
 intensity (flame length $> ca.$ 2m) cause greater mortality of shrubs and topkill of
 trees while fires of lower intensity  (flame length $< ca.$ 2m)  topkill only
 shrubs and subshrubs (\citet{Yatat2018} and references therein). It
 is evident that tree biomass and total height are linked by increasing
 relationships. Therefore, we expressed $\vartheta$ as follows (\citet{Tchuinte2017}):
    \begin{equation}
    \vartheta(T)=\lambda_{fT}^{min} + (\lambda_{fT}^{max}-\lambda_{fT}^{min})e^{-pT},
    \label{theta_fction}
    \end{equation}
where, $T$ (t.ha$^{-1}$) stands for tree
biomass, $\lambda_{fT}^{min}$ (in yr$^{-1}$) is minimal lost portion of
tree biomass due to fire in configurations with a very large tree biomass,
$\lambda_{fT}^{max}$ (in yr$^{-1}$) is maximal loss of tree/shrub
biomass due to fire in open vegetation (e.g. for an isolated woody
individual having its crown within the flame zone), $p$ (in
t$^{-1}$) is proportional to the inverse of biomass suffering an
intermediate level of mortality.

\subsubsection{Fire-induced grass biomass mortality}

Fire-induced grass mortality is assumed to explicitly depend on the mean annual precipitation, noted $\textbf{W}$, because in arid and semi-arid locations, grass growth is low or very low due to insufficient rainfall and there is generally no continuous grass layer. Consequently, even if a fire occurs, it can not propagate and its impact on grass layer is therefore very limited. Conversely, in the humid part of the rainfall gradient, the fire-induced grass mortality is more important because grass layer is continuous and fire propagates easily. We express the fire-induced grass mortality as follows
\begin{equation}
\lambda_{fG}(\textbf{W})=\lambda_{fG}^{min} + (\lambda_{fG}^{max}-\lambda_{fG}^{min})\dfrac{\textbf{W}^z}{\textbf{W}^z+S^z}.
\label{lambda_fction}
\end{equation}
The parameter $z$ controls the shape for the function  $\lambda_{fG}(\textbf{W})$ while the value of $S$ (in mm.yr$^{-1}$) corresponds to the tipping point that separates low values to high values of the function $\lambda_{fG}(\textbf{W})$ along the mean annual rainfall gradient. $\lambda_{fG}^{min}$ and $\lambda_{fG}^{max}$ control the bounds of $\lambda_{fG}(\textbf{W})$.

\subsection{Full system}

Our resulting
minimalistic model is given by the set of
nonlinear ODE
(\ref{swv_eq1}). 

\begin{equation}
\left\{
\begin{array}{l}
\displaystyle \frac{dG}{dt}=\displaystyle\frac{\gamma_{G}\textbf{W}}{b_{G}+\textbf{W}}G\left(1-\displaystyle\frac{G}{K_{G}(\textbf{W})}\right)-\delta_{G}G-\eta_{TG}(\textbf{W})TG -\lambda_{fG}(\textbf{W})fG,\\
\\
\displaystyle\frac{dT}{dt}=\displaystyle\frac{\gamma_{T}\textbf{W}}{b_{T}+\textbf{W}}T\left(1-\displaystyle\frac{T}{K_{T}(\textbf{W})}\right)-\delta_{T}T-f\vartheta(T)\omega(G)T,\\
\\
G(0)=G_{0}, T(0)=T_{0},
\end{array}
\right.
\label{swv_eq1}
\end{equation}

where, $G$ and $T$  (in t.ha$^{-1}$) stand for grass and tree
biomasses respectively; $\delta_{G}$ and $\delta_{T}$
express, respectively, the rates of grass and tree biomasses loss by
herbivores (termites, grazing and/or browsing) or by human action. In our modelling, 
the $f$ (in yr$^{-1}$) parameter is taken as
constant multiplier of $\omega(G)$, and we interpret it
as a man-induced ``targeted" fire frequency (as for
instance in a fire management plan), which will not automatically
translate into actual frequency of fires of notable
intensity (because of $\omega(G)$). With this interpretation, the actual fire
regime may substantially differ from the targeted
one, as frequently observed in the field  (see for instance \citet{Diouf2012} in southern Niger). We therefore distinguish fire frequency from fire intensity because grass biomass controls fire spread  (see e.g. \citet{Govender2006}, \citet{McNaughton1992}, \citet{Yatat2018} and references therein).

\section{Long-term behavior of system \eqref{swv_eq1}: main results of the qualitative analysis}\label{longtermbehavior}

Our approach has kept the model amenable to a complete qualitative analysis of equilibria and stability thereof, as developed in the appendices.  
Equilibria embodying the long-term behavior of system (\ref{swv_eq1}) are summarized in Tables \ref{swv_tab_2} and \ref{swv_tab_2bis} in the case of competitive and facilitative influences of trees on grasses, respectively. Tables \ref{swv_tab_2}-\ref{swv_tab_2bis} result from the theoretical analysis of system (\ref{swv_eq1}) provided in \ref{section3}. For reader convenience, we recall in the following some key findings from the appendices. Set the following functions and thresholds:

\begin{equation} \left\{
\begin{array}{l}
g_{G}(\textbf{W})=\displaystyle\frac{\gamma_{G}\textbf{W}}{b_{G}+\textbf{W}},\\
g_{T}(\textbf{W})=\displaystyle\frac{\gamma_{T}\textbf{W}}{b_{T}+\textbf{W}},
\end{array}
\right.
\label{swv_growths}
\end{equation}

\begin{equation}
\left\{
\begin{array}{l}
\mathcal{R}^{1}_{\textbf{W}}=\dfrac{g_{T}(\textbf{W})}{\delta_{T}},\\
\mathcal{R}^{2}_{\textbf{W}}=\dfrac{g_{G}(\textbf{W})}{\delta_{G}+\lambda_{fG}(\textbf{W})f}.
\end{array}
\right.
\label{swv_R0_desert}
\end{equation}

Irrespective of the effect of trees on grasses (i.e. facilitation or competition), system (\ref{swv_eq1}) always has the following trivial equilibria:

\begin{itemize}
	\item a bare soil equilibrium, i.e. desert, $\textbf{E}_{0}=(0,0)'$.
	\item a forest equilibrium $\textbf{E}_{F}=(0,T^{*})'$ which exists when $\mathcal{R}^{1}_{\textbf{W}}>1$.
	\item a grassland equilibrium $\textbf{E}_{G}=(G^{*},0)'$ which exists when $\mathcal{R}^{2}_{\textbf{W}}>1$,
\end{itemize}
with the following notation:
\begin{equation}
\left\{
\begin{array}{l}
T^{*}=K_{T}(\textbf{W})\left(1-\dfrac{1}{\mathcal{R}^{1}_{\textbf{W}}}\right),\\
G^{*}=K_{G}(\textbf{W})\left(1-\dfrac{1}{\mathcal{R}^{2}_{\textbf{W}}}\right).
\end{array}
\right.
\label{swv_T_G}
\end{equation}

The novelty in this paper is considering both possible competitive ($\eta_{TG}(\textbf{W})>0$) and facilitative ($\eta_{TG}(\textbf{W})<0$) influences of trees on grasses and carrying out the qualitative analysis for both cases (see Tables \ref{swv_tab_2}-\ref{swv_tab_2bis}, Proposition \ref{proposition-competition}, \ref{section3}) 
that shows that this induces a variety of behaviors for system (\ref{swv_eq1}). 
Precisely, qualitative analyses allow us to efficiently explore all parts of the parameter space by relying on well-defined thresholds that delineate all outcomes of our model. Notably, we show that contrary to the competition case that only admits monostability or multi-stability of equilibria, the facilitation case additionally admits periodic solutions in time (limit cycle, Theorem \ref{Poicare-bendixson} in \ref{section3}). We will not further elaborate this theoretical result in the main text since we did not observe it for the ranges of parameters we investigated. 

 A savanna equilibrium $\textbf{E}_{S}=(G_{*},T_{*})'$ of system (\ref{swv_eq1}) features coexistence of both trees and grasses, and satisfies
\begin{equation}
\left\{
\begin{array}{lcl}
g_{G}(\textbf{W})\left(1-\displaystyle\frac{G_{*}}{K_{G}(\textbf{W})}\right)-(\delta_{G}+\lambda_{fG}(\textbf{W})f)-\eta_{TG}(\textbf{W})T_{*}=0,\\
\\
g_{T}(\textbf{W})\left(1-\displaystyle\frac{T_{*}}{K_{T}(\textbf{W})}\right)-\delta_{T}-f\vartheta(T_{*})\omega(G_{*})=0.\\
\end{array}
\right.
\label{app_eq1-bis100}
\end{equation}

We first consider the case of competition of trees on grasses and then the case of facilitation. Hence, Proposition \ref{proposition-competition} 
holds true on the basis of Theorem \ref{al_thm1} in \ref{al_AppendixA}, page \pageref{al_AppendixA}. 

\begin{propo}\label{proposition-competition}
\begin{enumerate}
    \item \textbf{Competition case}.  Assume that $\eta_{TG}(\textbf{W})>0$. Then system (\ref{swv_eq1}) may admit zero, one, two, three or four savanna equilibria.
    \item \textbf{Facilitation case}.  Assume that $\eta_{TG}(\textbf{W})<0$. Then system (\ref{swv_eq1}) may admit zero, one, two, three, four or five savanna equilibria.
    \item \textbf{Neutral case}. Assume that $\eta_{TG}(\textbf{W})=0$. Then system (\ref{swv_eq1}) may admit zero, one or two savanna equilibria.
\end{enumerate}
\end{propo}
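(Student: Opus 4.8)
The plan is to reduce the coexistence system \eqref{app_eq1-bis100} to a single scalar equation in one unknown and to count its admissible roots, the delicate counting being exactly what Theorem \ref{al_thm1} supplies; the three items then follow by specializing to the sign of $\eta_{TG}(\textbf{W})$ in \eqref{etaTG}. First I would exploit the fact that the first equation of \eqref{app_eq1-bis100} is affine in both $G_*$ and $T_*$. When $\eta_{TG}(\textbf{W})\neq 0$ it determines $G_*$ as a genuine affine function of $T_*$,
\[
G_*=\chi(T_*):=G^*-\frac{K_G(\textbf{W})\,\eta_{TG}(\textbf{W})}{g_G(\textbf{W})}\,T_*,
\]
with $G^*$ as in \eqref{swv_T_G}; note that $\chi$ is decreasing in the competition case and increasing in the facilitation case, which is where the two regimes will eventually part ways.

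Next I would substitute $G_*=\chi(T_*)$ into the second equation of \eqref{app_eq1-bis100} and clear the denominator $\chi(T_*)^2+\alpha^2$ arising from $\omega$ (see \eqref{omega_fction}). Inserting \eqref{theta_fction}, this produces an equation of the form $P(T_*)-Q(T_*)e^{-pT_*}=0$, where $P$ is a polynomial of degree three and $Q$ a polynomial of degree two. Since $\chi(T_*)^2+\alpha^2>0$ the clearing is an equivalence, so the savanna equilibria are exactly the roots with $T_*\in(0,T^*)$ and $\chi(T_*)>0$ (guaranteeing $G_*,T_*>0$); the whole problem is thus to count roots of $P-Qe^{-pT}$ in this feasible window.

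The counting I would organize through the equivalent geometric condition $\omega(\chi(T_*))=\Omega(T_*)$, where $\Omega(T):=\big(g_T(\textbf{W})(1-T/K_T(\textbf{W}))-\delta_T\big)/\big(f\vartheta(T)\big)$. The map $T\mapsto\omega(\chi(T))$ is a ``valley'' that vanishes where $\chi(T)=0$ and is monotone on each side of that point, while a short computation shows that the numerator of $\Omega'$ is a constant plus $e^{-pT}$ times an affine function of $T$, hence vanishes at most twice, so $\Omega$ has at most two turning points on $(0,T^*)$. Restricting to the feasible window and tracking the sign of $\omega(\chi(\cdot))-\Omega$ then bounds the number of crossings, and this bound depends on the orientation of $\chi$, i.e. on the sign of $\eta_{TG}(\textbf{W})$, which is what separates the competition bound (four) from the facilitation bound (five); making this tally rigorous is precisely Theorem \ref{al_thm1} in \ref{al_AppendixA}, which I would invoke. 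The neutral case ($\eta_{TG}(\textbf{W})=0$) needs no appendix and I would dispatch it directly: the first equation then fixes $G_*=G^*$ (requiring $\mathcal{R}^2_{\textbf{W}}>1$), so $\omega(G^*)$ is constant and the second equation reads $\Phi(T_*):=A-BT_*-Ce^{-pT_*}=0$ with $B,C>0$; since $\Phi'(T)=-B+Cp\,e^{-pT}$ is strictly monotone it vanishes at most once, so $\Phi$ has at most two zeros, giving at most two savanna equilibria.

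Finally, to justify that every count in the stated ranges is actually attained I would exhibit, for each case, parameter choices realizing the maximal number of feasible roots and obtain the intermediate counts by continuity as a parameter crosses the associated saddle-node bifurcations (the R-Shiny exploration supports such configurations). The hard part will be the sharp upper bounds in items~1--2: a naive Rolle-type estimate applied to $P e^{pT}-Q$ only yields six roots, so the crux is to use the valley shape of $\omega(\chi(\cdot))$, the control on the turning points of $\Omega$, and the feasibility window to bring this down to four and five, and in particular to see correctly why the increasing line of the facilitation case admits exactly one more admissible crossing than the decreasing line of the competition case.
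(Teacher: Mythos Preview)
Your overall plan coincides with the paper's: Proposition~\ref{proposition-competition} is explicitly stated there as a corollary of Theorem~\ref{al_thm1}, so reducing \eqref{app_eq1-bis100} to a single scalar equation and then quoting the detailed root count in \ref{al_AppendixA} is exactly the intended route. Your treatment of the neutral case is essentially identical to the paper's (they write $J(T)=T^*-T-u-ve^{-pT}$ and observe that $J'$ is strictly monotone, giving at most two zeros).

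Where you diverge from the paper is in the elimination and in the mechanism of the count. The paper solves the \emph{first} equation for $T_*$ as an affine function of $G_*$ (your $\chi^{-1}$), substitutes into the second, and clears $\omega$ to obtain
\[
H(G_*)=\mathcal{C}G_*^{3}-\lambda G_*^{2}e^{\alpha_0 G_*}+(\mathcal{A}-\mathcal{B}-\mathcal{D})G_*^{2}+\mathcal{C}\alpha^{2}G_*+(\mathcal{A}-\mathcal{B})\alpha^{2}=0,
\]
with $\alpha_0=p\,g_G/(\eta_{TG}K_G)$. The sharp bounds do \emph{not} come from a curve--intersection picture but from differentiating $H$ four times: the exponential part is $G_*^{2}e^{\alpha_0 G_*}$, so $H^{(4)}$ equals $e^{\alpha_0 G_*}$ times a quadratic in $G_*$ whose sign is definite on $(0,\infty)$ when $\alpha_0>0$ (competition) and changes sign exactly once when $\alpha_0<0$ (facilitation). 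Working back via Rolle/IVT then gives at most $4$ and $5$ positive roots, respectively; the tables in \ref{al_AppendixA} record the sub-cases.

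Your $T_*$--parameterization is related to the paper's by the affine change $G_*=\chi(T_*)$, so your $F=P-Qe^{-pT}$ is literally $H\circ\chi$, and the same fourth-derivative argument transfers: indeed $Q(T)\propto\chi(T)^2$, whence $F^{(4)}(T)=-e^{-pT}$ times a quadratic in $\chi(T)$ with the same sign pattern as above on the feasible window. That, rather than the ``valley of $\omega(\chi)$ versus $\Omega$ with at most two turning points'' heuristic, is what closes the gap from your naive Rolle bound of six down to four/five; your intersection count as stated does not obviously yield the sharp numbers (two monotone pieces of $\omega(\chi)$ against a function with two turning points does not by itself bound crossings by four). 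Since you already plan to invoke Theorem~\ref{al_thm1}, this is not a correctness issue, but it is worth knowing that the engine in \ref{al_AppendixA} is the repeated-differentiation count of $H$, not a geometric tally of $\omega(\chi)$ against $\Omega$.
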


 We also set
\begin{equation}
\mathcal{Q}_{F}=\dfrac{g_{G}(\textbf{W})-\eta_{TG}(\textbf{W})T^{*}}{\delta_{G}+\lambda_{fG}(\textbf{W})f},\hspace{0.35cm}
\mathcal{R}_{F}=\dfrac{g_{G}(\textbf{W})}{\eta_{TG}(\textbf{W})T^{*}+\delta_{G}+\lambda_{fG}(\textbf{W})f}
\hspace{0.35cm}\mbox{and}\hspace{0.35cm}
\mathcal{R}_{G}=\dfrac{g_{T}(\textbf{W})}{\delta_{T}+\lambda_{fT}^{max}f\omega(G^{*})}.
\label{swv_thresholds_F_G}
\end{equation}

Below, we give an approximated interpretation of the aforementioned thresholds. The aim is to favor an intuitive ecological understanding of our theoretical results in Tables \ref{swv_tab_2}-\ref{swv_tab_2bis}.
\begin{itemize}
	\item[(i)] $\mathcal{R}^{1}_{\textbf{W}}=\dfrac{g_{T}(\textbf{W})}{\delta_{T}}$: reflects the primary production of tree biomass relative to tree biomass loss by herbivory (termites, browsing) or human action.
	\item[(ii)] $\mathcal{R}^{2}_{\textbf{W}}=\dfrac{g_{G}(\textbf{W})}{\delta_{G}+\lambda_{fG}(\textbf{W})f}$: represents the primary production of grass biomass relative to fire-induced biomass loss and additional loss due to herbivory (termites, grazing) or human action.
	\item[(iii)] $\mathcal{R}_{F}=\dfrac{g_{G}(\textbf{W})}{\eta_{TG}(\textbf{W})T^{*}+\delta_{G}+\lambda_{fG}(\textbf{W})f}$: denotes the primary production of grass biomass,
	relative to grass biomass loss induced by fire, herbivory (grazing) or human action and to additional grass suppression due to tree competition, at the close forest equilibrium. $\mathcal{R}_{F}$ is defined when $\eta_{TG}(\textbf{W})\geq0$. 
	\item[(iv)] $\mathcal{Q}_{F}=\dfrac{g_{G}(\textbf{W})-\eta_{TG}(\textbf{W})T^{*}}{\delta_{G}+\lambda_{fG}(\textbf{W})f}$: denotes the primary production of grass biomass and the additional grass production due to tree facilitation, at the close forest equilibrium, relative to fire-induced grass biomass loss and additional grass suppression due to herbivory (grazing) or human action. $\mathcal{Q}_{F}$ is considered when $\eta_{TG}(\textbf{W})\leq0$. The larger $\mathcal{R}_{F}$ or $\mathcal{Q}_{F}$, the higher the potential of grass, experiencing competition or facilitation, to maintain at a coexistence state characterized by $T^*$.  
	\item[(v)] $\mathcal{R}_{G}=\dfrac{g_{T}(\textbf{W})}{\delta_{T}+\lambda_{fT}^{max}f\omega(G^{*})}$: is the primary production of tree biomass relative  to fire-induced biomass loss at the grassland equilibrium and additional loss due to herbivory (browsing) or human action. The larger $\mathcal{R}_{G}$, the higher the potential of tree growth to compensate biomass losses at a coexistence state characterized by $G^*$. 
\end{itemize}
\label{thresholds_ecolo_meaning_1}

The long-term behavior of system (\ref{swv_eq1}), in the case of tree vs. grass competition, is entirely determined by the previous thresholds. It is summarized in Table \ref{swv_tab_2} where more than one savanna equilibrium could simultaneously exist and be stable (as per symbol `$\dagger$', at least one savanna equilibrium and at most four). Conditions for the existence of savanna equilibria, in the competition case, are summarized in Table \ref{swv_tab_1}. Thresholds $\mathcal{R}_{*}^{1}$, $\mathcal{R}_{*}^{2}$ and $\mathcal{Q}_{*}^{2}$, related to the asymptotic stability of savanna equilibria, when they exist, are defined in \eqref{swv_thresholds_S}, page \pageref{swv_thresholds_S}.

\begin{table}[H]
        \begin{center}
            \renewcommand{\arraystretch}{1.2}
            \begin{tabular}{lccccccc}
                \cline{1-8}
                \multicolumn{5}{c}{\bf Thresholds} &  \multirow{2}{1.3cm}{\bf Stable} & \multirow{2}{1.3cm}{\bf Unstable} & \multirow{2}{0.7cm}{\bf Case}\\
                \cline{1-5}
                $\mathcal{R}^{1}_{\textbf{W}}$ ($\mathcal{R}^{2}_{\textbf{W}}$) &  $\mathcal{R}_{G}$ &  $\mathcal{R}_{F}$ & $\mathcal{R}_{*}^{1}$& $\mathcal{R}_{*}^{2}$ & & &  \\
                \hline
                $\leq1(\leq1)$ & ND & ND & ND & ND &$ \textbf{E}_{0}$  &  & $\textbf{I}$ \\
                \hline
                \multirow{7}{2cm}{$>1(>1)$} & $>1$ & $\leq1$ & \multirow{3}{0.25cm}{--} & \multirow{3}{0.65cm}{$<1$}  & $\textbf{E}_{F}$ & $\textbf{E}_{0}$, $\textbf{E}_{G}$, $\textbf{E}_{S}$ & $\textbf{II}$\\
                \cline{2-3} \cline{6-8} & $\leq1$ & $>1$ &  &  & $\textbf{E}_{G}$ & $\textbf{E}_{0}$, $\textbf{E}_{F}$, $\textbf{E}_{S}$ & $\textbf{III}$\\
                \cline{2-3} \cline{6-8} & $\leq1$ & $\leq1$ &  &  & $\textbf{E}_{G}$, $\textbf{E}_{F}$  & $\textbf{E}_{0}$, $\textbf{E}_{S}$  & $\textbf{IV}$\\
                \cline{2-8}
                &  $>1$ &$\leq1$ & \multirow{4}{0.65cm}{$<1$} & \multirow{4}{0.65cm}{$>1$} & $\textbf{E}_{F}$, $\textbf{E}_{S}$ & $\textbf{E}_{0}$, $\textbf{E}_{G}$ & $\textbf{V}^\dagger$\\
                \cline{2-3} \cline{6-8}
                &  $\leq1$ &$>1$ &  &  & $\textbf{E}_{G}$, $\textbf{E}_{S}$ & $\textbf{E}_{0}$, $\textbf{E}_{F}$ & $\textbf{VI}^\dagger$\\
                \cline{2-3} \cline{6-8}
                &  $>1$ &$>1$ &  &  &  $\textbf{E}_{S}$ & $\textbf{E}_{0}$, $\textbf{E}_{G}$, $\textbf{E}_{F}$ & $\textbf{VII}^\dagger$\\
                \cline{2-3} \cline{6-8}
                &  $\leq1$ &$\leq1$ &  &  &  $\textbf{E}_{F}$, $\textbf{E}_{G}$, $\textbf{E}_{S}$ & $\textbf{E}_{0}$  & $\textbf{VIII}^\dagger$\\
                \hline
            \end{tabular}
            \end{center}
\caption{Long-term dynamics of system
(\ref{swv_eq1}) when $\eta_{TG}(\textbf{W})\geq0$ (i.e. competition). `ND' stands for ``Not Defined" threshold. `$\dagger$' means that more than one savanna equilibrium (i.e. $\textbf{E}_{S}$) could be simultaneously stable. Precisely, at least one savanna equilibrium and at most four savanna equilibria could be simultaneously stable. 
}
\label{swv_tab_2}
\end{table}

Table \ref{swv_tab_2bis} summarizes the long-term behavior of system (\ref{swv_eq1}) in the case of tree vs. grass facilitation with possible existence of more than one savanna equilibrium. Precisely, at least one savanna equilibrium and at most five savanna equilibria could be simultaneously stable. See Table \ref{swv_tab_1bis1} for savanna equilibria existence conditions.

\begin{table}[H]
        \begin{center}
            \renewcommand{\arraystretch}{1.2}
            \begin{tabular}{lccccccc}
                \cline{1-8}
                \multicolumn{5}{c}{\bf Thresholds} &  \multirow{2}{1.3cm}{\bf Stable} & \multirow{2}{1.3cm}{\bf Unstable} & \multirow{2}{0.7cm}{\bf Case}\\
                \cline{1-5}
                $\mathcal{R}^{1}_{\textbf{W}}$ ($\mathcal{R}^{2}_{\textbf{W}}$) &  $\mathcal{R}_{G}$ &  $\mathcal{Q}_{F}$ & $\mathcal{R}_{*}^{1}$& $\mathcal{Q}_{*}^{2}$ & & &  \\
                \hline
                $\leq1(\leq1)$ & ND & ND & ND & ND &$ \textbf{E}_{0}$  &  & $\textbf{I}$ \\
                \hline
                \multirow{7}{2cm}{$>1(>1)$} & $>1$ & $\leq1$ & \multirow{3}{0.25cm}{--} & \multirow{3}{0.65cm}{$<1$}  & $\textbf{E}_{F}$ & $\textbf{E}_{0}$, $\textbf{E}_{G}$, $\textbf{E}_{S}$ & $\textbf{II}$\\
                \cline{2-3} \cline{6-8} & $\leq1$ & $>1$ &  &  & $\textbf{E}_{G}$ & $\textbf{E}_{0}$, $\textbf{E}_{F}$, $\textbf{E}_{S}$ & $\textbf{III}$\\
                \cline{2-3} \cline{6-8} & $\leq1$ & $\leq1$ &  &  & $\textbf{E}_{G}$, $\textbf{E}_{F}$  & $\textbf{E}_{0}$, $\textbf{E}_{S}$  & $\textbf{IV}$\\
                \cline{2-8}
                &  $>1$ &$\leq1$ & \multirow{4}{0.65cm}{$<1$} & \multirow{4}{0.65cm}{$>1$} & $\textbf{E}_{F}$, $\textbf{E}_{S}$ & $\textbf{E}_{0}$, $\textbf{E}_{G}$ & $\textbf{V}^\ddagger$\\
                \cline{2-3} \cline{6-8}
                &  $\leq1$ &$>1$ &  &  & $\textbf{E}_{G}$, $\textbf{E}_{S}$ & $\textbf{E}_{0}$, $\textbf{E}_{F}$ & $\textbf{VI}^\ddagger$\\
                \cline{2-3} \cline{6-8}
                &  $>1$ &$>1$ &  &  &  $\textbf{E}_{S}$ & $\textbf{E}_{0}$, $\textbf{E}_{G}$, $\textbf{E}_{F}$ & $\textbf{VII}^\ddagger$\\
                \cline{2-3} \cline{6-8}
                &  $\leq1$ &$\leq1$ &  &  &  $\textbf{E}_{F}$, $\textbf{E}_{G}$, $\textbf{E}_{S}$ & $\textbf{E}_{0}$  & $\textbf{VIII}^\ddagger$\\
                \cline{2-8}
                &  $>1$ &$>1$ & -- & $<1$ &  $\textbf{LC}$ & $\textbf{E}_{0}$,$\textbf{E}_{F}$, $\textbf{E}_{G}$, $\textbf{E}_{S}$  & $\textbf{IX}$\\
                \hline
            \end{tabular}
            \end{center}
\caption{Long-term dynamics of system
(\ref{swv_eq1}) when $\eta_{TG}(\textbf{W})<0$ (i.e. facilitation). The notation `$\ddagger$' means that more than one savanna equilibrium (i.e. $\textbf{E}_{S}$) could be simultaneously stable (at least one and at most five). `\textbf{LC}' stands for limit cycle that appears when all equilibria are unstable.}
\label{swv_tab_2bis}
\end{table}

\section{Parameter values and sensitivity analyses of model (\ref{swv_eq1})} \label{AS}

Interpretation of results from mathematical models of biological systems
is often complicated by the presence of uncertainties in
experimental data that are used to estimate parameter values
(\citet{Marino2008}). Moreover, some parameters are liable to vary in space, even in a given reference area. Sensitivity analysis (SA) is a method for measuring uncertainty in any type of complex model by identifying
critical inputs and quantifying how input uncertainty impacts model
outcomes. Different SA techniques exist (\citet{Marino2008} and
references therein). In this section we will perform partial rank
correlation coefficient (PRCC) and the extended Fourier amplitude
sensitivity test (eFAST) analysis in order to deal with both
cases of nonlinear but monotonic relationships between outputs and
inputs (i.e. PRCC) as well as nonlinear and non-monotonic trends
(eFAST). 

\par

The parameter ranges considered for this study are given in
Table \ref{intervalle-params}. Though the model aims to be qualitatively relevant for a large swath
of African situations, we particularly ground our choice of parameter values in a 
north-south gradient located at and around the $16$\textdegree E of longitude, and between $ca.$ $6$ and $10$\textdegree N of latitude (i.e., between $ca.$ $900$ to $1500$ mm.yr$^{-1}$ of MAP). This area goes from desert and the Sahel steppe in the north of lake Chad to the equatorial area in southern Cameroon and it spans the main vegetation physiognomies of Central Africa that include close canopy forest, grassland, savanna, forest-grassland and forest-savanna mosaics (see e.g. Figure \ref{forest_savanna_mosaic}).  Using longitude and latitude data, the MAP data were extracted from BIO12 (\url{http://www.worldclim.org/bioclim}, see also \citet{Hijmans2005}) using the ``raster" package of RStudio, version $1.1.383$ \cite{R}. 
 Retained parameter ranges originate from published literature (e.g. $f$: fire frequency, \textbf{W}: MAP), re-interpretations of empirical results (e.g. $\lambda_{fT}^{min}$: minimal lost portion of tree biomass due to fire in configurations with a very large tree biomass, $\lambda_{fT}^{max}$: maximal loss of tree/shrub biomass due to fire in open vegetation), expert-based knowledge (e.g. $\lambda_{fG}^{min}$: minimal fire-induced grass mortality, $\lambda_{fG}^{max}$: maximal fire-induced grass mortality) or by data fitting (e.g. $c_T$: maximum value of the tree biomass carrying capacity, $c_G$: maximum value of the grass biomass carrying capacity). It is to the best of our knowledge the first time that consistent responses curves (Figure \ref{swv_fig1}) are assessed from existing information all along the rainfall gradient. 

For the PRCC analysis (see Figure \ref{PRCC}), we used the PCC function (R software \cite{R}) and $1000$ bootstrap replicates, with a probability level of $0.95$ for (the bootstrap) confidence intervals. 
For the eFAST analysis (see Figure \ref{eFAST}), we used the FAST99 function (R software) with $7500$ runs. As expected, because of a large number of parameters (25), it took quite a long time.

eFast sensitivity analysis pointed towards the leading role of parameters relating to fire frequency, biomass growth $(\gamma_{G,T})$, biomass destruction $(\delta_{G,T})$. Logically, MAP (\textbf{W}) appears pervasive, especially for $T$. Maximal rate of grass suppression by fire is influential for both tree and grass biomass while maximal woody biomass suppression is not. For both variables, the $\alpha$ parameter, which is the critical grass biomass letting fire shift from low to high intensities (eq. \eqref{omega_fction}) appears of substantial influence (7th rank for both variables).

PRCC results provide some complementary insights.  Some parameters that tend to decrease grass biomass logically boost tree biomass and vice-versa, e.g. fire intensity, $\gamma_G$ vs. $\gamma_T$, $\delta_G$ vs. $\delta_T$. For both methods, MAP is of utmost importance for trees and fairly less for grass biomasses. Most of those parameters were already singled out by eFast but PRCC also underlined the roles of $p$ (tuning the decrease of fire impact with woody biomass, eq. \eqref{theta_fction}), $a_G$ and $\lambda_{fT}^{max}$. 
We may note that parameters related to equations (\ref{etaTG}) and (\ref{lambda_fction}) did not appear prominent in the sensitivity analysis, in spite of the important role that $\eta_{TG}$ (eq. (\ref{etaTG})) plays in the qualitative analysis.



\begin{table}[H]
{\footnotesize
\begin{center}
    \renewcommand{\arraystretch}{1.2}
\begin{tabular}{|c|c|c|c|c|}
  \hline
  Symbol & Unit & Baseline & Range & References \\
  \hline
  $c_T$ & t.ha$^{-1}$ & 430 & 423.8--523.4 & See text and Fig. \ref{swv_fig1}\\
  $a_T$ & yr$^{-1}$ & 0.004 & 0.0038--0.0054 & See text and Fig. \ref{swv_fig1}\\
  $d_T$ & -- & 107 & 78.26--167.34 & See text and Fig. \ref{swv_fig1}\\
  $c_G$ & t.ha$^{-1}$ & 20 & 12.3--21.82 & See text and Fig. \ref{swv_fig1}\\
  $a_G$ & yr$^{-1}$ & 0.0029 & 0.0023--0.0042 & See text and Fig. \ref{swv_fig1}\\
  $d_G$ & -- & 14.73 & 11.36--24.05 & See text and Fig. \ref{swv_fig1}\\
  $\gamma_T$ & yr$^{-1}$ & 1.5 & 1--3 & Estimated by revisiting \\
  &    &   &   & \citet{Stape2010}; \citet{Laclau2010}; \\
  & &   &   & \citet{Karmacharya1992}\\
  $b_T$ & mm.yr$^{-1}$ & 1100 & 900--1300 & \citet{Abbadie2006lamto} \\
  $\gamma_G$ & yr$^{-1}$ & 2.7 & 0.5--3.5 & \citet{Mordelet1995} \\
  $b_G$ & mm.yr$^{-1}$ & 500 & 400--650 & \citet{UNESCO1981} \\
  $\delta_T$ & yr$^{-1}$ & 0.1 & 0.015--0.3 & \citet{Hochberg1994influences};\\
  &   &   &   & \citet{Accatino2010tree} \\
  $\delta_G$ & yr$^{-1}$ & 0.1 & 0--0.6 & \citet{vanLangevelde2003} \\
  $\lambda_{fG}^{max}$ & -- & 0.4 & 0.2--0.7 & Expert-based value\\
  $\lambda_{fG}^{min}$ & -- & 0.005 & 0--0.1 & Expert-based value \\
  $S$ & mm.yr$^{-1}$ & 900 & 750-1100 & Expert-based value\\
  $z$ & -- & 8 & -- & Expert-based value\\
  $\lambda_{fT}^{min}$ & -- & 0.05 & 0--0.1 & Reinterpretation of\\ &   &   &   & \citet{Trollope2010};\\
    &   &   &   & see also \citet{Higgins2007topkill} \\
  $\lambda_{fT}^{max}$ & -- & 0.65 & 0.5--1 & Reinterpretation of\\
  &   &   &   &\citet{Trollope2010};\\
    &   &   &   & see also \citet{Higgins2007topkill} \\
  $p$ & t$^{-1}$ & 0.01 & 0.01--0.15 & Reinterpretation of\\ &   &   &   &\citet{Trollope2010} \\
  $\alpha$ & t.ha$^{-1}$ & 1 & 0.5--2.5 & \citet{Govender2006} \\
  $b$ & mm.yr$^{-1}$ & 600 & 500--700 &  Reinterpretation of\\
  &   &   &   & \citet{Mordelet1995}; \\
  &    &   &   & see also \citet{Abbadie2006lamto} \\
  $c$ & mm.yr$^{-1}$ & 120 & 75--150 & Assumed \\
  $a$ & (t.yr)$^{-1}$ & 0.01 & 0.001--0.01 & Reinterpretation of\\
  &   &   &   & \citet{Mordelet1995}; \\
  &    &   &   & see also \citet{Abbadie2006lamto} \\
  $d$ & (t.yr)$^{-1}$ & 0.0045 & 0.001--0.01 & Reinterpretation of\\
  &   &   &   & \citet{Mordelet1995}; \\
  &    &   &   & see also \citet{Abbadie2006lamto} \\
  $\textbf{W}$ & mm.yr$^{-1}$ & 1300 & 0--2000 & \citet{Menaut1991biomass}; \citet{Lewis2013AGBspatial} \\
  $f$ & yr$^{-1}$ & 1 & 0--2 & \citet{Higgins2010stability}; \citet{Accatino2010tree} \\
   \hline
\end{tabular}
    \caption{ {\small Parameter ranges. 
    }}\label{intervalle-params}
\end{center}}
\end{table}

\begin{figure}[H]
        \centering
        \subfloat[][\scriptsize LHS-PRCC sensitivity analysis when the reference output is the Grass biomass, $G$.]{\includegraphics[scale=0.75]{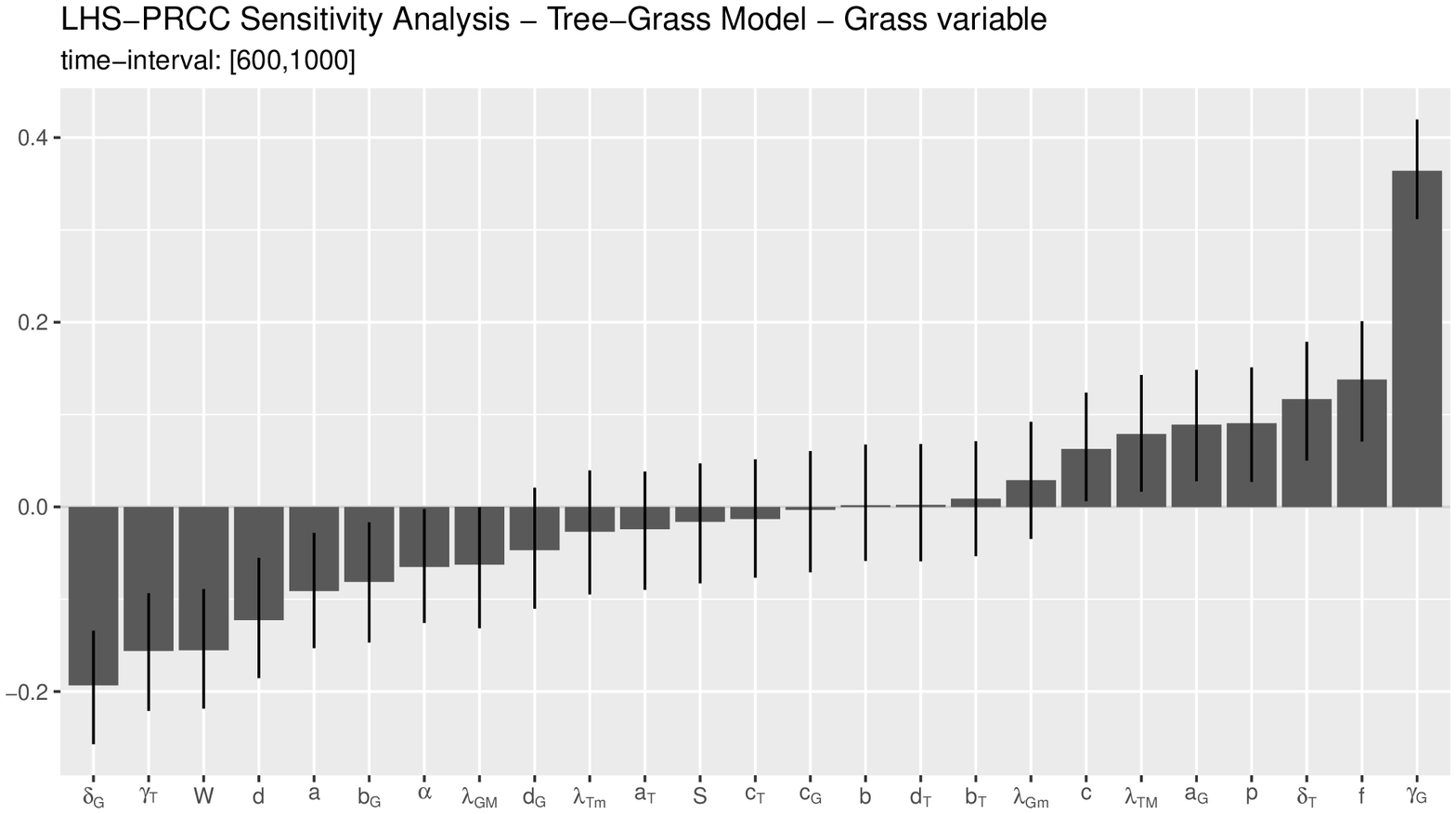}}\\
        \subfloat[][\scriptsize LHS-PRCC sensitivity analysis when the reference output is the Tree biomass, $T$.]{\includegraphics[scale=0.75]{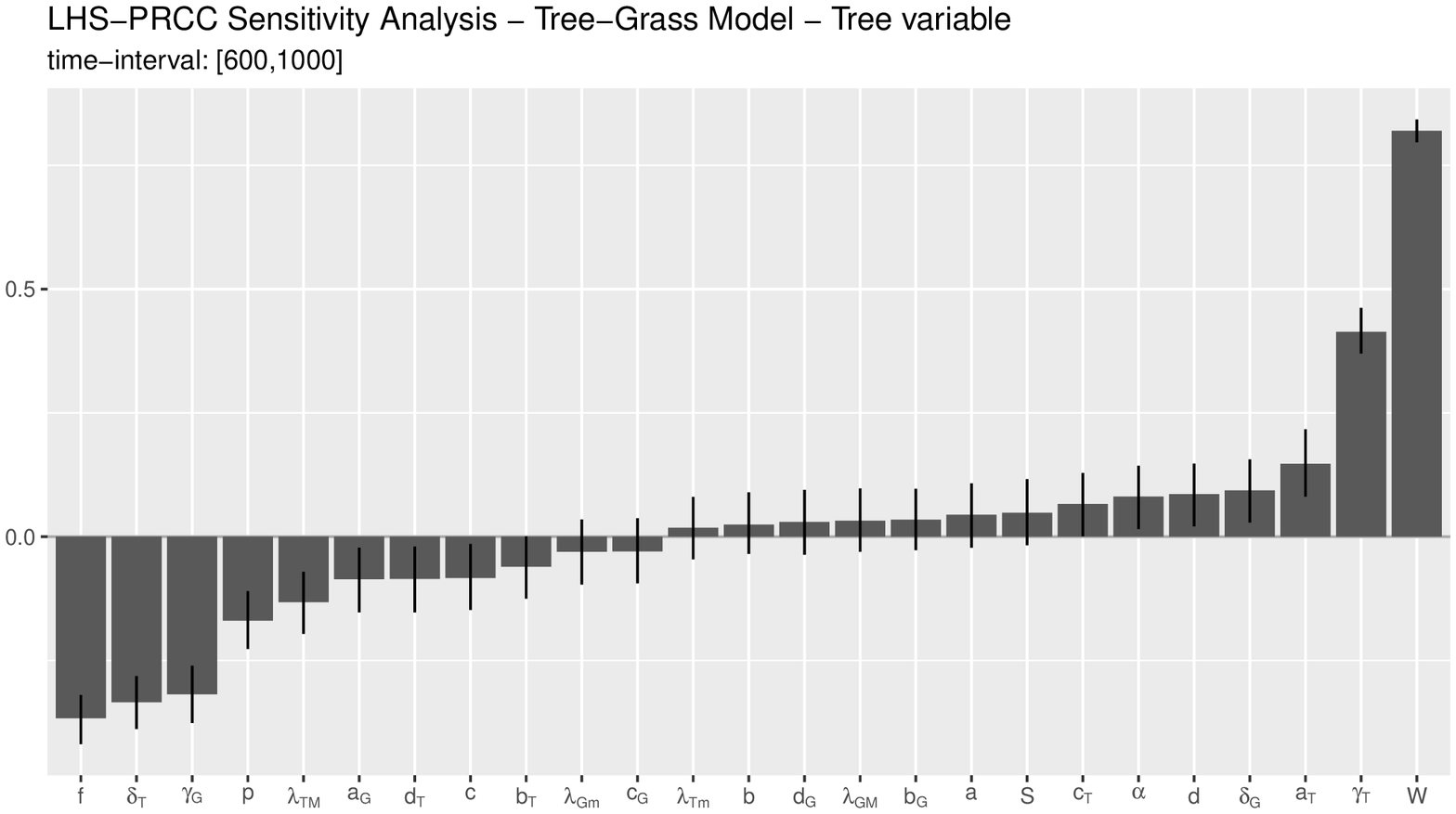}}
        \caption{{\scriptsize  LHS- PRCC Sensitivity Analysis. For simplicity, $\lambda_{GM}:= \lambda_{fG}^{max}$, $\lambda_{Gm}:= \lambda_{fG}^{min}$, $\lambda_{TM}:= \lambda_{fT}^{max}$ and $\lambda_{Tm}:= \lambda_{fT}^{min}$.}}
        \label{PRCC}
\end{figure}

\begin{figure}[H]
        \centering
        \subfloat[][eFAST sensitivity analysis when the reference output is the  Grass biomass, $G$]{\includegraphics[scale=0.75]{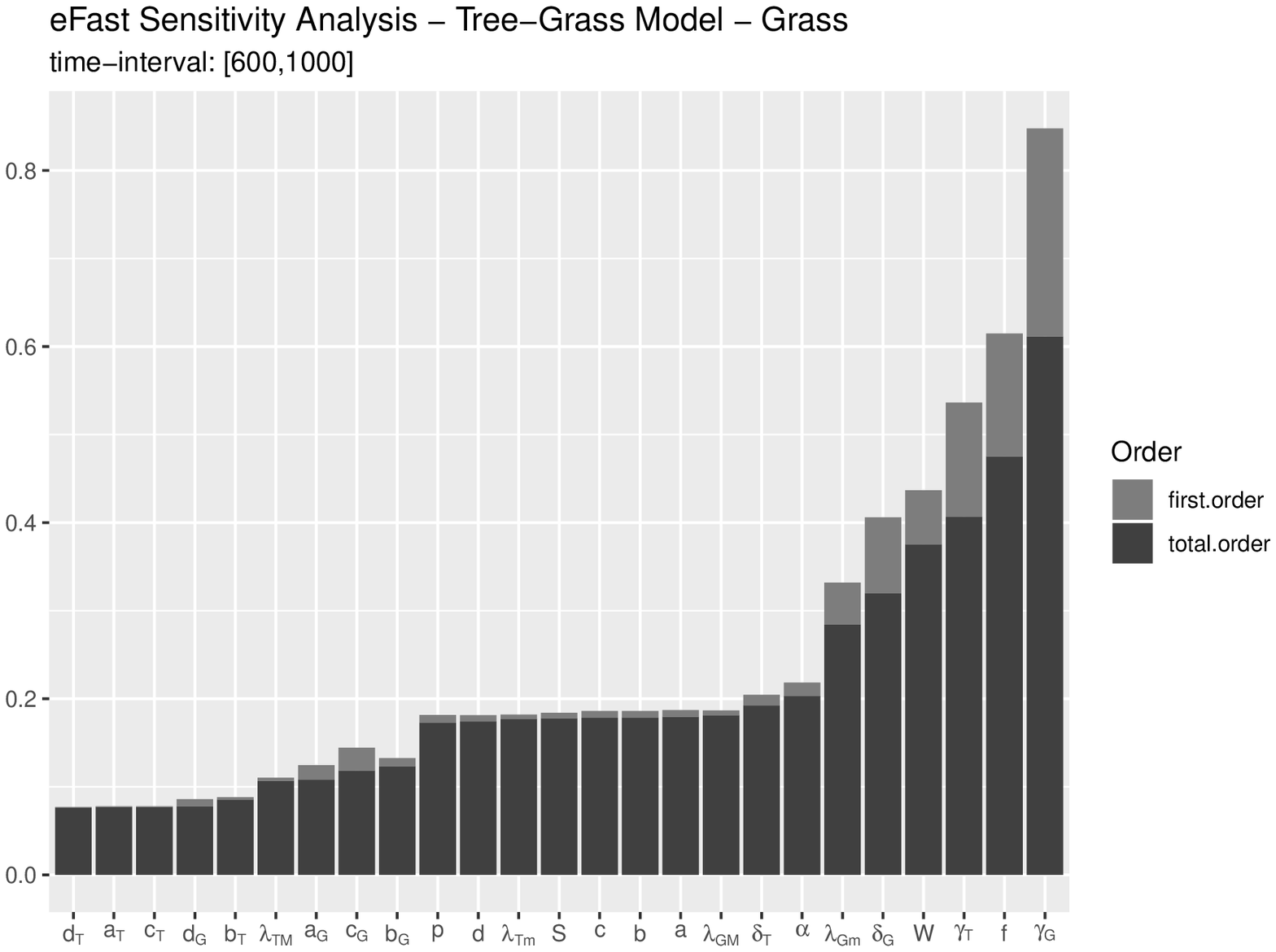}}\\
        \subfloat[][eFAST sensitivity analysis when the reference output is the Tree Biomass, $T$]{\includegraphics[scale=0.75]{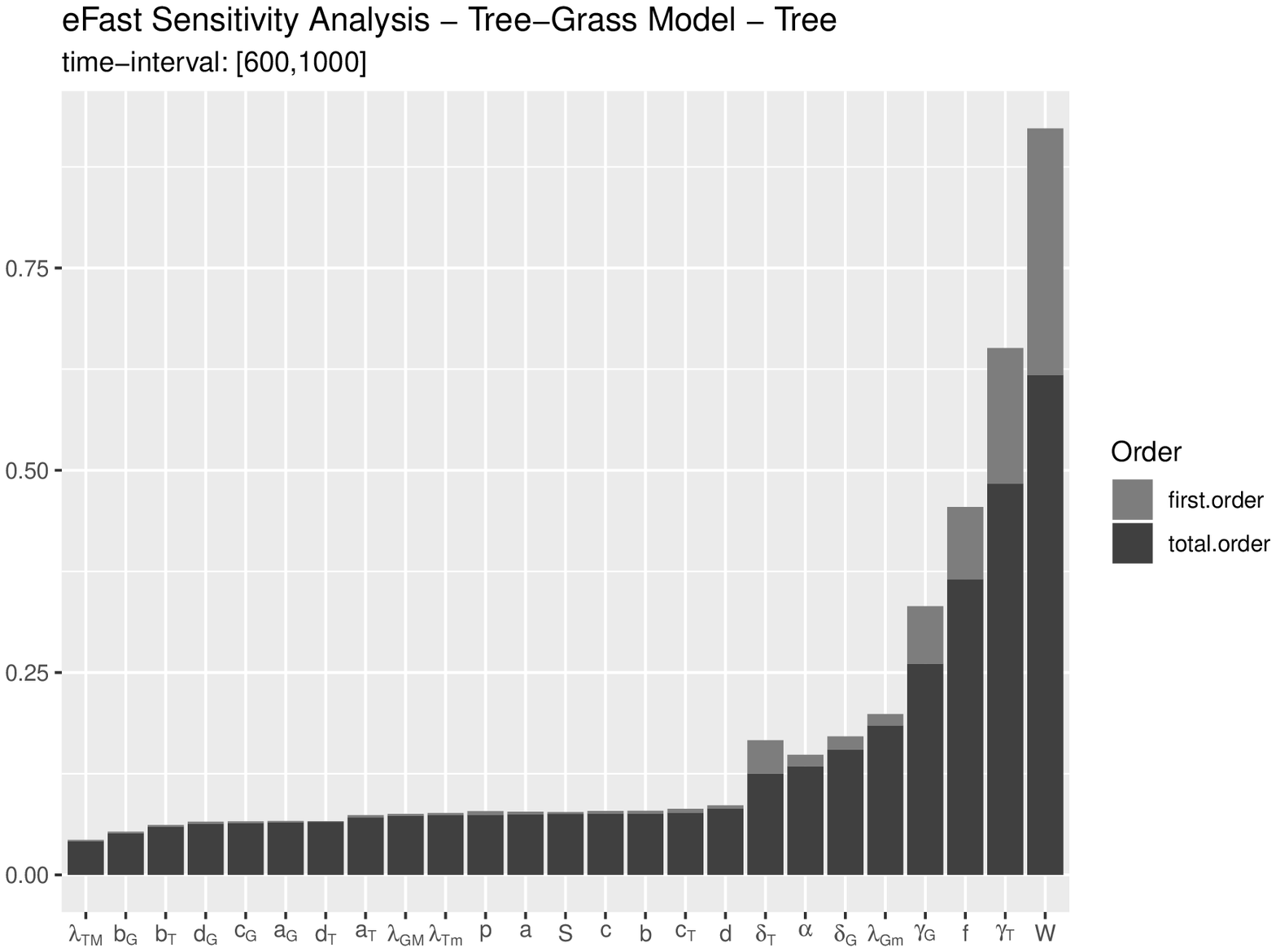}}
        \caption{{\scriptsize  e-FAST Sensitivity Analysis where $\lambda_{GM}:= \lambda_{fG}^{max}$, $\lambda_{Gm}:= \lambda_{fG}^{min}$, $\lambda_{TM}:= \lambda_{fT}^{max}$ and $\lambda_{Tm}:= \lambda_{fT}^{min}$.}}
        \label{eFAST}
\end{figure}

\section{Bifurcation diagrams and numerical simulations} \label{section4}

We first provide bifurcation diagrams, based on the thresholds computation for the following set of parameters (see Table \ref{params_figR2_fig1}). We secondly present numerical simulations (also based on Table \ref{params_figR2_fig1} values) to illustrate bifurcations in relation to mean annual rainfall (\textbf{W}) and fire frequency ($f$).

\begin{table}[H]
    {\footnotesize
        \begin{center}
            \renewcommand{\arraystretch}{1.2}
            \begin{tabular}{cccccc}
                \hline
                $c_{G}$,  t.ha$^{-1}$   & $c_{T}$,  t.ha$^{-1}$  & $b_{G}$, mm.yr$^{-1}$ & $b_{T}$, mm.yr$^{-1}$  &  $a_{G}$, yr$^{-1}$ & $a_{T}$, yr$^{-1}$ \\
                $20$ & $430$  & $500$ & $1100$ & $0.0029$ & $0.004$ \\
                \hline
                $d_{G}$, $-$    & $d_{T}$, $-$ & $\gamma_{G}$, yr$^{-1}$ & $\gamma_{T}$, yr$^{-1}$ & $\delta_{G}$, yr$^{-1}$ & $\delta_{T}$, yr$^{-1}$\\
                $14.73$     & $107$ & $2.7$ & $1.5$ & $0.1$ & $0.1$\\
                \hline
                $S$, $-$ & $\lambda_{fT}^{min}$, $-$ & $\lambda_{fT}^{max}$, $-$ & $p$, t$^{-1}$ & $\alpha$, t.ha$^{-1}$ &$z$\\
                $900$ & $0.05$ & $0.65$ & $0.01$ & $2.45$ & $8$  \\
                \hline
              $\lambda_{fG}^{min}$, $-$  & $\lambda_{fG}^{max}$, $-$ & $a$, t$^{-1}$yr$^{-1}$ & $b$, mm.yr$^{-1}$ & $c$, mm.yr$^{-1}$ & $d$, t$^{-1}$yr$^{-1}$ \\
              $0.005$ & $0.4$ & $0.01$ & $600$ & $120$ & $0.0045$ \\
                \hline
            \end{tabular}
            \end{center}
        \caption{ {\small Parameter values considered for simulations.}}\label{params_figR2_fig1}}
\end{table}

Thanks to the qualitative analysis of system (\ref{swv_eq1}) (see \ref{section3}), any version of the bifurcation diagrams (see for instance Figures \ref{bif_diagramme}-\ref{zoom_bif_diagramme}), in terms of the
fire frequency and the MAP, summarize the
outcomes of the ODE model (\ref{swv_eq1}). These bifurcation diagrams are obtained without simulations: they are produced with a
simple web application, called Tree-Grass (see Figure \ref{code-F-S}),  developed using R \cite{R}, shiny R package \cite{shiny} and plotly R package \cite{plotly}. The source code of this application is free to use and is licensed under a
Creative Commons Attribution-NonCommercial-ShareAlike 4.0
International License
(\url{https://creativecommons.org/licenses/by-nc-sa/4.0/}).
It is available at \url{https://gitlab.com/cirad-apps/tree-grass}. 
The user can modify the default parameters which are classified in
primary and secondary parameters according to the sensitivity
analysis (see section \ref{AS}).
They can be changed either manually via the interface or by uploading a csv file containing some custom set of parameters.
The ``Calculate" button launches the computation of the bifurcation diagram, using all qualitative thresholds, for the chosen parameters. 
 The Tree-Grass application allows to export the obtained bifurcation diagram and also the underlying set of parameters.

\begin{figure}[H]
    \centering
    \includegraphics[width = 0.8\textwidth]{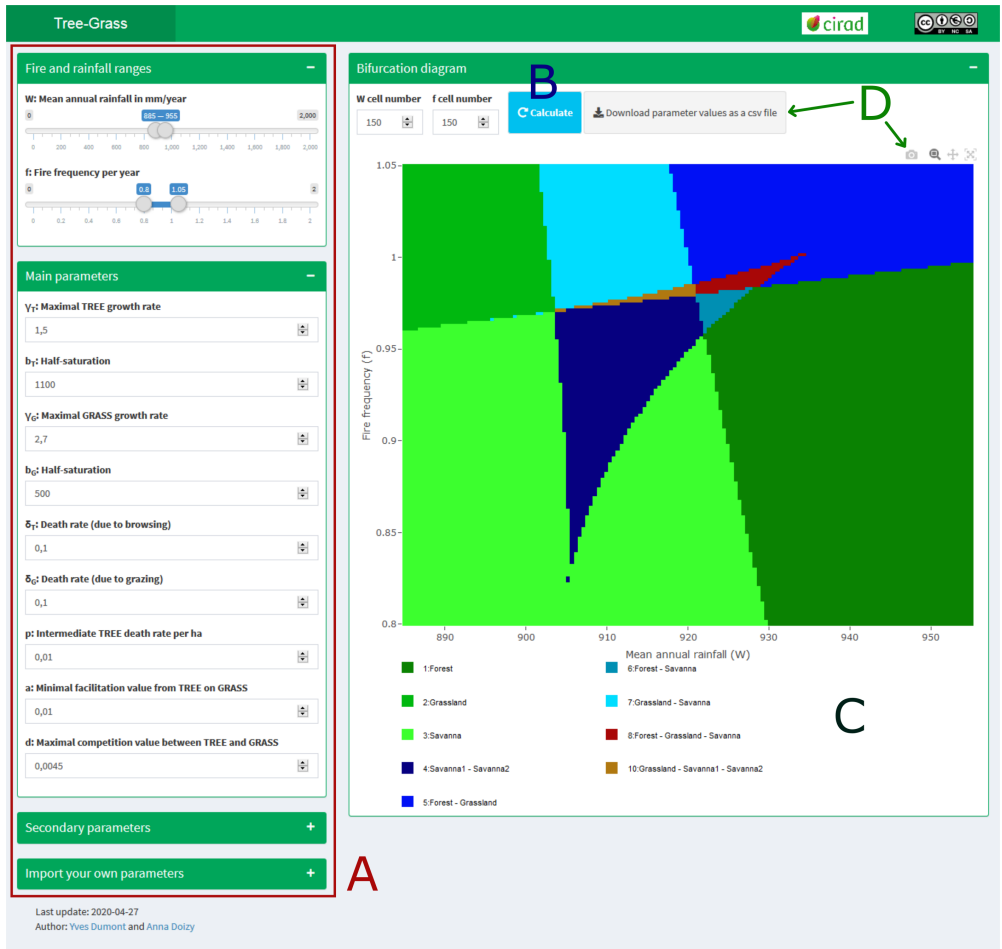}
    \caption{Illustration of the Tree-Grass application interface which permits the user to modify the model parameters (A), launch qualitative thresholds computation (B), get the resulting interactive bifurcation diagram (C) and export the outcomes
(D).}
    \label{code-F-S}
\end{figure}

\par
Figure \ref{bif_diagramme} depicts the outcomes of model
(\ref{swv_eq1}) depending on fire frequency ($f$) and mean annual rainfall (\textbf{W}). In relation to these two parameters, the system experiences both monostability and multi-stability situations involving desert, forest, grassland and savanna.
In the lowest part of the rainfall gradient,a stable bare soil (i.e. desert) is observed for all values of the fire frequency $f$. For a large stretch of the rainfall gradient, i.e. from $ca.$ \textbf{W}=100 mm.yr$^{-1}$ to $ca.$ \textbf{W}=950 mm.yr$^{-1}$, savannas are found to be stable but for high fire frequencies ($\sim>$0.85) they are nevertheless unlikely to be observed at landscape scale as long as MAP do not exceed 700 - 800 mm. Above this threshold, increasing the fire frequency is predicted to notably reduce tree biomass and induce a shift from monostable savanna to monostable grassland and even to multistable states. In the humid parts of the rainfall gradient (MAP$>$ 950-1000 mm), monostable forest is predicted for low values of the fire frequency while for very high fire frequencies, forest-grassland bistabilty is possible. Thanks to the nonlinear functions $\omega(G)$ and $\vartheta(T)$ several savanna equilibria may exist and may be simultaneously stable. For intermediate MAP values associated with very high fire frequency we moreover note a variety of multi-stable states, i.e. savanna-forest-grassland, savanna-savanna-grassland  tristability, savanna-savanna, savanna-grassland, forest-savanna and forest-grassland bistabilities. 

\begin{figure}[H]
    \centering
   \includegraphics[width = \textwidth]{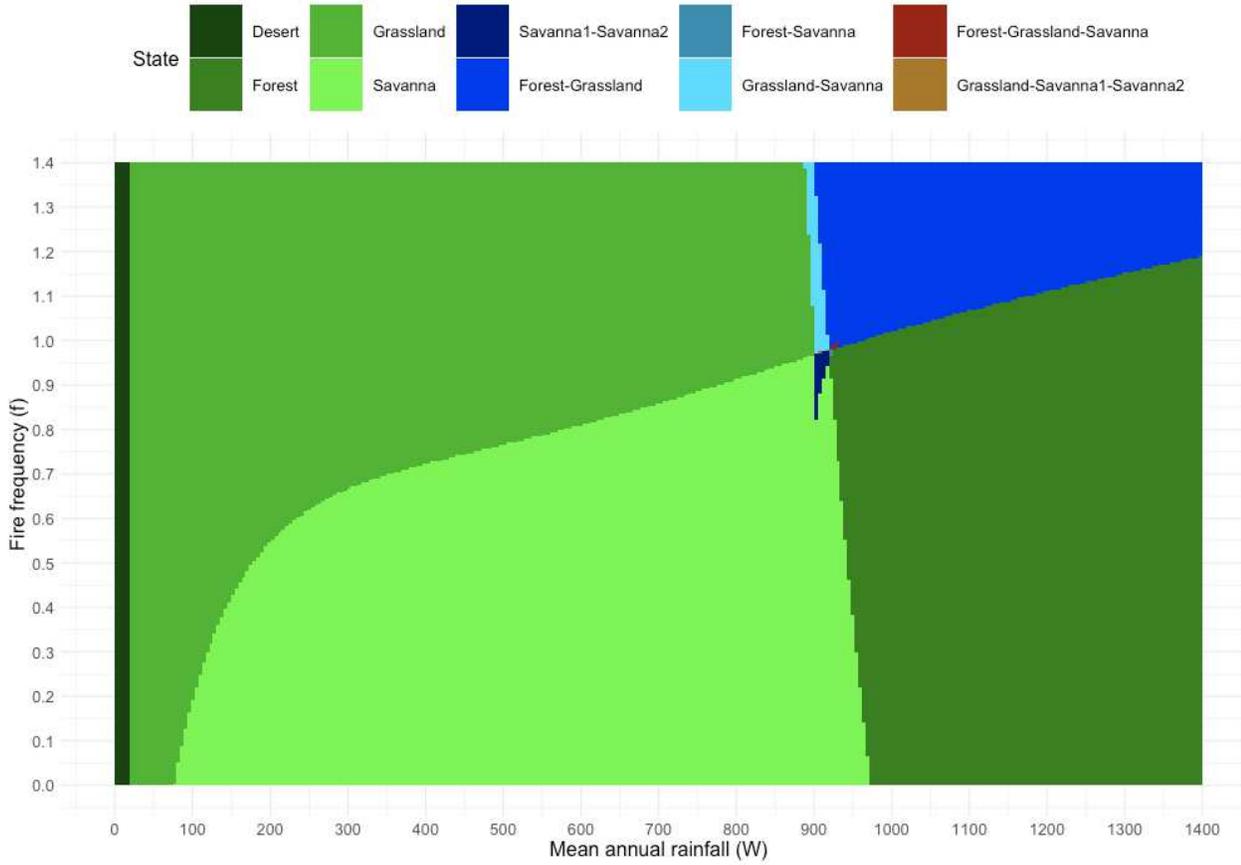}
\caption{{\scriptsize Bifurcation diagram of model
 (\ref{swv_eq1}) obtained along gradients of MAP
($\textbf{W}$) and fire frequency ($f$). Regions in the $\textbf{W}-f$ parameter space are delineated 
according to the thresholds of the qualitative analysis computed from the chosen parameters (Table \ref{params_figR2_fig1}) with monostable states (desert, forest, grassland and savanna), along with bistable states (savanna-savanna, forest-grassland), a forest-savanna, grassland-savanna. A zoom of bistable and tristable states is presented in Figure \ref{zoom_bif_diagramme}, page \pageref{zoom_bif_diagramme}}}
        \label{bif_diagramme}
\end{figure}

Figure \ref{zoom_bif_diagramme} zooms in the bifurcation diagram presented in Figure \ref{bif_diagramme} where we let \textbf{W} to range from 900 mm.yr$^{-1}$ to 940 mm.yr$^{-1}$ and the fire frequency $f$ to range from 0.8 yr$^{-1}$ to 1.05 yr$^{-1}$. The zooming highlights the multistable states that are not visible (savanna-savanna-grassland; savanna-forest) or barely apparent (forest-savanna-grassland) in Figure \ref{bif_diagramme}.

\begin{figure}[H]
    \centering
   \includegraphics[width = 1\textwidth]{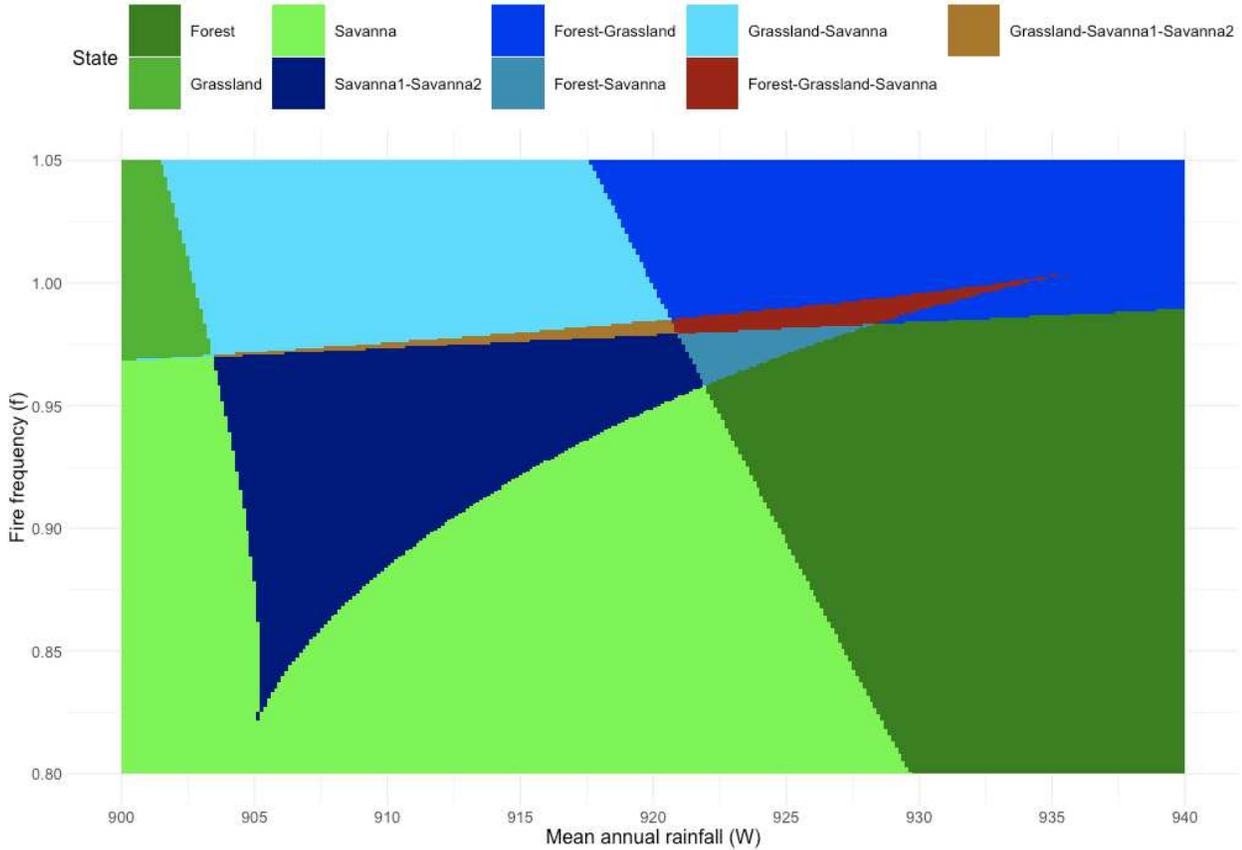}
    \caption{{\scriptsize  Zooming in the bifurcation diagram presented in Figure \ref{bif_diagramme}  to emphasize multi-stable states of limited extent in the $W-f$ parameter space.}}
    \label{zoom_bif_diagramme}
\end{figure}

\comment{
\begin{figure}[H]
    \centering
   \includegraphics[width = 0.8 \textwidth]{legend.eps}
    \caption{{\scriptsize  Color legend describing all possible long-term dynamics of model \eqref{swv_eq1} (see also Figures \ref{bif_diagramme}-\ref{zoom_bif_diagramme}).}}
    \label{legende}
\end{figure}
}
\par
We used simulations of model (\ref{swv_eq1}) and phase portraits of the two state variables to illustrate transitions in the part of the $\textbf{W}$ vs. $f$ parameter space where several multistable configurations were found.
 Figure \ref{bif_W-G_GS_GF} shows a transition from grassland monostability (panel (a)) to forest monostability (panel (f)) with intermediate stages of grassland-savanna bistability (panel (b)), savanna-savanna-grassland tristability (panels (c)), savanna-forest-grassland tristability (panel (d)) and  savanna-forest bistability (panels (e)), as the mean annual rainfall $\textbf{W}$ increases from 902 to 930 mm.yr$^{-1}$ while the fire frequency is kept fixed ($f$=0.98). We note here that a high woody biomass savanna equilibrium (slightly less than 100 $t/ha$) appears in panels (b) and (c) that may be interpreted as open forest with very low, yet perpetuating grass biomass. The woody biomass of the stable forest equilibrium in panels (d), (e) and (f) is just above the value found for the high biomass stable savanna equilibrium. Here the bifurcation owing to a slight increase in mean-annual rainfall entails the final suppression of grass biomass by tree cover competition. We also note that the area of grassland stability in panels (c) and (d) is restricted to a tiny domain of the phase space and cannot be reached for simulations starting from very low levels of woody biomass (especially in (d)).

In Figure \ref{bif_savanna_forest_grassalnd} we depict a transition due to $f$ while the mean annual rainfall is kept constant at $\textbf{W}=920$. It illustrates a shift from a monostable high woody biomass savanna state to a forest-grassland bistability as fire frequency $f$ increases from $f=0.9$ (panel (a)) to $f=1.05$ (panel (e)). Precisely, it shows a transition from savanna monostability (panel (a)) to savanna-savanna bistability  (panel (b)), then savanna-savanna-grassland tristability  (panel (c)) and savanna-grassland bistability (panel (d)) as the fire frequency increases. The woody biomass of the high level savanna equilibrium is of $ca.$ 100 $t/ha$ as in the previous figure, while the slight increase in fire frequency decreases the woody biomass of the lower level savanna equilibrium from $ca.$ 40 $t/ha$ (in (b)) down to 20 $t/ha$ in (c) (this panel being the same as in the previous figure).

From this simulation-based illustration we verify that increasing fire return period for a given rainfall level systematically implies an increase in woody biomass, as classically observed in the field (\citet{Bond2005global}, \citet{Bond2010beyond}, \citet{Mitchard2013woody}).

\begin{figure}[H]
\hspace{-0.6cm}
    \subfloat[][Grassland monostability]{  \includegraphics[scale=0.6]{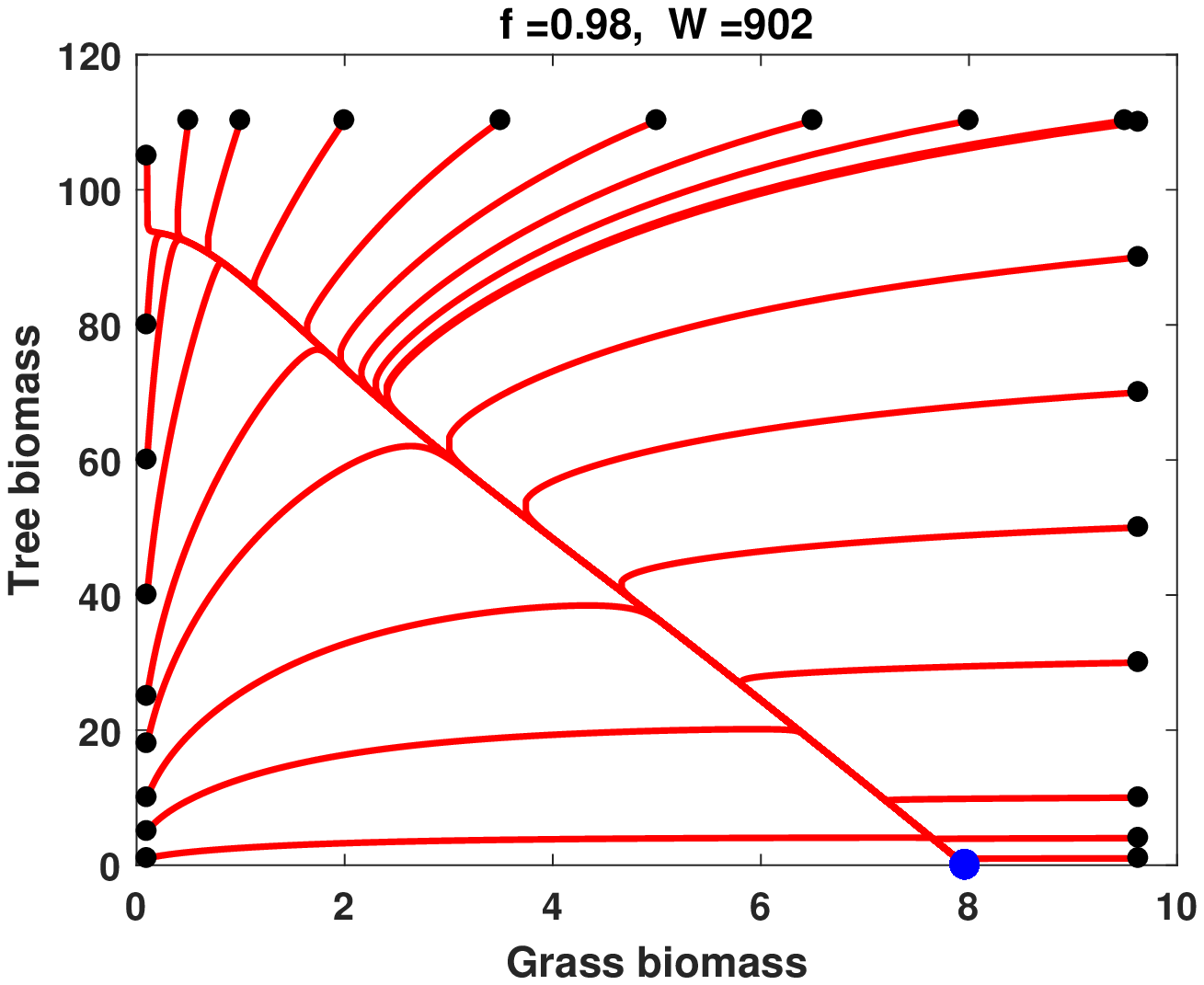}}
    \subfloat[][Savanna-grassland bistability]{  \includegraphics[scale=0.6]{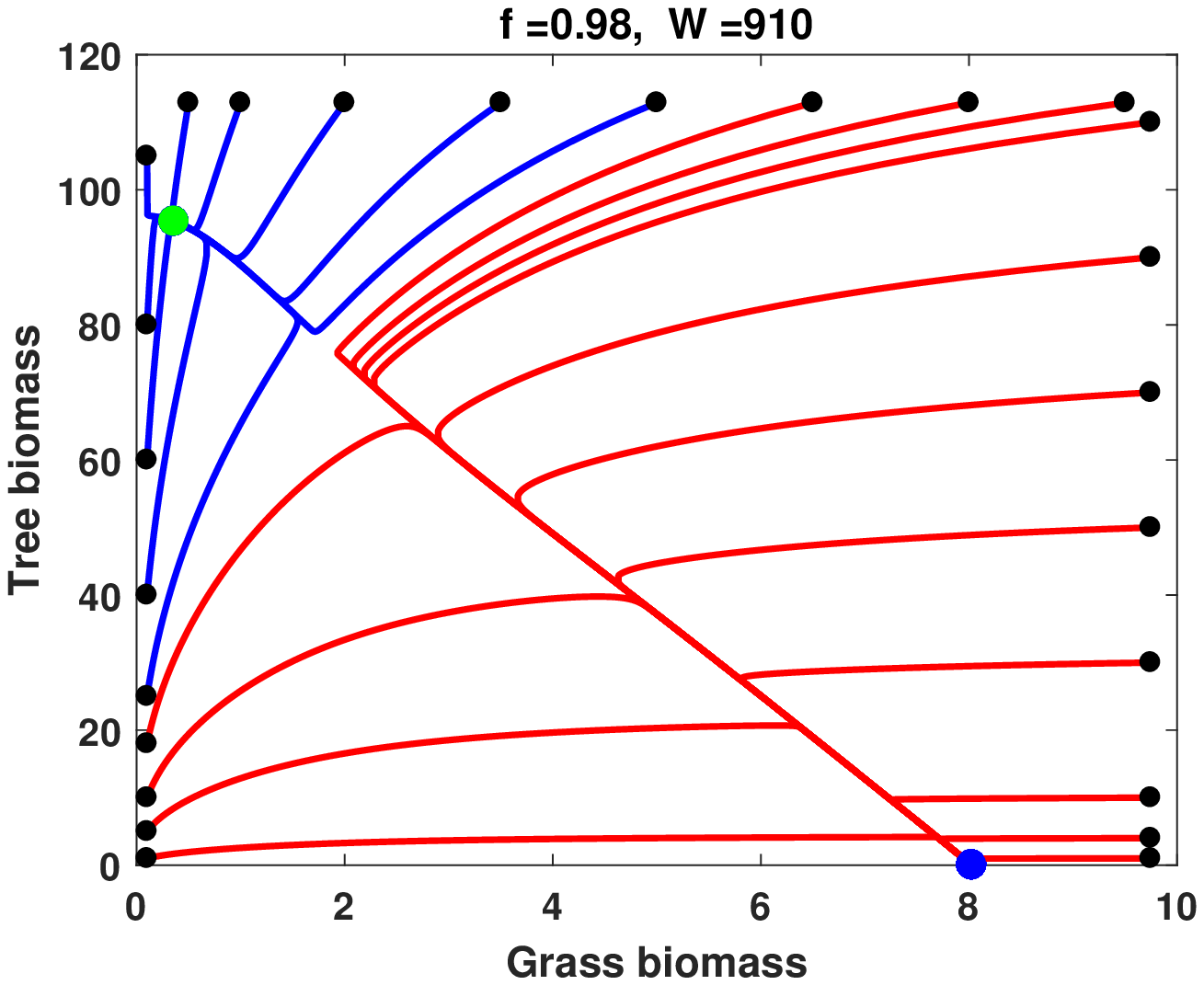}}
    
\hspace{-0.6cm}
    \subfloat[][Savanna-savanna-grassland tristability]{  \includegraphics[scale=0.6]{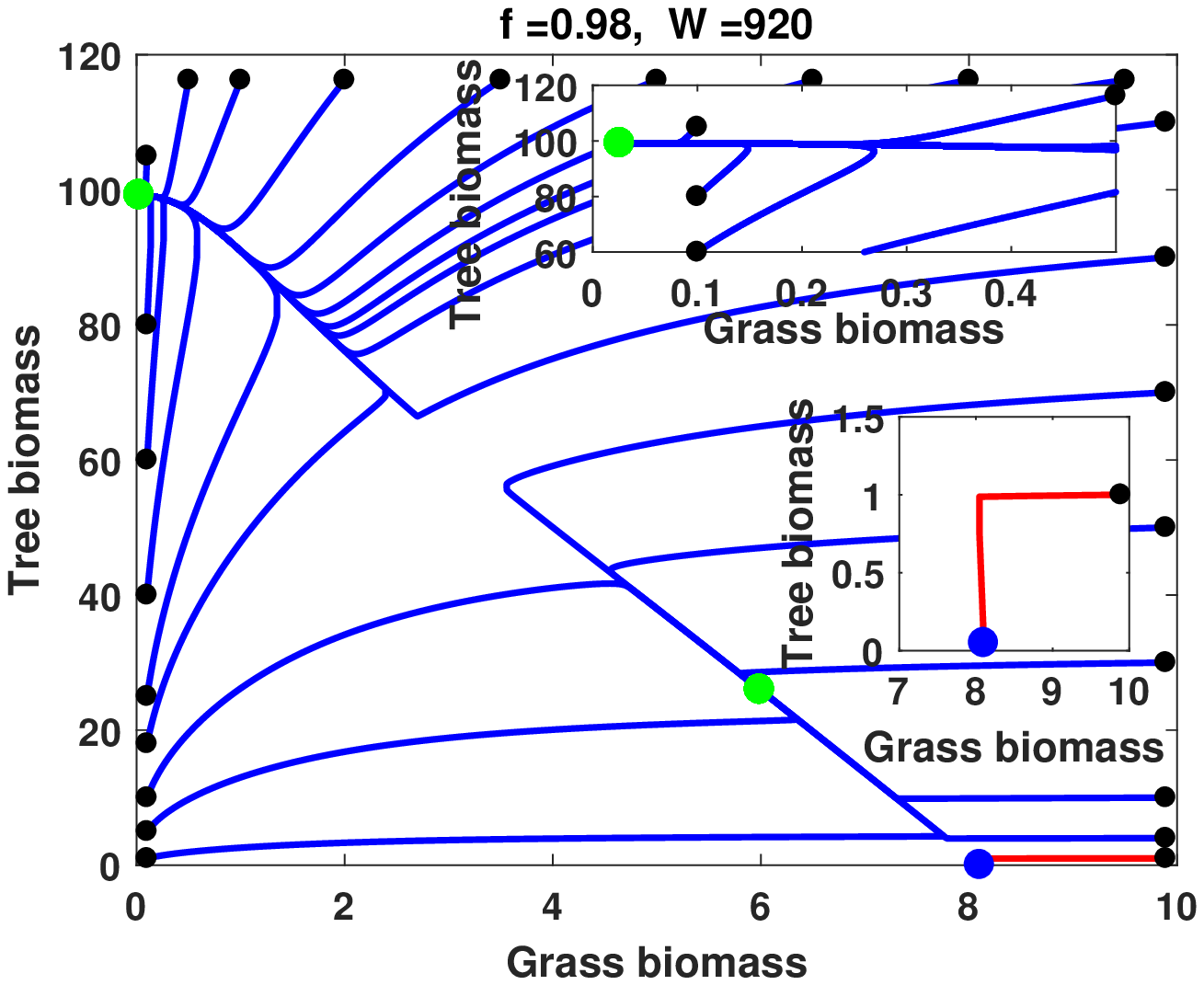}}
    \subfloat[][Forest-savanna-grassland tristability]{  \includegraphics[scale=0.6]{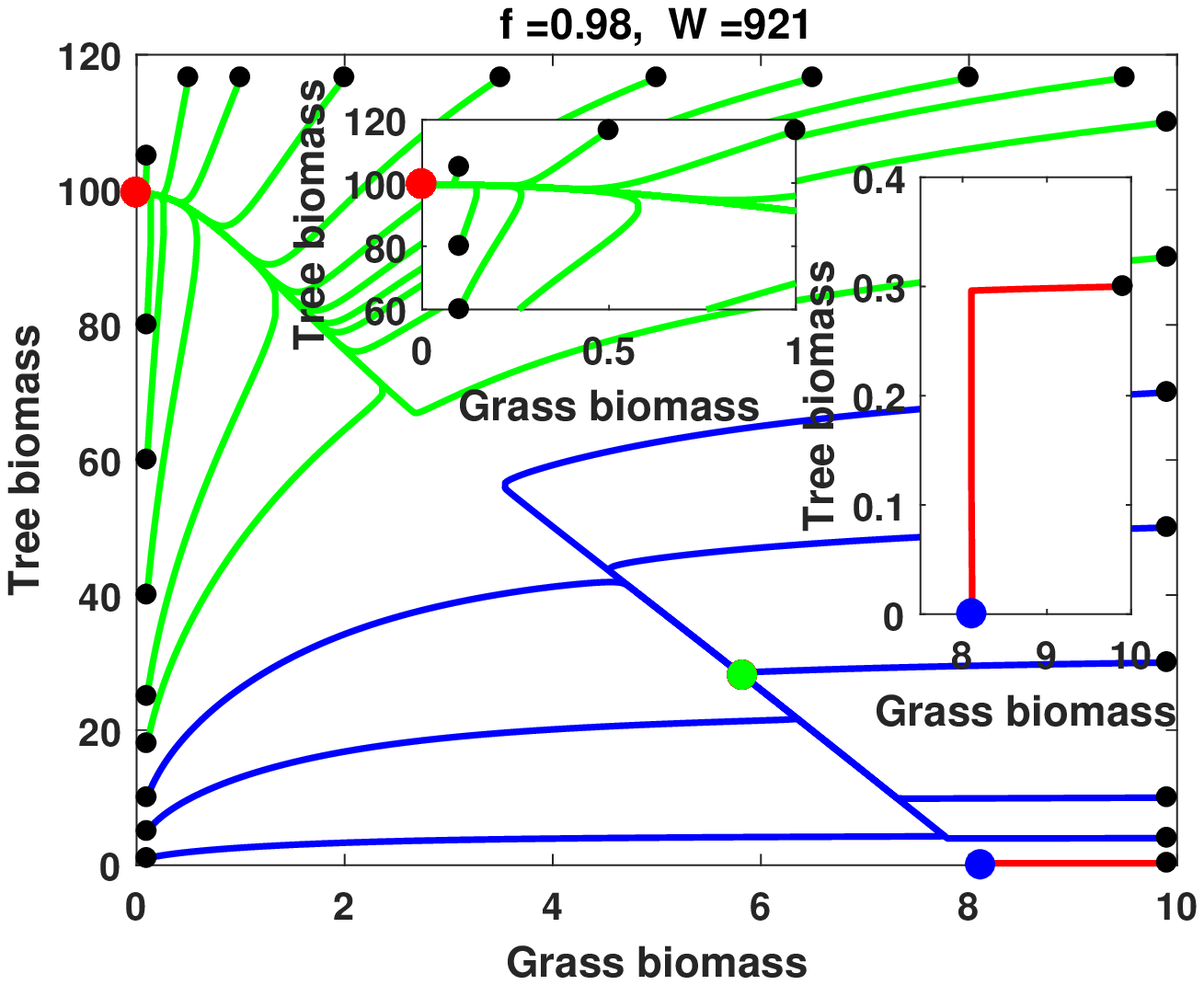}}
    
\hspace{-0.6cm}
    \subfloat[][Savanna-forest bistability]{  \includegraphics[scale=0.6]{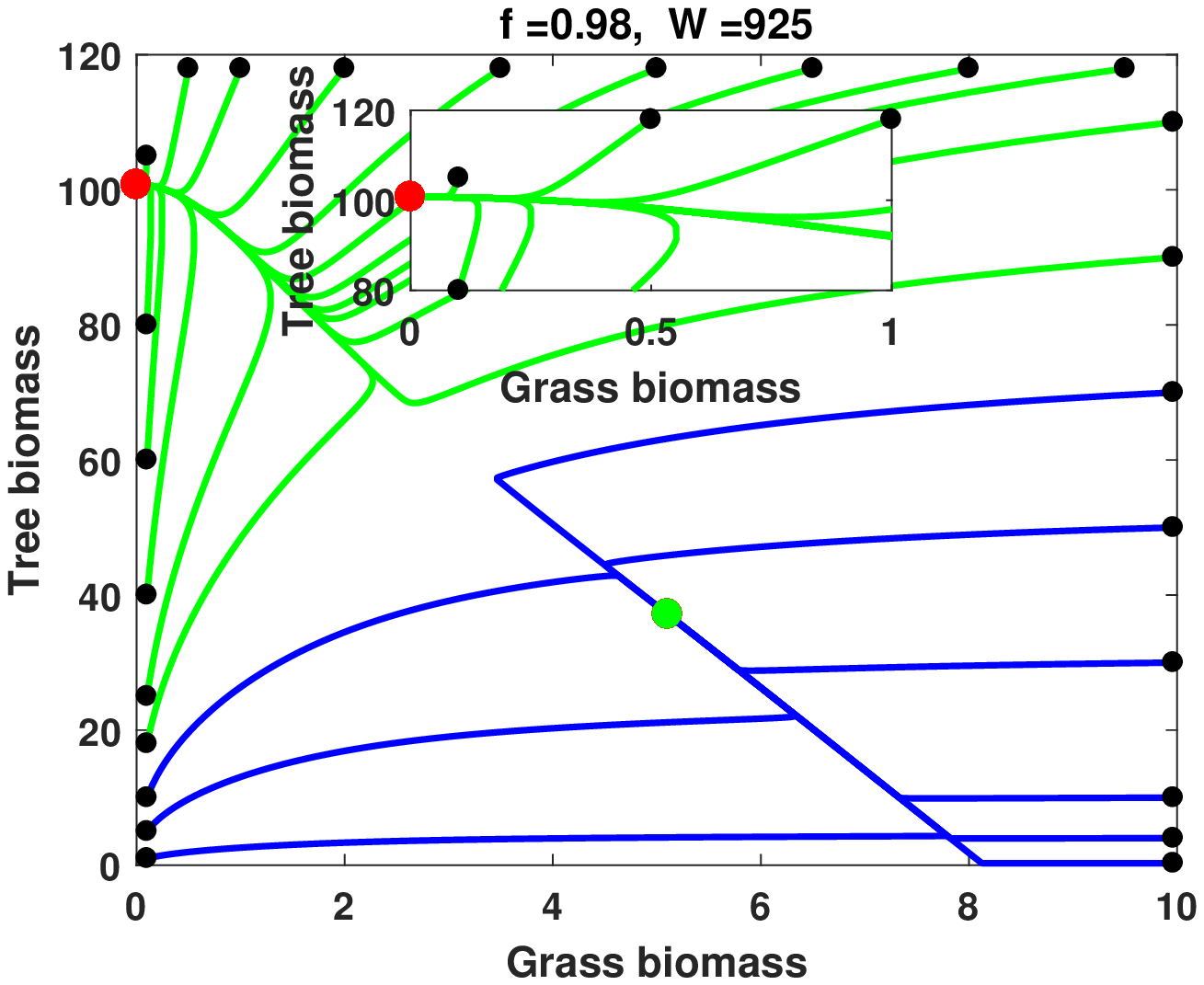}}
    \subfloat[][Forest monostability]{  \includegraphics[scale=0.6]{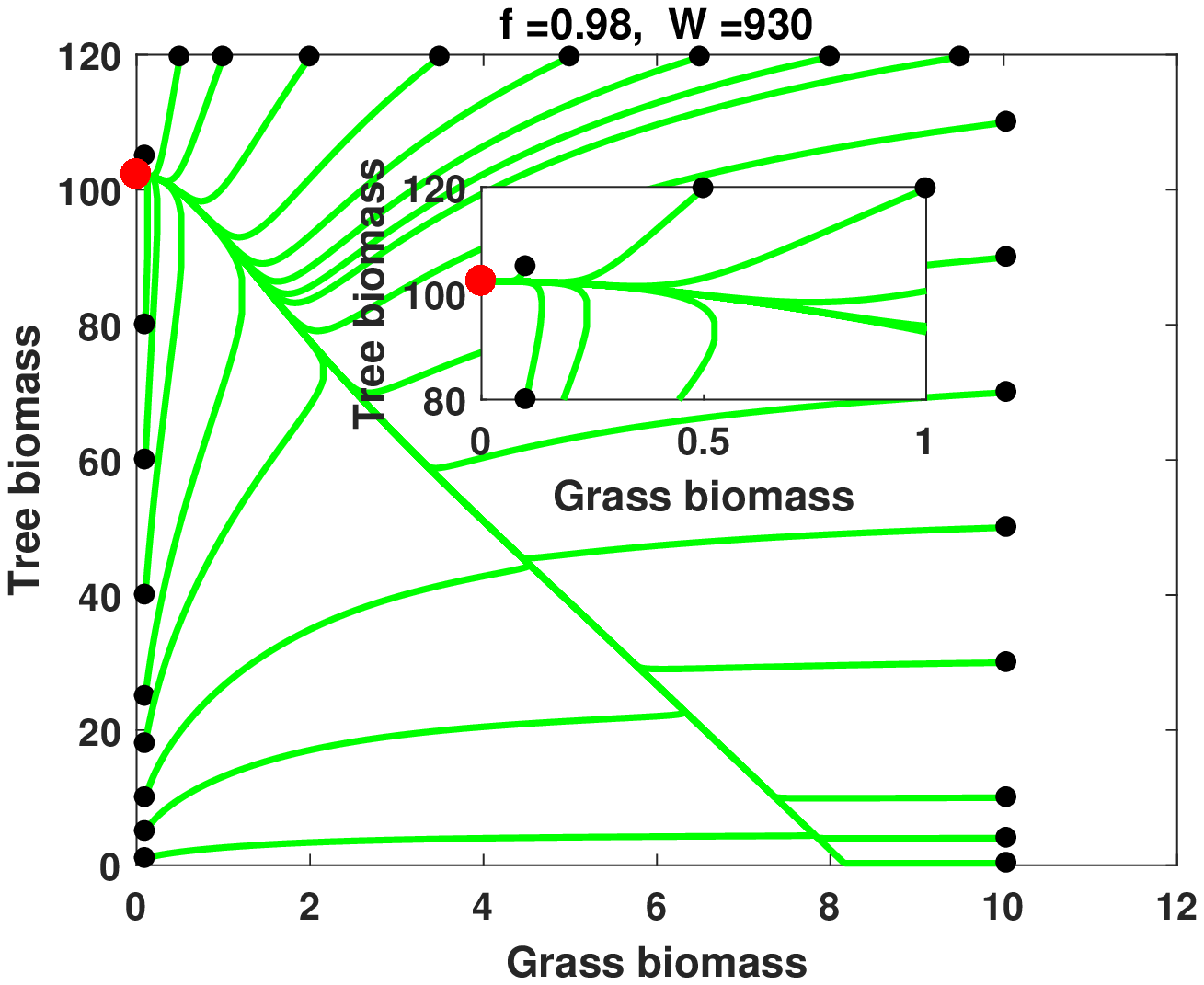}}
    \caption{{\scriptsize  Phase diagrams for grass and woody biomasses (in $t/ha$) illustrating from simulations of model (\ref{swv_eq1}) a transition from grassland monostability (panel (a)) to forest monostability (panel (f)) due  to  an increase in mean annual rainfall \textbf{W}. Black dots represent simulation starting points in phase space, the green dot stands for the stable savanna, the red dot denotes the stable forest while the blue dot represents the stable grassland. Insets magnify the model behavior around equilibria.}}
    \label{bif_W-G_GS_GF}
\end{figure}

\begin{figure}[H]
\hspace{-0.6cm}  
\subfloat[][Savanna monostability]{ \includegraphics[scale=0.6]{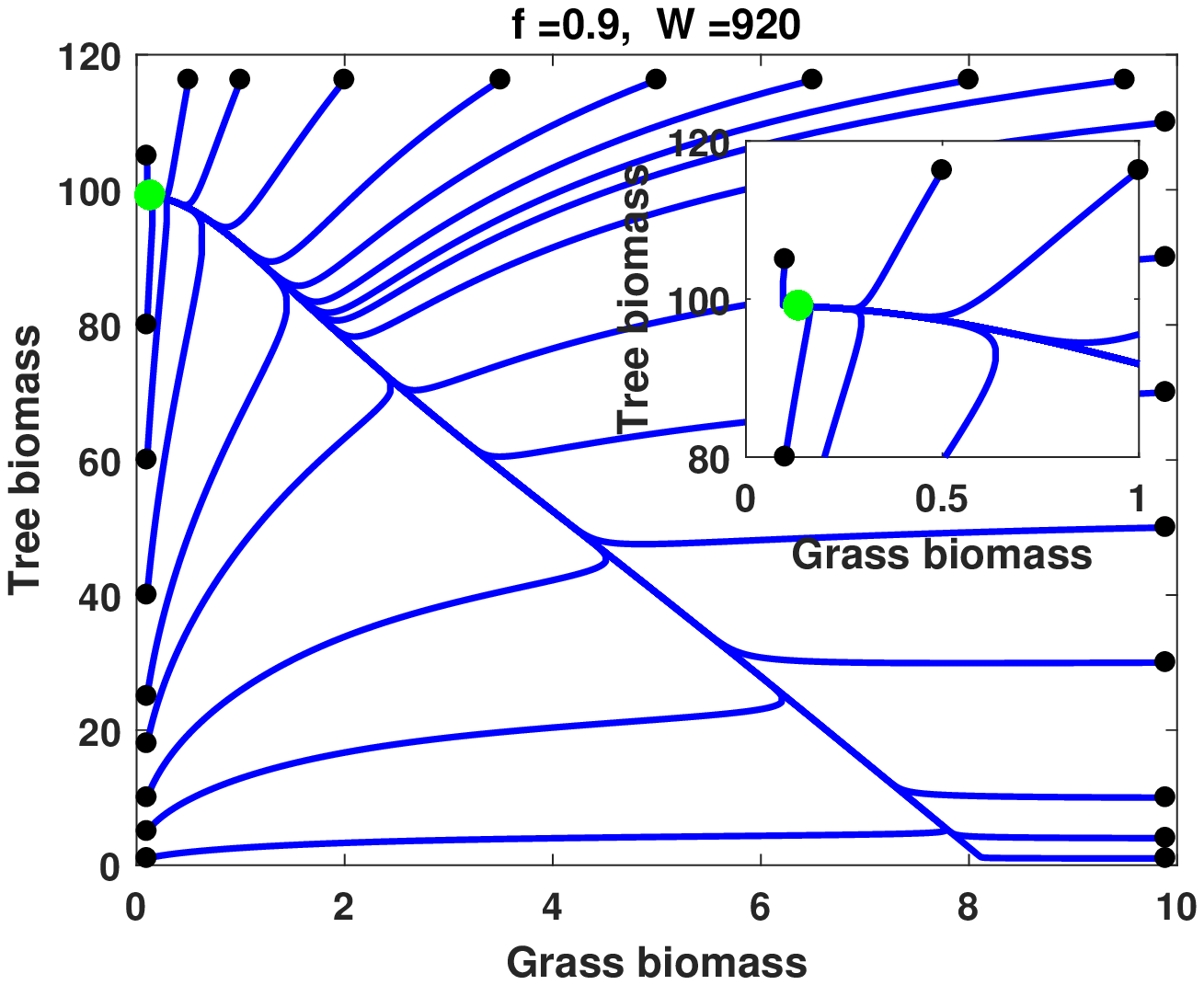}}
\subfloat[][Savanna-savanna bistability]{ \includegraphics[scale=0.6]{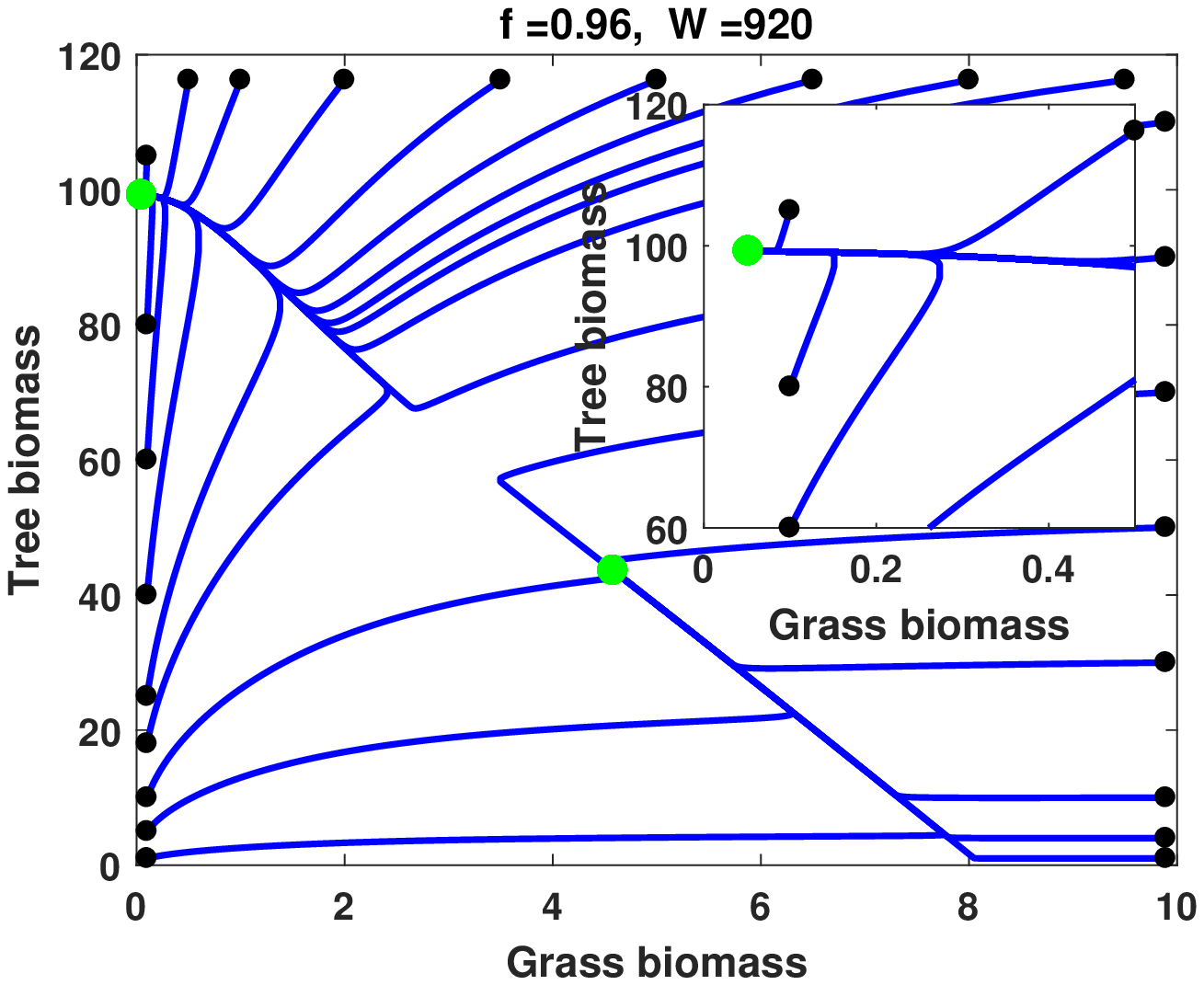}}

\hspace{-0.6cm} 
\subfloat[][Savanna-savanna-grassland tristability]{  \includegraphics[scale=0.6]{savanna1-savanna2-grassland_new.eps}}
  \subfloat[][Savanna-grassland bistability]{ \includegraphics[scale=0.6]{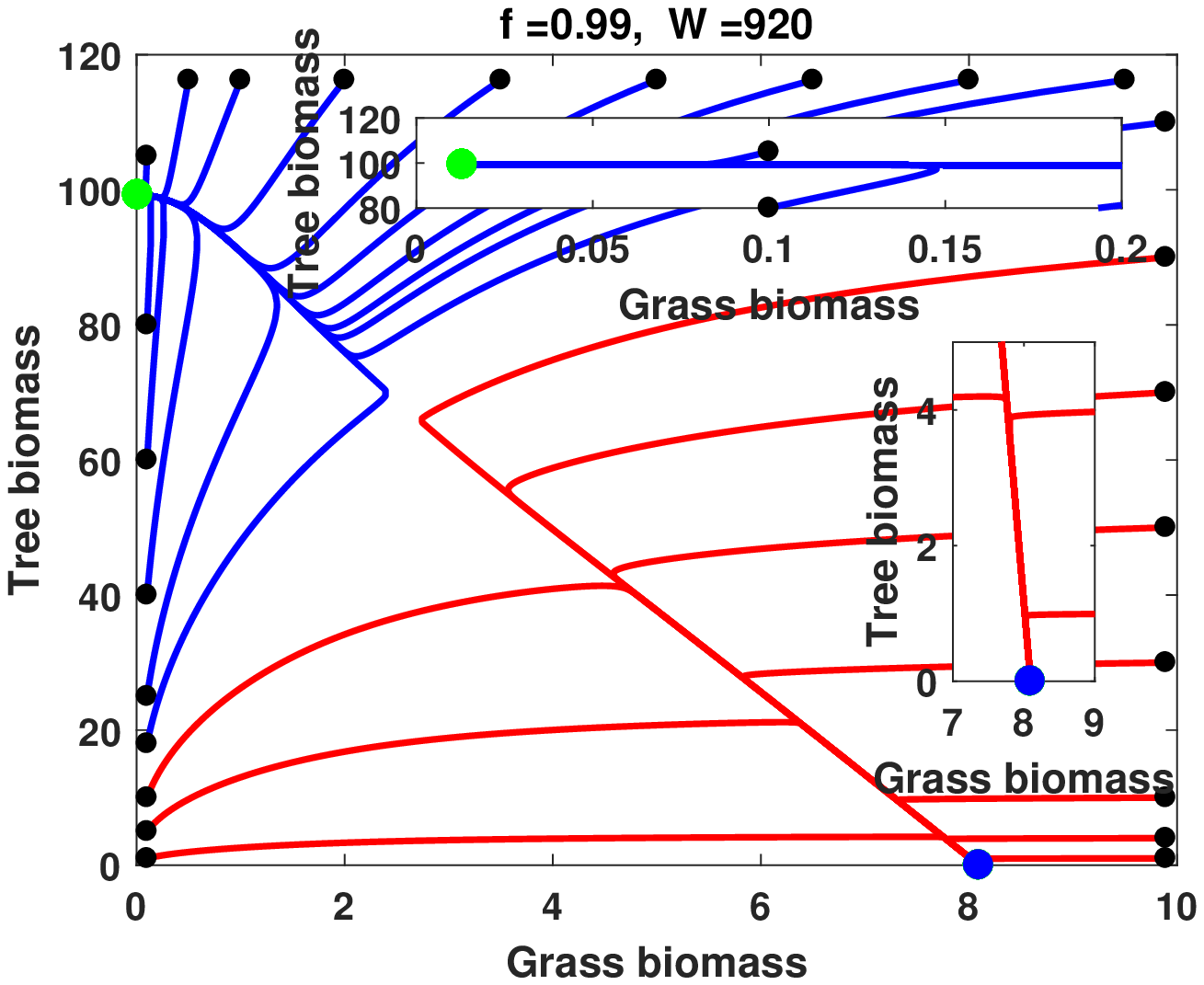}} 
 
\hspace{3.5cm} 
\subfloat[][Forest-grassland bistability]{  \includegraphics[scale=0.6]{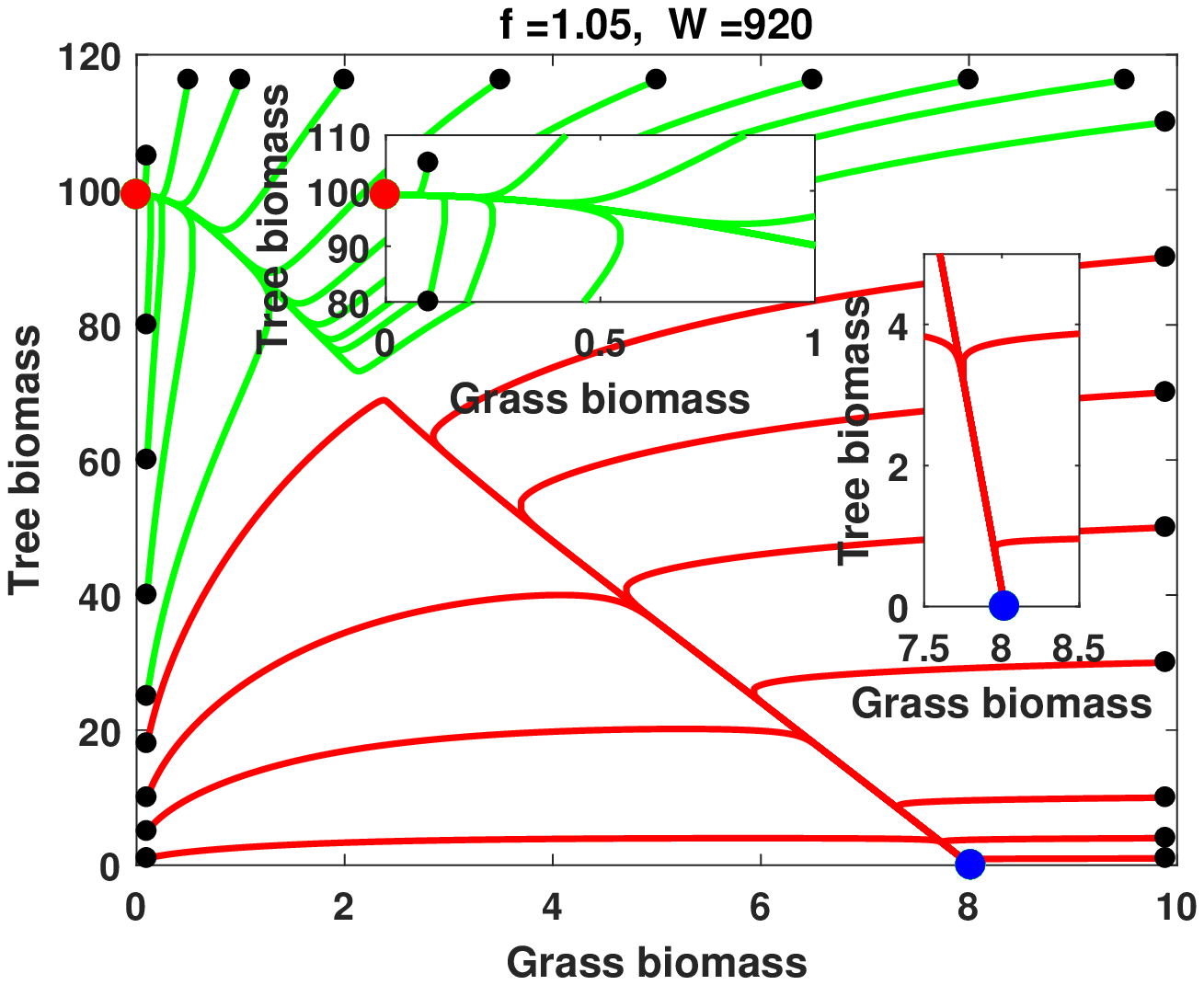}}
\caption{{\scriptsize  Phase portrait for grass and tree biomasses (in $t/ha$) illustrating from simulations of model (\ref{swv_eq1}) the transition from a high woody biomass monostable savanna (panel (a)) to forest-grassland bistability (panel (e)) due  to increasing fire frequency $f$ while keeping constant the mean annual rainfall at $\textbf{W}=920$ mm.yr$^{-1}$.  Black dots represent the starting points of simulations, the green dot stands for the stable savanna, the red dot denotes stable forest while the blue dot represents the stable grassland. Insets magnify the model behavior around equilibria.}}
   \label{bif_savanna_forest_grassalnd}
\end{figure}

\section{Discussion}\label{discussion}

The present line of modelling aimed at demonstrating that meaningful and diversified outcomes can be expected from parsimonious, mathematically tractable models of grassy and woody biomasses interactions in the savanna biome. On the basis of a simple ODE framework, sensible results were indeed reached regarding how vegetation physiognomies change in relation to MAP and fire frequency. The model is liable to predict `trivial' equilibria, i.e. desert, grassland and forest as well as coexistence savanna equilibria (up to five of them), that is the main physiognomies encountered along the rainfall gradients of inter-tropical zones. The qualitative analysis also defined several ecological thresholds that delineate regions of monostability, bistability and tristability involving these equilibria. The bifurcation diagram allows us verifying that shifts between regions induced by increasing fire frequency do not favour the woody component of vegetation: monostable forest gives place to forest-grassland bistability, savanna shifts into grassland, ... This may sound trivial with respect to common experience (\citet{Bond2005global},\citet{Bond2010beyond}, \citet{Favier2012abrupt},\citet{Jeffery2014}), though it is not established to our knowledge that any other model of only two state variables is able to render this fundamental behavior.
For this, the introduction in earlier versions of two independent non-linear functions $\omega(G)$ (see (\ref{omega_fction})) and
$\vartheta(T)$ (see (\ref{theta_fction})) was decisive. Moreover, thanks to $\omega(G)$ and $\vartheta(T)$, more than one savanna coexistence equilibrium may exist for system (\ref{swv_eq1}), while at least two of them may be simultaneously stable. We also found that the model can yield a variety of bistable (forest-savanna, forest-grassland, savanna-grassland) and tristable patterns (forest-savanna-grassland, forest-savanna-savanna and grassland-savanna-savanna). That relatively simple ODE models can lead to complex behaviours has already been highlighted, notably by \citet{Touboul2018} though they used three to four state variables. 
\par
For the set of parameters we used to compute the bifurcation diagram, we found bi- and tristability occurring in sensible situations, with fire frequencies approaching one fire per year and MAP values of (900 - 1000 mm per year) close to those reported as allowing forest to take over savanna in the absence of frequent fires. We however acknowledge that multistable patterns only cover a limited area in the bifurcation diagram while some of them may disappear upon minor changes in some parameters. Similarly, forest-savanna-savanna-grassland quadristability and a limit cycle (linked to facilitation) were proven to be theoretically possible. But neither of them was observed for the ranges of parameters we deemed plausible and this recall the gap between theoretically-possible complexity of dynamical outcomes (as underlined in e.g. \citet{Touboul2018}) and what is actually observable from reasonable parameter ranges. 
Bistable situations involving grassland (as alternative to savanna or forest) seems the most robust but they are linked to very high fire frequencies (above one fire per year) of questionable realism. Under humid equatorial climates, landscape mosaics juxtaposing both grassland-like and forest vegetation are widely observed in places were fire frequencies can exceed one per year because of two dry seasons (\citet{Walters2010}, \citet{Jeffery2014}). But lower intensity and impact on woody stems is reported for too frequent fires (\citet{Walters2010}), while real grasslands in the corresponding landscapes are often associated to seasonal water-logging. There is thus no agreement that fire alone can ensure grassland stability under humid climates.  
\par
More generally, whether complex multistable situations may actually occur in spite of inherent temporal variability of climate and environmental factors is a fully open question. Some observations however suggest that we should not a priori rule them out and that ability to predict their conditions of occurrence on analytical grounds is a desirable property for a model. For instance, an analysis by \citet{Favier2012abrupt} on remote sensing data along a general transect in Central Africa reported (for 3--4$^\circ$ north latitude range), a distribution of woody cover values featuring three modes, namely very low values resembling grassland, large values around 80\% cover indicating forests and intermediate cover values around 40\% suggesting dense savannas. Possible multistabiltiy of equilibria also means that shifts from one stable state to another may often be less abrupt and spectacular that hypothesized from existing models and that trajectories of vegetation change may be more complex than often thought of (\citet{Yatat2014, Yatat2018}). In ecology, the theory of alternative stable states has been to date mostly invoked in relation to bistability of contrasted vegetation types i.e. forest vs. grassland or vs. savanna (assumed of low cover). Consequently, transitions between alternative stable states are frequently termed as abrupt or catastrophic shifts (\citet{Pausas2020}, \citet{Scheffer2001, Scheffer2015}, \citet{Scheffer2003}, \citet{Staver2011tree}, \citet{Favier2012abrupt}, \citet{Yatat2018}) and were therefore deemed unrealistic by some other authors. But we illustrate here that bistability may involve less contrasted states, as well. Notably, we highlighted here the possible existence of two savanna equilibria among which one of high woody biomass, that may be interpreted as dense woodland or open forest. Indeed, the corresponding woody biomass of slightly less than 100 $t/ha$ is in the upper range of values reported for the miombo woodlands, while the associated very low grass biomass is not at odd with most miombo reported figures of less than 2 $t/ha$ (\citet[p. 24-26]{Frost1996}). The area of bistability for the two savanna equilibria is notable in our bifurcation diagram. This finding may echo the long-lasting, unsettled debate about whether open or dry forests, among which the miombos should be considered as transient or {stable states (\citet[p. 50]{Frost1996})}.  

\par
We made here two additions to the model presented by \citet{Yatat2018}. First, we allow the parameter ($\eta_{TG}$) depicting the asymmetric influence of trees on grasses to depend on the biogeographical context through MAP (eq. \eqref{etaTG}). One novelty in the present paper is to provide complete qualitative analyses of the consequences of this choice and we show that shifting from competition to facilitation with decreasing MAP, as empirically evidenced (\citet[page 156]{Abbadie2006lamto}), substantially enriches the possible outcomes of the model. This variety of results illustrates the potential of the ODE framework. Second, we let the fire-induced mortality of grasses non-linearly decrease with annual rainfall instead of being constant (eq. \eqref{lambda_fction}) in order to avoid possible nonsensical results in the dry stretch of the MAP gradient for which fire is known to be absent or a negligible.  
 Our ODE model differs fundamentally from existing
tree-grass models in that MAP is explicit in the parameters of biomass logistic growths. We made a first assessment of these parameters all over the MAP gradient using published results, while there was no previous synthesis about MAP influence on potential maximal woody biomass that encompassed both savannas (as in \citet{Higgins2000fire}) and forests (as in \citet{Lewis2006}). Some  existing models considered rainfall through an additional state variable of soil moisture (see the review of \citet{Yatat2018}) leading to additional parameters and more complex mathematical systems. But there is no real need for a third equation about soil moisture since its
dynamics is very rapid compared to change in vegetation (\citet{Barbier2008}), leading to systems in which the fast soil moisture variable can be eliminated (\citet{Martinez2013spatial}).

\citet{Accatino2016a} questioned the assumption
according to which the parameter $f$ of fire frequency could be constant and 
independent of vegetation characteristics, as in most published ODE models and in some non-ODE ones. In fact, if most fires start from human ignition
(\citet{Favier2004modelling}, \citet{Govender2006},
\citet{Archibald2009}), fires are strongly constrained by available grass fuel and its distribution across space (\citet{Archibald2010}). Here, we keep fire frequency $f$ constant but we interpret it as a man-induced `targeted' fire frequency as for
instance in a fire management plan of a protected area or a ranch. This frequency which will not automatically translate into
actual frequency of fires of notable intensity in all places of a landscape. We therefore modulate $f$ by $\omega(G)$ which will stay in its low branch as long as grass biomass is not of sufficient quantity. For instance, in the `W' National Park in southern Niger, a one-year frequency was targeted by the fire management plan, but the actual average frequency was assessed at 0.7 year$^{-1}$ by a seven-year remote-sensing survey (\citet{Diouf2012}). At the scale of the entire Serengeti National Park (Tanzania), \citet{McNaughton1992} reported that the burnt area fraction (i.e. average fire frequency) dramatically decreased in the 70s due to grass biomass depletion by soaring herbivores populations, although ignition regime by neighbouring communities remained probably unchanged. We thereby split fire frequency from final fire impact on woody biomass in a multiplicative way $(f\times\omega(G)\times\vartheta(T))$. This modelling choice expresses the well-known fact that grass biomass controls both fire spread and local fire intensity which impacts differently small and large woody individuals (\citet{Govender2006}, \citet{McNaughton1992} and references therein). Thus, the $\omega(G)$ function (bounded in [0,1]) is meant to integrate the difficult spreading of fire and thereby modulates the overall, external forcing (i.e. $f$) applied on the tree-grass system. 

From a more general standpoint, ODE approaches have been
criticized by several authors who questioned the modelling of fire as a permanent forcing that continuously removes fractions of fire sensitive biomass all over the year (\citet{Higgins2000fire}, \citet{Baudena2010}, \citet{Beckage2011grass}, \citet{Accatino2013Humid, Accatino2016a}, \citet{Tchuinte2016, Tchuinte2017}, \citet{Yatat2016, Yatat2018}). Indeed, the time between two successive fires is generally long (several months or even years, see \citet[Table IV]{Yatat2018}). Hence, fire may rather be considered as an instantaneous perturbation of the savanna ecosystem (\citet{Yatat2018}). Some authors advocated stochastic modelling of fire occurrences while keeping the continuous-time differential equation framework for vegetation growth and direct interactions between plant forms (\citet{Baudena2010}, \citet{Beckage2011grass}) or using a time-discrete
model (\citet{Higgins2000fire}, \citet{Accatino2013Humid, Accatino2016a}, \citet{Touboul2018}). However, a drawback of most of
these time-discrete stochastic
models (\citet{Higgins2000fire}, \citet{Baudena2010}, \citet{Beckage2011grass}) is that they are less amenable to analytical approaches and often even barely tractable. This is a problem because outcomes spanning limited areas in parameter space (as for the multistable states we evidenced) may be missed by simulations if no qualitative result is available to pinpoint their existence.  
Another line of thought relies on the modelling of fires as impulsive events (\citet{Yatat2016, Yatat2018}, \citet{YatatDumont2018}, \citet{Tchuinte2016, Tchuinte2017} and references therein). This leads to the impulsive differential equation (IDE) framework. To some extends, IDE based models can be seen as a trade-off between realism (discrete nature of fire occurrences) and mathematical tractability (like in the present ODE models). In earlier works, there was no qualitative criteria for the savanna equilibria using IDE models (\citet{Tchuinte2017, Yatat2016}). Last but not least, some processes that likely impact the stability of savanna vegetation including fire spread, seed dispersal and thus tree establishment are spatially structured (see \citet{Li2019} and references therein). Consequently, it is obvious that one cannot expect mean-field or spatially implicit savanna models to accurately reproduce the dynamics of complex, mosaic-like landscapes, even through aggregated values of the two simple state variables we used here.

\section{Conclusion}
\label{conclusion}
In this paper, we presented and analyzed an improved version a `minimalistic' tree-grass
model that addresses the influence of fire and rainfall (MAP) in
tree-grass ecosystems. The model is minimalistic in terms of state variables and parameters, by only explicitly addressing essential processes that are: logistic growth of woody and grassy biomasses, asymmetric direct interactions thereof (both MAP-modulated), positive grass-fire feedback and decreased fire impact on large woody biomass.

The model is fully mathematically tractable and is sufficient to produce a realistic bifurcation diagram rendering the `big picture' of vegetation physiognomies in the savanna biome. Reaching as meaningful results over complete rainfall and fire gradients with less parameters seems challenging. Tractability is important because it allows us to efficiently explore all parts of the parameter space and be sure that interesting situations, notably linked to multi-stability, are not missed as it may happen if only relying on computer simulations. Since well-defined thresholds delineate all outcomes of our model, we can rapidly re-draw the bifurcation diagram after changing parameters, as to better adapt the model response to specific contexts or to integrate improved knowledge on some parameters. We moreover propose a R-Shiny application, ``Tree-Grass", to let ecologists easily explore consequences of modifying parameter values. Results of sensitivity analysis provided in this paper may also guide such explorations and suggest priorities for further data acquisition.

This work can be improved and extended in several ways. One could consider MAP together with potential evapotranspiration (PET) instead of MAP alone as to render that under cooler climates (e.g. in Eastern and Austral Africa) limits in the bifurcation diagrams may shift towards lower MAP values. Adjusting parameter values to more specific reference data sets is also needed to better agree with any given biogeographical context. In contexts where parameters are fairly mastered, spatially explicit approaches are desirable. A former version of the present model has already inspired a spatially explicit model featuring local propagation of grass and tree biomass (e.g. clonal reproduction) through diffusion operators, taking into account continuous fire (\citet{Yatat2018ECOM}), or impulsive periodic fire (\citet{YatatDumont2018,Banasiak2019}). Several studies (e.g. \citet{Borgogno2009}, \citet{Lefever2009}) have also fruitfully modelled non-local plant-plant interactions using kernel operators in reference to arid patterned vegetation and single state variable models (undifferentiated vegetation biomass). Such kernels could be introduced in our model as to embody distance-dependent interactions between grassy and woody biomass in presence of fires. Spatially explicit versions of the present model are desirable, for instance to better address the dynamics of savanna-forest mosaics found under humid climates and investigate the stability of particular landscape features such as localized structures (e.g. groves, \citet{Lejeune2002}) or abrupt boundaries (\citet{Yatat2018ECOM}, \citet{Wuyts2019}) that are of particular relevance to understand the dynamics of forest-savanna mosaics in the face of global change.

\section*{Supplementary materials}
Tree-Grass app source code is freely available at \url{https://gitlab.com/cirad-apps/tree-grass}.

\section*{Acknowledgements}
VY and YD were supported by the DST/NRF SARChI Chair in Mathematical Models and Methods in Biosciences and Bioengineering at the University of Pretoria (grant 82770).
This work benefitted from ongoing field investigation in Cameroon supported by Nachtigal Hydropower Company (Contract n$^o$ C006C007-DES-2017).


\section*{Bibliography}
\bibliography{Biblio_new_model}

\appendix

\section{Analytical results of system (\ref{swv_eq1})} \label{section3}

Here, both competition and facilitation are considered and, will be theoretically analyzed. For reader convenience, we will explicitly state whether we are in the competition or in the facilitation case. To favor the readability of the paper, key theoretical results will be stated in this appendix but their proofs will be given in subsequent appendices.

The right-hand side of system (\ref{swv_eq1}) is $\mathcal{C}^{1}(\R^2)$
i.e., continuously differentiable on $\R^2$. Then, from the Cauchy-Lipschitz
theorem,  system (\ref{swv_eq1}) has a unique maximal solution. From
the ecological point of view, since the variables  of system
(\ref{swv_eq1}) represent biomasses,   each variable must stay
nonnegative and must be bounded during the time evolution (i.e., the
system is said to be biologically well-posed). Note that a solution
with initial conditions in $\R^{2}_{+}$ stays in
$\R^{2}_{+}$ since it can not cut the $y$-axis (vertical null
line) and the $x$-axis (horizontal null line).

In the case of
competition of tree biomass on grass biomass, i.e.
$\eta_{TG}(\textbf{W})\geq0$, we define the subset of solutions
\begin{displaymath}
\Gamma_{\eta_{TG}(\textbf{W})\geq0}=\left\{(G,T)'\in \R^{2}_{+}
: G\leq K_{G}(\textbf{W}), T\leq K_{T}(\textbf{W})\right\}.
\end{displaymath}

In the case of facilitation of tree biomass on grass
biomass, i.e. $\eta_{TG}(\textbf{W})<0$, we consider the subset of solutions
\begin{displaymath}
\Gamma_{\eta_{TG}(\textbf{W})<0}=\left\{(G,T)'\in \R^{2}_{+}
: G\leq
K_{G}(\textbf{W})\times\dfrac{\dfrac{\gamma_{G}\textbf{W}}{b_{G}+\textbf{W}}-\eta_{TG}(\textbf{W})K_T(\textbf{W})}{\dfrac{\gamma_{G}\textbf{W}}{b_{G}+\textbf{W}}},
T\leq K_{T}(\textbf{W})\right\}.
\end{displaymath}

It is straightforward to verify that the subsets $\Gamma_{\eta_{TG}(\textbf{W})\geq0}$ and $\Gamma_{\eta_{TG}(\textbf{W})<0}$ are positively invariant with respect to system (\ref{swv_eq1}). It means that any solutions of (\ref{swv_eq1}) starting in $\Gamma_{\eta_{TG}(\textbf{W})\geq0}$ or $\Gamma_{\eta_{TG}(\textbf{W})<0}$ will remain inside. In other words, any solutions initiated in $\Gamma_{\eta_{TG}(\textbf{W})\geq0}$ or $\Gamma_{\eta_{TG}(\textbf{W})<0}$ will stay nonnegative and bounded.

\subsection{Existence of equilibria}
System (\ref{swv_eq1}) always has the following trivial equilibria: a bare soil equilibrium, i.e. desert, $\textbf{E}_{0}=(0,0)'$; a forest equilibrium $\textbf{E}_{F}=(0,T^{*})'$ which exists when $\mathcal{R}^{1}_{\textbf{W}}>1$; a grassland equilibrium $\textbf{E}_{G}=(G^{*},0)'$ which exists when $\mathcal{R}^{2}_{\textbf{W}}>1$. Existence of savanna equilibria $\textbf{E}_{S}=(G_{*},T_{*})'$ follows from Theorem \ref{al_thm1} in \ref{al_AppendixA}, page \pageref{al_AppendixA}.

\subsection{Stability analysis }
\subsubsection{Stability of equilibria }

In the case $\eta_{TG}(\textbf{W})\geq0$, system (\ref{swv_eq1}) is a planar, competitive and dissipative system. Hence, based on \citet[Theorem 2.2, page 35]{Smith2008}, we deduce that solutions of system (\ref{swv_eq1}) will always converge toward an equilibrium point. That is, no stable limit cycles may exist for system (\ref{swv_eq1}) when $\eta_{TG}(\textbf{W})\geq0$.
Recall that $\mathcal{Q}_{F}$, $\mathcal{R}_{F}$ and $\mathcal{R}_{G}$ are given by \eqref{swv_thresholds_F_G}, page \pageref{swv_thresholds_F_G}. Straightforward computations lead to the following Theorem
\ref{al_thm2} 
that deals with hyperbolic equilibria; that is, none of the eigenvalues of the Jacobian matrix computed at an equilibrium has a null real part (\citet[Definition 1.2.6]{Wiggins2003}). Hence, conclusions of Theorem \ref{al_thm2} follow from \citet[Theorem 1.2.5]{Wiggins2003} and its proof is omitted:
\begin{thm} (Stability of trivial equilibria: the hyperbolic case)
	\begin{itemize}
		\item[(1)] The desert equilibrium $\textbf{E}_{0}=(0, 0)'$ is locally asymptotically stable (LAS) in $\R^2_+$ when $\mathcal{R}^{1}_{\textbf{W}}<1$ and $\mathcal{R}^{2}_{\textbf{W}}<1$ while it is unstable whenever $\mathcal{R}^{1}_{\textbf{W}}>1$ or $\mathcal{R}^{2}_{\textbf{W}}>1$.
		\item[(2)] The grassland equilibrium $\textbf{E}_{G}=(G^{*},0)'$  is  LAS in $\R^2_+$ when $\mathcal{R}_{G}<1$ while it is unstable if $\mathcal{R}_{G}>1$.
		\item[(3)] \begin{description}
			\item[a. \textbf{Competition or Neutrality case}.] When $\eta_{TG}(\textbf{W})\geq0$, the forest equilibrium  $\textbf{E}_{F}=(0, T^{*})'$  is  LAS in $\R^2_+$ whenever $\mathcal{R}_{F}<1$ while it is unstable when $\mathcal{R}_{F}>1$
			\item[b. \textbf{Facilitation case}.] When $\eta_{TG}(\textbf{W})<0$, the forest equilibrium  $\textbf{E}_{F}=(0, T^{*})'$  is  LAS in $\R^2_+$ whenever
			$\mathcal{Q}_{F}<1$ and it is unstable if $\mathcal{Q}_{F}>1$.
		\end{description}
	\end{itemize}
	
	\label{al_thm2}
\end{thm}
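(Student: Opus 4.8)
The plan is to establish all three statements by linearisation, exploiting that every equilibrium under consideration is hyperbolic on the open threshold regions specified, so that its local stability is governed by the signs of the real parts of the Jacobian eigenvalues via the Hartman--Grobman theorem (\citet[Theorem 1.2.5]{Wiggins2003}). Writing the right-hand side of (\ref{swv_eq1}) as $(F_1(G,T),F_2(G,T))'$, I would first record the Jacobian
\[
J(G,T)=\begin{pmatrix}
\dfrac{\partial F_1}{\partial G} & -\eta_{TG}(\textbf{W})G\\[2mm]
-f\vartheta(T)\omega'(G)T & \dfrac{\partial F_2}{\partial T}
\end{pmatrix},
\]
and note three structural facts that make every evaluation triangular and hence trivially diagonalisable for eigenvalue-reading: $\omega(0)=0$ and $\omega'(0)=0$ for $\omega$ as in (\ref{omega_fction}), and $\vartheta(0)=\lambda_{fT}^{max}$ for $\vartheta$ as in (\ref{theta_fction}).

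Next I would evaluate $J$ at each trivial equilibrium. At $\textbf{E}_{0}=(0,0)'$ both off-diagonal entries vanish (the $(2,1)$ entry because $T=0$), leaving the diagonal entries $g_{G}(\textbf{W})-\delta_{G}-\lambda_{fG}(\textbf{W})f=(\delta_{G}+\lambda_{fG}(\textbf{W})f)(\mathcal{R}^{2}_{\textbf{W}}-1)$ and $g_{T}(\textbf{W})-\delta_{T}=\delta_{T}(\mathcal{R}^{1}_{\textbf{W}}-1)$; requiring both negative (resp. at least one positive) yields statement (1). At $\textbf{E}_{G}=(G^{*},0)'$ the $(2,1)$ entry again vanishes, so $J$ is triangular; using the grassland relation $g_{G}(\textbf{W})(1-G^{*}/K_{G}(\textbf{W}))=\delta_{G}+\lambda_{fG}(\textbf{W})f$ the $(1,1)$ entry simplifies to $-g_{G}(\textbf{W})G^{*}/K_{G}(\textbf{W})<0$, while the $(2,2)$ entry is $g_{T}(\textbf{W})-\delta_{T}-f\lambda_{fT}^{max}\omega(G^{*})=(\delta_{T}+\lambda_{fT}^{max}f\omega(G^{*}))(\mathcal{R}_{G}-1)$, giving statement (2). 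At $\textbf{E}_{F}=(0,T^{*})'$ the $(1,2)$ entry vanishes ($G=0$) and the $(2,1)$ entry vanishes ($\omega'(0)=0$), so $J$ is diagonal; the forest relation $g_{T}(\textbf{W})(1-T^{*}/K_{T}(\textbf{W}))=\delta_{T}$ makes the $(2,2)$ entry $-g_{T}(\textbf{W})T^{*}/K_{T}(\textbf{W})<0$, while the $(1,1)$ entry is $g_{G}(\textbf{W})-\eta_{TG}(\textbf{W})T^{*}-\delta_{G}-\lambda_{fG}(\textbf{W})f$.

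The final step is the sign translation of this single remaining $(1,1)$ eigenvalue at $\textbf{E}_{F}$, which is where the competition/facilitation split enters. When $\eta_{TG}(\textbf{W})\geq0$ the denominator $\eta_{TG}(\textbf{W})T^{*}+\delta_{G}+\lambda_{fG}(\textbf{W})f$ is positive, so the entry equals $g_{G}(\textbf{W})(1-1/\mathcal{R}_{F})$ and is negative exactly when $\mathcal{R}_{F}<1$ (statement 3a); when $\eta_{TG}(\textbf{W})<0$ one instead factors it as $(\delta_{G}+\lambda_{fG}(\textbf{W})f)(\mathcal{Q}_{F}-1)$, negative exactly when $\mathcal{Q}_{F}<1$ (statement 3b). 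I do not expect a genuine obstacle here, since the computation is routine once the vanishing of the off-diagonal terms is observed; the only points demanding care are bookkeeping ones: substituting each equilibrium relation before reading off the diagonal, and selecting the form ($\mathcal{R}_{F}$ versus $\mathcal{Q}_{F}$) whose denominator is guaranteed positive in the relevant sign regime so that the threshold is well defined. Finally, the hyperbolicity hypothesis, namely that no eigenvalue has zero real part, is precisely what licenses the appeal to \citet[Theorem 1.2.5]{Wiggins2003} and holds throughout the open threshold sets away from the boundary cases where a threshold equals $1$.
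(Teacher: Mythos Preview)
Your proposal is correct and follows exactly the approach the paper intends: the paper explicitly omits the proof, noting that the conclusions follow from straightforward Jacobian computations together with \citet[Theorem 1.2.5]{Wiggins2003}, and your write-up supplies precisely those computations. Your observation that the Jacobian is triangular (or diagonal) at each trivial equilibrium, via $\omega(0)=\omega'(0)=0$ and $\vartheta(0)=\lambda_{fT}^{max}$, together with the equilibrium relations to simplify the diagonal entries, is the routine calculation the authors had in mind.
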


In Theorem \ref{al_thm2}, the threshold $\mathcal{R}^{1}_{\textbf{W}}$,  $\mathcal{R}^{2}_{\textbf{W}}$, $\mathcal{R}_{G}$, $\mathcal{R}_{F}$ or $\mathcal{Q}_{F}$ is either lower or greater than one. However, from a direct computation of Jacobian matrix at $\textbf{E}_{0}$, $\textbf{E}_{F}$ or $\textbf{E}_{G}$ one deduces that if any the previous thresholds is equal to one then the corresponding equilibrium becomes non-hyperbolic. In that case, Theorem \ref{al_thm2-bis} is valid.

\begin{thm} (Stability of trivial equilibria: the non-hyperbolic case)
	\begin{itemize}
		\item[(1)] The desert equilibrium $\textbf{E}_{0}=(0, 0)'$ is LAS in $\R^2_+$ when ($\mathcal{R}^{1}_{\textbf{W}}<1$ and $\mathcal{R}^{2}_{\textbf{W}}=1$), ($\mathcal{R}^{1}_{\textbf{W}}=1$ and $\mathcal{R}^{2}_{\textbf{W}}=1$) or ($\mathcal{R}^{1}_{\textbf{W}}=1$ and $\mathcal{R}^{2}_{\textbf{W}}<1$).
		\item[(2)] The grassland equilibrium $\textbf{E}_{G}=(G^{*},0)'$  is  LAS in $\R^2_+$ when $\mathcal{R}_{G}=1$.
		\item[(3)] 
		\begin{description}
			\item[a. \textbf{Competition or Neutrality case}.] When $\eta_{TG}(\textbf{W})\geq0$, the forest equilibrium  $\textbf{E}_{F}=(0, T^{*})'$  is  LAS in $\R^2_+$ whenever $\mathcal{R}_{F}=1$.
			\item[b. \textbf{Facilitation case}.] When $\eta_{TG}(\textbf{W})<0$, the forest equilibrium  $\textbf{E}_{F}=(0, T^{*})'$  is  LAS in $\R^2_+$ whenever
			$\mathcal{Q}_{F}=1$.
		\end{description}
	\end{itemize}
	\label{al_thm2-bis}
\end{thm}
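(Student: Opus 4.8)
The plan is to treat each non-hyperbolic equilibrium by center manifold reduction (\citet{Wiggins2003}), since at the threshold value the linearization acquires a zero eigenvalue and \citet[Theorem 1.2.5]{Wiggins2003} no longer applies. First I would record the Jacobians. At $\textbf{E}_0$ the matrix is diagonal with entries $(\delta_G+\lambda_{fG}(\textbf{W})f)(\mathcal{R}^2_{\textbf{W}}-1)$ and $\delta_T(\mathcal{R}^1_{\textbf{W}}-1)$, so the vanishing eigenvalue sits along whichever axis is at threshold. At $\textbf{E}_F$ the matrix is again diagonal, because $\omega'(0)=0$ kills the off-diagonal coupling, with entries $(\delta_G+\lambda_{fG}(\textbf{W})f)(\mathcal{Q}_F-1)$ in facilitation or $(\eta_{TG}(\textbf{W})T^*+\delta_G+\lambda_{fG}(\textbf{W})f)(\mathcal{R}_F-1)$ in competition, together with the strictly negative entry $-g_T(\textbf{W})T^*/K_T(\textbf{W})$. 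At $\textbf{E}_G$ the matrix is upper triangular, with a strictly negative entry $-g_G(\textbf{W})G^*/K_G(\textbf{W})$ along the invariant $G$-axis and a center entry $(\delta_T+\lambda_{fT}^{max}f\omega(G^*))(\mathcal{R}_G-1)$. Every case thus reduces to a one-dimensional flow on a center manifold, except the fully degenerate origin where both thresholds equal one.

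The reduction is clean and uniform for $\textbf{E}_F$ and for the single-zero origin. For $\textbf{E}_F$ the center direction is the $G$-axis, so the center manifold is a graph $T=T^*+\psi(G)$ with $\psi(0)=\psi'(0)=0$; substituting into the $G$-equation gives $\dot G=-\tfrac{g_G(\textbf{W})}{K_G(\textbf{W})}G^2+O(G^3)$, whose coefficient is manifestly negative and, crucially, independent of $\eta_{TG}(\textbf{W})$, settling both (3a) and (3b). For the two single-zero cases of the origin, the coordinate axes $\{T=0\}$ and $\{G=0\}$ are invariant and tangent to the center eigenspace, so each is itself a center manifold; using $\omega(0)=0$ the reduced flows are $\dot G=-\tfrac{g_G(\textbf{W})}{K_G(\textbf{W})}G^2$ and $\dot T=-\tfrac{g_T(\textbf{W})}{K_T(\textbf{W})}T^2$, both strictly decreasing for positive argument, so by the reduction principle LAS from the physical side follows.

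The doubly degenerate origin $\mathcal{R}^1_{\textbf{W}}=\mathcal{R}^2_{\textbf{W}}=1$ has a two-dimensional center and I would handle it by a Lyapunov/trapping argument. In competition $V=G+T$ is a strict Lyapunov function, since $\dot G\le 0$ and $\dot T\le 0$ near the origin with equality only on the axes. In facilitation this fails, because the cross term $-\eta_{TG}(\textbf{W})TG>0$ can make the associated quadratic form indefinite; there I would instead use that $\dot T\le 0$ keeps $T$ non-increasing, hence $T(t)\le T_0$, and then rewrite $\dot G=G\big(-\tfrac{g_G(\textbf{W})}{K_G(\textbf{W})}G-\eta_{TG}(\textbf{W})T\big)$ to confine $G$ to $\max\{G_0,\,-\eta_{TG}(\textbf{W})K_G(\textbf{W})T_0/g_G(\textbf{W})\}$, which gives stability; letting $T\to 0$ then forces $G\to 0$ and delivers attraction.

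The genuine obstacle is $\textbf{E}_G$, whose center eigenvector $\big(-\eta_{TG}(\textbf{W})K_G(\textbf{W})/g_G(\textbf{W}),\,1\big)$ is transverse to both axes. The center manifold $G=G^*+h(T)$ then has nonzero slope $h'(0)=-\eta_{TG}(\textbf{W})K_G(\textbf{W})/g_G(\textbf{W})$, and the reduced flow is $\dot T=c_1T^2+O(T^3)$ with $c_1=-\tfrac{g_T(\textbf{W})}{K_T(\textbf{W})}+f(\lambda_{fT}^{max}-\lambda_{fT}^{min})p\,\omega(G^*)+f\lambda_{fT}^{max}\omega'(G^*)h'(0)$. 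LAS from the physical side $T\ge 0$ is exactly $c_1<0$, and establishing this is the crux: the three contributions have no obviously common sign, the more so under competition where the last term is positive. I expect to pin the sign down through the transcritical structure at $\mathcal{R}_G=1$: a branch of savanna equilibria emanates from $\textbf{E}_G$, and since the per-capita tree growth at $\textbf{E}_G$ equals $g_T(\textbf{W})-\delta_T-f\lambda_{fT}^{max}\omega(G^*)$, which is positive iff $\mathcal{R}_G>1$, that branch lies in $\{\mathcal{R}_G>1\}$; exchange of stability against the hyperbolic verdict of Theorem \ref{al_thm2} (that $\textbf{E}_G$ is stable for $\mathcal{R}_G<1$ and unstable for $\mathcal{R}_G>1$) then forces $c_1<0$. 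Verifying this branching direction rigorously from the existence conditions of Theorem \ref{al_thm1} is where I anticipate the real difficulty.
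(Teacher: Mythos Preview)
Your approach via center manifold reduction matches the paper's, but you carry it much further. The paper only treats part~(1) explicitly: it computes the Jacobian at $\textbf{E}_0$, observes that the coordinate axes are invariant and tangent to the eigenspaces, and checks that $\dot G<0$ on $\{T=0\}$ (resp.\ $\dot T<0$ on $\{G=0\}$). For the doubly degenerate sub-case it simply notes that the Jacobian is zero, that the flow on each coordinate axis points toward the origin, and concludes. Parts~(2) and~(3) are dismissed with ``done in the same way''.

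You are more careful on two fronts. First, for the doubly degenerate origin you supply an actual Lyapunov/trapping argument (splitting competition from facilitation), whereas the paper's check on the two invariant axes does not by itself establish attraction from the interior of $\R^2_+$. Second, and more importantly, you correctly identify that $\textbf{E}_G$ is \emph{not} handled ``the same way'': the center eigenvector at $\textbf{E}_G$ is oblique whenever $\eta_{TG}(\textbf{W})\neq0$, so the center manifold is not an invariant coordinate line, and the reduced quadratic coefficient $c_1$ has no manifest sign. The paper offers no argument for this case at all. Your proposed transcritical/exchange-of-stability route is the natural idea, and your flagging of it as the remaining difficulty is appropriate; the paper simply does not confront it. (One minor slip: expanding $-f\vartheta(T)\omega(G^*+h(T))$ contributes $-f\lambda_{fT}^{max}\omega'(G^*)h'(0)$ to $c_1$, not $+$; after substituting $h'(0)=-\eta_{TG}(\textbf{W})K_G(\textbf{W})/g_G(\textbf{W})$ this is indeed positive in competition, so your verbal conclusion survives even though the displayed formula has the wrong sign.)
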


\begin{proof}
	See \ref{proff-al_thm2-bis}, page \pageref{proff-al_thm2-bis}.
\end{proof}

\begin{rmq}
	The existence of $\textbf{E}_{F}$ or $\textbf{E}_{G}$ destabilizes the desert equilibrium $\textbf{E}_{0}$. Hence, there is no bistability between vegetation and bare soil.	
	\label{swv_rmq_1}
\end{rmq}

Let $\textbf{E}_{S}=(G_{*}, T_{*})'$ be a savanna equilibrium given
by Proposition \ref{proposition-competition}. 
 If there exist several savanna
equilibria, for each of them, we define the three following threshold:

\begin{equation}
\left\{
\begin{array}{l}
\mathcal{R}^{1}_{*}=\dfrac{-f\omega(G_{*})T_{*}\vartheta^{'}(T_{*})}
{\left(\dfrac{g_{G}(\textbf{W})}{K_{G}(\textbf{W})}G_{*}+\dfrac{g_{T}(\textbf{W})}{K_{T}(\textbf{W})}T_{*}\right)},
\\
\mathcal{R}^{2}_{*}=\dfrac{\dfrac{g_{G}(\textbf{W})g_{T}(\textbf{W})}{K_{G}(\textbf{W})K_{T}(\textbf{W})}}
{\left(-f\dfrac{g_{G}(\textbf{W})}{K_{G}(\textbf{W})}\omega(G_*)\vartheta'(T_*)+f\eta_{TG}(\textbf{W})\vartheta(T_*)\omega'(G_*)
	\right)}, \quad \mbox{when} \quad \eta_{TG}(\textbf{W})\geq0,\\
\mathcal{Q}^{2}_{*}=\dfrac{\dfrac{g_{G}(\textbf{W})g_{T}(\textbf{W})}{K_{G}(\textbf{W})K_{T}(\textbf{W})}-f\eta_{TG}(\textbf{W})\vartheta(T_*)\omega'(G_*)}
{-f\dfrac{g_{G}(\textbf{W})}{K_{G}(\textbf{W})}\omega(G_*)\vartheta'(T_*)}, \quad \mbox{when} \quad \eta_{TG}(\textbf{W})<0.\\
\end{array}
\right.
\label{swv_thresholds_S}
\end{equation}

Concerning the stability of savanna equilibria, the following theorem holds:

\begin{thm} (Stability of the savanna equilibrium)
	\begin{description}
		\item[a. \textbf{Competition or Neutrality case}.] Assume that $\eta_{TG}(\textbf{W})\geq0$. Then, the savanna equilibrium $\textbf{E}_{S}=(G_{*}, T_{*})'$  is locally
		asymptotically stable whenever $\mathcal{R}^{1}_{*}<1$ and
		$\mathcal{R}^{2}_{*}>1$.
		\item[b. \textbf{Facilitation case}.] Assume that $\eta_{TG}(\textbf{W})<0$. Then, the savanna equilibrium $\textbf{E}_{S}=(G_{*}, T_{*})'$  is locally
		asymptotically stable whenever $\mathcal{R}^{1}_{*}<1$ and
		$\mathcal{Q}^{2}_{*}>1$.
	\end{description}
	\label{al_thm3}
\end{thm}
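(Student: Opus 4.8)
The plan is to establish local asymptotic stability by linearisation, invoking the Routh--Hurwitz criterion for planar systems: the equilibrium $\textbf{E}_{S}$ is LAS as soon as the Jacobian evaluated there has negative trace and positive determinant. First I would write the Jacobian $J$ of the right-hand side of (\ref{swv_eq1}) at a generic point $(G,T)'$ and then evaluate it at $\textbf{E}_{S}=(G_{*},T_{*})'$. The decisive simplification comes from substituting the equilibrium relations (\ref{app_eq1-bis100}): the two diagonal entries initially carry the full logistic and loss expressions, but the bracketed equilibrium conditions vanish, leaving only the derivative-of-logistic and derivative-of-fire contributions. A direct computation yields
\begin{equation*}
J(\textbf{E}_{S})=
\begin{pmatrix}
-\dfrac{g_{G}(\textbf{W})}{K_{G}(\textbf{W})}\,G_{*} & -\eta_{TG}(\textbf{W})\,G_{*}\\[1mm]
-f\,\vartheta(T_{*})\omega'(G_{*})\,T_{*} & -\dfrac{g_{T}(\textbf{W})}{K_{T}(\textbf{W})}\,T_{*}-f\,\vartheta'(T_{*})\omega(G_{*})\,T_{*}
\end{pmatrix}.
\end{equation*}

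Next I would analyse the trace. Since $\vartheta$ from (\ref{theta_fction}) is strictly decreasing, one has $\vartheta'(T_{*})<0$, so the term $-f\,\vartheta'(T_{*})\omega(G_{*})\,T_{*}$ is positive; the trace is thus a competition between the two negative logistic-saturation terms and this single positive term. Dividing the positive quantity $-f\,\omega(G_{*})T_{*}\vartheta'(T_{*})$ by $\frac{g_{G}(\textbf{W})}{K_{G}(\textbf{W})}G_{*}+\frac{g_{T}(\textbf{W})}{K_{T}(\textbf{W})}T_{*}$ shows that $\mathrm{tr}\,J(\textbf{E}_{S})<0$ is \emph{exactly} equivalent to $\mathcal{R}^{1}_{*}<1$, which is the first condition in both parts of the theorem.

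The determinant is where the two regimes separate. Writing $\det J(\textbf{E}_{S})=G_{*}T_{*}\,D$ with
\begin{equation*}
D=\frac{g_{G}(\textbf{W})g_{T}(\textbf{W})}{K_{G}(\textbf{W})K_{T}(\textbf{W})}
+f\frac{g_{G}(\textbf{W})}{K_{G}(\textbf{W})}\vartheta'(T_{*})\omega(G_{*})
-f\eta_{TG}(\textbf{W})\vartheta(T_{*})\omega'(G_{*}),
\end{equation*}
and using $G_{*}T_{*}>0$, positivity of the determinant reduces to $D>0$. Here I would exploit the signs $\vartheta'(T_{*})<0$ and $\omega'(G_{*})>0$ (the latter from (\ref{omega_fction})). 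In the competition/neutral case $\eta_{TG}(\textbf{W})\geq0$ the cross term has a fixed sign and the denominator appearing in $\mathcal{R}^{2}_{*}$ is positive, so the inequality rearranges cleanly to $D>0\iff\mathcal{R}^{2}_{*}>1$. In the facilitation case $\eta_{TG}(\textbf{W})<0$ that same denominator may change sign, which is precisely why $\mathcal{Q}^{2}_{*}$ is instead normalised by the manifestly positive quantity $-f\frac{g_{G}(\textbf{W})}{K_{G}(\textbf{W})}\omega(G_{*})\vartheta'(T_{*})$; rearranging $D>0$ against this positive denominator gives $D>0\iff\mathcal{Q}^{2}_{*}>1$. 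Combining the trace and determinant conditions delivers LAS in each part.

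The main obstacle I anticipate is not conceptual but careful sign bookkeeping through the determinant expansion, and in particular recognising that the sign of $\eta_{TG}(\textbf{W})$ forces the two distinct normalisations ($\mathcal{R}^{2}_{*}$ versus $\mathcal{Q}^{2}_{*}$), engineered so that each threshold crosses $1$ exactly when $\det J(\textbf{E}_{S})$ crosses $0$. I would also note that Theorem \ref{al_thm2} already rules out stable limit cycles when $\eta_{TG}(\textbf{W})\geq0$, so in that regime the negative-trace, positive-determinant conclusion is consistent with the global picture; in the facilitation regime no such monotone-system argument is available, which is consonant with the limit cycle flagged in Table \ref{swv_tab_2bis}.
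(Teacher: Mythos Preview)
Your proof is correct and follows essentially the same approach as the paper: compute the Jacobian at $\textbf{E}_{S}$, simplify the diagonal entries via the equilibrium relations (\ref{app_eq1-bis100}), then apply the Routh--Hurwitz criterion by showing $\mathrm{tr}\,J(\textbf{E}_{S})<0\iff\mathcal{R}^{1}_{*}<1$ and $\det J(\textbf{E}_{S})>0\iff\mathcal{R}^{2}_{*}>1$ (resp.\ $\mathcal{Q}^{2}_{*}>1$) depending on the sign of $\eta_{TG}(\textbf{W})$. Your additional remark explaining \emph{why} two distinct normalisations are needed---the denominator of $\mathcal{R}^{2}_{*}$ may change sign under facilitation---is a helpful clarification not made explicit in the paper; your closing aside on limit cycles is slightly misattributed (the monotone-system argument is invoked in the text preceding Theorem~\ref{al_thm2}, not in the theorem itself), but this does not affect the proof.
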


\begin{proof}
	See \ref{al_AppendixB}, page \pageref{al_AppendixB}.
\end{proof}

\begin{rmq}Multi-stability of savanna equilibria.\\
	It should be noted that several savanna equilibria may
	simultaneously verify requirements of Theorem \ref{al_thm3}. This
	case is the so-called multi-stability situations involving several
	savanna equilibria.
\end{rmq}

%

\subsubsection{Limit cycle and the Hopf bifurcation}
When $\eta_{TG}(\textbf{W})<0$, i.e. in the facilitation case, system (\ref{swv_eq1}) is a planar and dissipative system but it is no longer a competitive system. Hence,
Theorem \ref{Poicare-bendixson}, that ensures the existence of a limit
cycle, follows from the Poincar\'e-Bendixson theorem, see e.g.
\citet[Theorem 1.20]{Augier2010}.

\begin{thm}
	In the case where all equilibria of system (\ref{swv_eq1}) are
	unstable, then one of the following holds true:
	\begin{itemize}
		\item[(i)] Solutions of system (\ref{swv_eq1}) all converge
		toward a periodic solution.
		\item[(ii)] System (\ref{swv_eq1}) admits a limit cycle like homoclinic or
		heteroclinic cycle.
	\end{itemize}
	\label{Poicare-bendixson}
\end{thm}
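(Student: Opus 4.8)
The plan is to apply the Poincaré--Bendixson theorem to the restriction of system (\ref{swv_eq1}) to the compact positively invariant region $\Gamma_{\eta_{TG}(\textbf{W})<0}$ introduced in \ref{section3}. First I would collect the structural facts already at hand in the facilitation case ($\eta_{TG}(\textbf{W})<0$): the vector field is $\mathcal{C}^{1}$ on $\R^{2}$, the set $\Gamma_{\eta_{TG}(\textbf{W})<0}$ is positively invariant, and, thanks to the explicit upper bounds on $G$ and $T$ defining it, it is compact. Consequently every forward trajectory issued from $\Gamma_{\eta_{TG}(\textbf{W})<0}$ is defined for all $t\geq 0$ and stays bounded, so its $\omega$-limit set $\Omega$ is nonempty, compact, connected and invariant. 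Dissipativity moreover ensures that $\Gamma_{\eta_{TG}(\textbf{W})<0}$ absorbs all orbits, so the analysis does not depend on the particular initial condition.

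Next I would invoke the Poincaré--Bendixson trichotomy (\citet[Theorem 1.20]{Augier2010}) for planar flows: for a bounded forward trajectory, $\Omega$ is either (a) a single equilibrium, (b) a periodic orbit, or (c) a finite union of equilibria together with homoclinic/heteroclinic orbits joining them (a cycle graph). The role of the hypothesis is precisely to discard alternative (a). Since by assumption every equilibrium of (\ref{swv_eq1}) is unstable, no equilibrium can attract a trajectory that does not already lie on its (at most one-dimensional) stable manifold; hence, off the union of these stable manifolds, $\Omega$ cannot reduce to a point. Eliminating (a) leaves exactly the two conclusions of the statement: alternative (b) yields case (i), convergence toward a genuine periodic solution, and alternative (c) yields case (ii), a homoclinic or heteroclinic cycle.

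To make the trichotomy applicable I would first check that (\ref{swv_eq1}) has only finitely many equilibria in $\Gamma_{\eta_{TG}(\textbf{W})<0}$, which follows from Theorem \ref{al_thm1} and Proposition \ref{proposition-competition} (at most five savanna equilibria, together with the trivial ones $\textbf{E}_{0}$, $\textbf{E}_{F}$, $\textbf{E}_{G}$). This guarantees that the cycle-graph alternative (c) is genuinely of homoclinic/heteroclinic type and that no pathological continuum of equilibria can occur, so that the omega-limit set is forced into one of the two listed forms.

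The hard part will be excluding alternative (a) rigorously rather than heuristically. In the plane ``unstable'' also covers saddle points, and a saddle does attract the trajectories on its stable manifold; so the clean formulation of case (i) (``solutions all converge toward a periodic solution'') really concerns trajectories starting off these measure-zero stable manifolds, or else requires arguing that a connected invariant $\Omega$ cannot coincide with a single saddle since an isolated saddle is not itself the $\omega$-limit set of a nearby non-stable-manifold orbit. Settling this point carefully, and deciding when $\Omega$ contains equilibria whether it is a single homoclinic loop, a heteroclinic polygon, or degenerates into case (i) rather than (ii), is where the argument demands the most care; the remaining verifications (invariance, compactness, finiteness of the equilibrium set) are routine given the results already established.
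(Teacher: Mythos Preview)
Your proposal is correct and takes essentially the same approach as the paper: both invoke the Poincar\'e--Bendixson theorem (the paper cites \citet[Theorem 1.20]{Augier2010} and says nothing more). Your write-up is in fact considerably more detailed than the paper's one-line proof, and your care about saddles and the finiteness of equilibria (via Theorem~\ref{al_thm1} and Proposition~\ref{proposition-competition}) fills in verifications the paper leaves implicit.
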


In the following, we deal with the case where $\eta_{TG}(\textbf{W})<0$ and a periodic solution bifurcates from a savanna equilibrium.
Assume that, for the savanna equilibrium point
$\textbf{E}_{S}=(G_{*}, T_{*})'$, one has 
\begin{equation}\label{condtion1}
\mathcal{Q}^2_*>1.
\end{equation}
Following Theorem \ref{al_thm3}-\textbf{b}, $\textbf{E}_{S}=(G_{*}, T_{*})'$ is LAS if, in addition to \eqref{condtion1}, one has $\mathcal{R}^1_*<1$. Therefore, even with \eqref{condtion1} satisfies, it can be concluded
that when $\eta_{TG}(\textbf{W})<0$, the savanna equilibrium point $\textbf{E}_{S}=(G_{*}, T_{*})'$
may lose its stability through a Hopf bifurcation under certain
parametric conditions. Considering the fire frequency $f$ as a bifurcation parameter,
one can compute the threshold value
$$f
=f_h=-\dfrac{\left(\dfrac{g_{G}(\textbf{W})}{K_{G}(\textbf{W})}G_{*}+\dfrac{g_{T}(\textbf{W})}{K_{T}(\textbf{W})}T_{*}\right)}
{\omega(G_{*})T_{*}\vartheta^{'}(T_{*})},$$ which satisfies
\begin{equation}\label{condition-trace}
\left.\mathcal{R}^1_*\right|_{f=f_h}=1.
\end{equation}
Assume also that the following condition holds true
\begin{equation}\label{condition-determinant}
\begin{array}{l}
\left.\mathcal{Q}^2_*\right|_{f=f_h}>1, \quad \mbox{when}\quad
\eta_{TG}(\textbf{W})<0.
\end{array}
\end{equation}
The transversality condition for the Hopf bifurcation is
\begin{equation}\label{condition-transversality}
\left.\dfrac{d}{df}(tr(J_*))\right|_{f=f_h}=-\omega(G_{*,f})T_{*,f}\vartheta'(T_{*,f})>0
\end{equation}
where $tr(J_*)$ is given by (\ref{trace-jacobien}), page
\pageref{trace-jacobien}, and $G_{*,f}$, $T_{*,f}$ indicate the
functionality of the components of the positive savanna equilibrium
$\textbf{E}_{S}=(G_{*}, T_{*})'$ with respect to the parameter $f$. 
\par

Hence, the savanna equilibrium $\textbf{E}_{S}=(G_{*}, T_{*})'$
loses its stability through the Hopf bifurcation when conditions
(\ref{condition-trace}) and (\ref{condition-determinant}) are
satisfied simultaneously. 
\par

Now we calculate the Lyapunov number to determine the nature
of Hopf-bifurcating periodic solutions. 

\begin{thm} (Hopf bifurcation)\label{Hopf-Lyapunov}\\
	Assume that the savanna equilibrium $\textbf{E}_{S}=(G_{*}, T_{*})'$
	exists and that requirements (\ref{condition-trace}) and
	(\ref{condition-determinant}) are satisfied. Hence, there exists a real number $\sigma$ such that, if $\sigma\neq0$, then
	a Hopf bifurcation occurs at $\textbf{E}_{S}=(G_{*}, T_{*})'$ for
	system (\ref{swv_eq1}) at the bifurcation value $f=f_h$. In
	particular
	\begin{itemize}
		\item[(i)] If $\sigma<0$ then, the savanna equilibrium $\textbf{E}_{S}=(G_{*}, T_{*})'$
		destabilizes through a supercritical Hopf bifurcation. That is, a unique stable limit cycle bifurcates
		from $\textbf{E}_{S}=(G_{*}, T_{*})'$.
		\item[(ii)] If $\sigma>0$ then, the Hopf bifurcation is
		subcritical. That is, a unique unstable limit cycle bifurcates
		from $\textbf{E}_{S}=(G_{*}, T_{*})'$.
	\end{itemize}
\end{thm}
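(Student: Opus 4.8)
The plan is to verify the hypotheses of the classical planar Hopf bifurcation theorem (see e.g. \citet{Wiggins2003}) at the parameter value $f=f_h$ and then to identify $\sigma$ with the first Lyapunov number, whose sign dictates the nature of the bifurcating cycle. First I would examine the Jacobian $J_*$ of system (\ref{swv_eq1}) at $\textbf{E}_{S}=(G_{*}, T_{*})'$. By construction, $f_h$ is defined so that condition (\ref{condition-trace}) holds, i.e. $\left.\mathcal{R}^1_*\right|_{f=f_h}=1$, which, upon inspecting expression (\ref{trace-jacobien}) for $tr(J_*)$, is equivalent to $tr(J_*)=0$. Condition (\ref{condition-determinant}) guarantees $\left.\mathcal{Q}^2_*\right|_{f=f_h}>1$, and since in the facilitation case LAS in Theorem \ref{al_thm3}-\textbf{b} requires $\mathcal{Q}^2_*>1$ together with $\mathcal{R}^1_*<1$, the definition of $\mathcal{Q}^2_*$ in \eqref{swv_thresholds_S} is precisely calibrated so that $\mathcal{Q}^2_*>1\iff\det(J_*)>0$. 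Consequently, at $f=f_h$ the eigenvalues of $J_*$ form a purely imaginary conjugate pair $\pm i\nu$ with $\nu=\sqrt{\det(J_*)}>0$, which is the necessary spectral condition for a Hopf bifurcation.

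Next I would establish transversality, namely that the common real part $\tfrac12\,tr(J_*)$ of the eigenvalue pair crosses zero with nonzero speed as $f$ passes through $f_h$; it suffices to show $\left.\frac{d}{df}\,tr(J_*)\right|_{f=f_h}\neq0$. This is exactly (\ref{condition-transversality}): differentiating $tr(J_*)$ and using that $G_{*,f},T_{*,f}$ depend smoothly on $f$ yields $-\omega(G_{*,f})T_{*,f}\vartheta'(T_{*,f})$, which is strictly positive because $\omega>0$, $T_{*,f}>0$, and $\vartheta'<0$ (as $\vartheta$ in \eqref{theta_fction} is strictly decreasing). Thus the spectral pair crosses the imaginary axis from the stable to the unstable side, confirming transversality.

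Having verified the spectral and transversality hypotheses, the classical theorem guarantees that a one-parameter family of periodic orbits bifurcates from $\textbf{E}_{S}$ at $f=f_h$ and that the stability and direction of this family are governed by the sign of the first Lyapunov number $\sigma$. To define $\sigma$ I would translate the equilibrium to the origin, apply a linear change of variables bringing $\left.J_*\right|_{f=f_h}$ into the canonical form $\begin{pmatrix}0 & -\nu \\ \nu & 0\end{pmatrix}$, and Taylor-expand the two right-hand sides of (\ref{swv_eq1}) to third order. Denoting by $\tilde f,\tilde g$ the resulting nonlinearities in these coordinates, $\sigma$ is given by the standard first-Lyapunov-coefficient formula
\begin{equation*}
\sigma=\frac{1}{16}\bigl(\tilde f_{xxx}+\tilde f_{xyy}+\tilde g_{xxy}+\tilde g_{yyy}\bigr)+\frac{1}{16\nu}\bigl[\tilde f_{xy}(\tilde f_{xx}+\tilde f_{yy})-\tilde g_{xy}(\tilde g_{xx}+\tilde g_{yy})-\tilde f_{xx}\tilde g_{xx}+\tilde f_{yy}\tilde g_{yy}\bigr],
\end{equation*}
all partial derivatives evaluated at the origin for $f=f_h$. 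When $\sigma<0$ the bifurcation is supercritical (a unique stable limit cycle), and when $\sigma>0$ it is subcritical (a unique unstable limit cycle), which is precisely the dichotomy (i)--(ii).

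The main obstacle I anticipate is the explicit evaluation of $\sigma$: it requires the second- and third-order partial derivatives of the logistic growth terms together with the compound nonlinearities $\omega(G)$, $\vartheta(T)$, $\eta_{TG}(\textbf{W})$ and $\lambda_{fG}(\textbf{W})$, all of which must be pushed through the linear normalizing transformation, producing a long but mechanical algebraic expression. Since the theorem only asserts the \emph{existence} of $\sigma$ and the correspondence between its sign and the bifurcation type, I would not attempt a closed-form simplification; it suffices to observe that $\sigma$ is a well-defined smooth function of the model parameters and that its generic non-vanishing ($\sigma\neq0$) excludes the degenerate higher-codimension case and places the system in the standard Hopf normal form.
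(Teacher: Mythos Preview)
Your proposal is correct and follows essentially the same route as the paper: verify that at $f=f_h$ the Jacobian has trace zero and positive determinant (purely imaginary pair), check transversality via (\ref{condition-transversality}), and then identify $\sigma$ with the first Lyapunov number whose sign decides super/subcriticality. The only cosmetic difference is that the paper computes $\sigma$ directly in the translated coordinates $x=G-G_*,\ y=T-T_*$ using the Andronov--Perko formula (\citet[p.~253]{Andronov1971}, \citet[p.~353]{Perko2001}), which takes the raw Taylor coefficients $a_{ij},b_{ij}$ of the two right-hand sides without first bringing the linear part to the canonical $\begin{pmatrix}0&-\nu\\ \nu&0\end{pmatrix}$ form; this spares you the normalizing change of variables you anticipate as the ``main obstacle'' and yields the same $\sigma$ up to a positive factor.
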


\begin{proof}
	See \ref{AppendixD}.
\end{proof}


\begin{rmq}(The case $f=0$)\\
	The particular case where there is no fires in system (\ref{swv_eq1}); that is, when $f=0$, straightforward computations lead to the following conclusions:
	\begin{itemize}
		\item[(i)]  The unique savanna equilibrium $\textbf{E}_{S}=(G_{*}, T_{*})'$ is such that $T_*=T^*$ and \begin{itemize}
			\item[(a)] when $\eta_{TG}(\textbf{W})\geq0$, $G_*=K_G(\textbf{W})\left(1-\dfrac{1}{\mathcal{R}_{F,f=0}}\right)$ where $\mathcal{R}_{F,f=0}$ is computed from $\mathcal{R}_{F}$ with $f=0$.
			\item[(b)] when $\eta_{TG}(\textbf{W})<0$, $G_*=\dfrac{\delta_GK_G(\textbf{W})}{g_G(\textbf{W})}\left(\mathcal{Q}_{F,f=0}-1\right)$ where $\mathcal{Q}_{F,f=0}$ is computed from $\mathcal{Q}_{F}$ with $f=0$.
		\end{itemize}
		\item[(ii)] The threshold $\mathcal{R}_{G}$ is such that 
		$\mathcal{R}_{G}=\mathcal{R}_{\textbf{W}}^1$. 
		\item[(iii)]Grassland-forest, grassland-savanna and forest-savanna bistabilies can not occur. 
		\item[(iv)] The function $B(T,G)=\dfrac{1}{TG}$ is a Dulac's function for system (\ref{swv_eq1}). Hence, system (\ref{swv_eq1}) does not admit a closed orbit such that periodic solutions, homoclinic or heteroclinic cycles. 
	\end{itemize}
\end{rmq}

\comment{
\section{Existence of a savanna equilibria}
\label{al_AppendixA}

Let us set:
\begin{equation}\label{definition-a-b}
\begin{array}{lcl}
a&=&\dfrac{g_{T}(\textbf{W})}{K_{T}(\textbf{W})}T^{*},  \\
b&=&\dfrac{g_{G}(\textbf{W})g_{T}(\textbf{W})}{\eta_{TG}(\textbf{W})K_{G}(\textbf{W})K_{T}(\textbf{W})}G^{*},\\
c&=&\dfrac{b}{G^{*}},\\ 
d&=&f\lambda_{fT}^{min},\\
\lambda&=&f(\lambda_{fT}^{max}-\lambda_{fT}^{min})\times
e^{-p\dfrac{g_{G}(\textbf{W})G^{*}}{\eta_{TG}(\textbf{W})K_{G}(\textbf{W})}},\\
\alpha_0&=&p\dfrac{g_{G}(\textbf{W})}{\eta_{TG}(\textbf{W})K_{G}(\textbf{W})}
\end{array}
\end{equation}

where $T^*$ and $G^{*}$ are given by (\ref{swv_T_G}).

The existence of positive savanna equilibria is given in Theorem \ref{al_thm1}.
\begin{thm} (Existence of savanna equilibria)\\
	A savanna equilibrium $\textbf{E}_{S}=(G_{*},T_{*})$ satisfies
	
	\begin{equation}
	\left\{
	\begin{array}{lcl}
	g_{G}(\textbf{W})\left(1-\displaystyle\frac{G_{*}}{K_{G}(\textbf{W})}\right)-(\delta_{G}+\lambda_{fG}f)-\eta_{TG}(\textbf{W})T_{*}=0,\\
	\\
	g_{T}(\textbf{W})\left(1-\displaystyle\frac{T_{*}}{K_{T}(\textbf{W})}\right)-\delta_{T}-f\vartheta(T_{*})\omega(G_{*})=0.\\
	\end{array}
	\right.
	\label{app_eq1-bis}
	\end{equation}
	
	Using the first equation of (\ref{app_eq1-bis}), we have
	
	\begin{equation}
	T_{*}=\dfrac{g_{G}(\textbf{W})}{\eta_{TG}(\textbf{W})K_{G}(\textbf{W})}(G^{*}-G_{*}).
	\label{app_eq2-bis}
	\end{equation}
	From (\ref{app_eq2-bis}) we deduce that,
	a condition to have a (positive) savanna equilibrium in the case $\eta_{TG}(\textbf{W})>0$ is:
	\begin{equation}
	G^{*}>G_{*}.
	\label{app_eq2*-bis}
	\end{equation}
	When $\eta_{TG}(\textbf{W})<0$, savanna equilibria are computed with positive $G_*$ such that $T_*$ is also positive.
	Substituting (\ref{app_eq2-bis}) in the second equation of (\ref{app_eq1-bis}) leads that $G_*$ must satisfy: 
	
	\begin{equation}
	cG_{*}^{3}-\lambda G_{*}^{2}e^{\alpha_0 G_{*}}
	+(a-b-d)G_{*}^{2}+c\alpha^{2}G_{*}+(a-b)\alpha^{2}=0.
	\label{app_eq7-bis}
	\end{equation}
	
	Table \ref{swv_tab_1} summarizes the conditions of existence of positive solutions $G_*$ of (\ref{app_eq7-bis}), when $\eta_{TG}(\textbf{W})>0$, and that verify (\ref{app_eq2*-bis}). Hence, its summarizes the conditions of existence of savanna equilibria in the case of tree biomass vs. grass biomass competition.    
	
	\begin{table}[H]
		\begin{center}
			\renewcommand{\arraystretch}{1}
			\begin{tabular}{|c|c|c|c|c|}
				\hline
				$\eta_{TG}(\textbf{W})$   &   $c-\lambda\alpha_0$   & $a-b-d-\lambda$ & $a-b$ & Number of savanna equilibria \\
				\hline
				&    \multirow{2}{1cm}{$<0$} & \multirow{2}{1cm}{$<0$} & $<0$ & $0$, $1$ or $2$\\
				\cline{4-5}
				&    & &$>0$ & $0$ or $1$\\
				\cline{3-5}
				&    & $>0$ &$>0$ & $0$ or $1$\\
				\cline{2-5}
				$>0$   &    \multirow{4}{1cm}{$>0$} &\multirow{2}{1cm}{$-$}& $<0$ & $0$, $1$ or $2$\\
				\cline{4-5}
				&    & &$>0$ & $0$ or $1$\\
				\cline{3-5}
				&    & $>0$& $>0$ & $0$ or $1$ \\
				\cline{3-5}
				&    & \multirow{2}{1cm}{$<0$} &$<0$ & $0$, $1$, $2$,  $3$ or $4$ \\
				\cline{4-5}
				&    &  &$>0$ & $0$, $1$, $2$ or  $3$ \\
				\hline
			\end{tabular}
		\end{center}
		\caption{Existence of savanna equilibria in the case of the tree biomass vs. grass biomass competition. ``$-$" stands for any value.}\label{swv_tab_1}
	\end{table}
	
	Table \ref{swv_tab_1bis1} summarizes the conditions of existence of positive solutions $G_*$ of (\ref{app_eq7-bis}), when $\eta_{TG}(\textbf{W})<0$, and that are such that $T_*>0$ (see (\ref{app_eq2-bis})). Hence, its summarizes the conditions of existence of savanna equilibria in the case of tree biomass vs. grass biomass facilitation.    
	
	\begin{table}[H]
		\begin{center}
			\renewcommand{\arraystretch}{1}
			\begin{tabular}{|c|c|c|c|c|}
				\hline
				$\eta_{TG}(\textbf{W})$   &   $c-\lambda\alpha_0$   & $a-b-d-\lambda$ & $a-b$ & Number of savanna equilibria \\
				\hline
				&    \multirow{2}{1cm}{$<0$} & \multirow{2}{1cm}{$<0$} & $<0$ & $0$, $1$ or $2$\\
				\cline{4-5}
				&    & &$>0$ & $0$, $1$, $2$ or $3$\\
				\cline{3-5}
				&    & $>0$ &$>0$ & $0$, $2$, $3$, $4$ or $5$\\
				\cline{2-5}
				$<0$   &    \multirow{4}{1cm}{$>0$} &\multirow{2}{1cm}{$-$}& $<0$ & $0$\\
				\cline{4-5}
				&    & &$>0$ & $0$ or $1$\\
				\cline{3-5}
				&    & $>0$& $>0$ & $0$, $1$, $2$ or $3$ \\
				\cline{3-5}
				&    & \multirow{2}{1cm}{$<0$} &$<0$ & $0$, $1$ or $2$\\
				\cline{4-5}
				&    &  &$>0$ & $0$, $1$, $2$ or  $3$ \\
				\cline{2-5}
				& \multirow{2}{1cm}{$-$}& $>0$ & $>0$ & $0$, $1$, $2$ or $3$\\
				\cline{3-5}
				& &       \multirow{2}{1cm}{$<0$} &$<0$ & $0$\\
				\cline{4-5}
				& & & $>0$ & $0$ or $1$\\
				\hline
			\end{tabular}
		\end{center}
		\caption{Existence of savanna equilibria in the case of tree biomass vs. grass biomass facilitation. ``$-$" stands for any value.}\label{swv_tab_1bis1}
	\end{table}   
	
	When $\eta_{TG}(\textbf{W})=0$, one has
	\begin{equation}
	\left\{
	\begin{array}{l}
	G_* = G^*, \\
	T^*-T_*-\displaystyle\frac{K_T(\textbf{W})}{g_T(\textbf{W})}f\vartheta(T_*)\omega(G^*)=0. 
	\end{array}
	\right.
	\end{equation}
	Let us set
	$$\begin{array}{l}
	u =\displaystyle\frac{K_T(\textbf{W})}{g_T(\textbf{W})}f\omega(G^*)\lambda_{fT}^{min}, \\
	v =\displaystyle\frac{K_T(\textbf{W})}{g_T(\textbf{W})}f\omega(G^*)(\lambda_{fT}^{max}-\lambda_{fT}^{min}),\\
	J(T)=T^*-T-u-ve^{-pT}.
	\end{array}$$
	Hence,
	\begin{enumerate}
		\item[1.] if $-1+pv>0$ then, there may exist $0$, $1$ or $2$ savanna equilibria.
		\item[2.] if $-1+pv\leq0$ then, there may exist $0$ or $1$ savanna equilibrium.
	\end{enumerate}
	\label{al_thm1}
\end{thm}

\begin{proof}
From system (\ref{swv_eq1}),  a savanna equilibrium $\textbf{E}_{S}=(G_{*},T_{*})$ satisfies

\begin{equation}
\left\{
\begin{array}{lcl}
g_{G}(\textbf{W})\left(1-\displaystyle\frac{G_{*}}{K_{G}(\textbf{W})}\right)-(\delta_{G}+\lambda_{fG}f)-\eta_{TG}(\textbf{W})T_{*}=0,\\
\\
g_{T}(\textbf{W})\left(1-\displaystyle\frac{T_{*}}{K_{T}(\textbf{W})}\right)-\delta_{T}-f\vartheta(T_{*})\omega(G_{*})=0.\\
\end{array}
\right.
\label{app_eq1}
\end{equation}

Using the first equation of (\ref{app_eq1}), we have

\begin{equation}
T_{*}=\dfrac{1}{\eta_{TG}(\textbf{W})}\left(g_{G}(\textbf{W})-(\delta_{G}+\lambda_{fG}f)-\dfrac{g_{G}(\textbf{W})}{K_{G}(\textbf{W})}G_{*}\right)=\dfrac{g_{G}(\textbf{W})}{\eta_{TG}(\textbf{W})K_{G}(\textbf{W})}(G^{*}-G_{*}).
\label{app_eq2}
\end{equation}

Substituting (\ref{app_eq2}) in the second equation of (\ref{app_eq1}) gives

\begin{equation}
\dfrac{(g_{T}(\textbf{W})-\delta_{T})-\dfrac{g_{G}(\textbf{W})g_{T}(\textbf{W})}{\eta_{TG}(\textbf{W})K_{G}(\textbf{W})K_{T}(\textbf{W})}(G^{*}-G_{*})}{\omega(G_{*})}=f\vartheta(T_{*}).
\label{app_eq3}
\end{equation}

From (\ref{app_eq3}), introducing the expression of $\omega(G)$, we have

\begin{equation}
\dfrac{\dfrac{g_{T}(\textbf{W})}{K_{T}(\textbf{W})}T^{*}-\dfrac{g_{G}(\textbf{W})g_{T}(\textbf{W})}{\eta_{TG}(\textbf{W})K_{G}(\textbf{W})K_{T}(\textbf{W})}G^{*}+\dfrac{g_{G}(\textbf{W})g_{T}(\textbf{W})}{\eta_{TG}(\textbf{W})K_{G}(\textbf{W})K_{T}(\textbf{W})}G_{*}}{\dfrac{G_{*}^{2}}{G_{*}^{2}+\alpha^{2}}}=f\vartheta(T_{*}),
\label{app_eq4}
\end{equation}
where

\begin{equation}
f\vartheta(T_{*})=f\lambda_{fT}^{min}+f(\lambda_{fT}^{max}-\lambda_{fT}^{min})\times
e^{-p\dfrac{g_{G}(\textbf{W})G^{*}}{\eta_{TG}(\textbf{W})K_{G}(\textbf{W})}}\times
e^{p\dfrac{g_{G}(\textbf{W})G_{*}}{\eta_{TG}(\textbf{W})K_{G}(\textbf{W})}}.
\label{app_eq5}
\end{equation}

From (\ref{app_eq4}) and (\ref{app_eq5}) we have:

\begin{equation}
(a-b+cG_{*})\left(1+\dfrac{\alpha^{2}}{G_{*}^{2}}\right)=d+\lambda
e^{\alpha_0 G_{*}}, \label{app_eq6}
\end{equation}
where,\\
$a=\dfrac{g_{T}(\textbf{W})}{K_{T}(\textbf{W})}T^{*}$,
$b=\dfrac{g_{G}(\textbf{W})g_{T}(\textbf{W})}{\eta_{TG}(\textbf{W})K_{G}(\textbf{W})K_{T}(\textbf{W})}G^{*}$,
$c=\dfrac{b}{G^{*}}$, $d=f\lambda_{fT}^{min}$,
$\lambda=f(\lambda_{fT}^{max}-\lambda_{fT}^{min})\times
e^{-p\dfrac{g_{G}(\textbf{W})G^{*}}{\eta_{TG}(\textbf{W})K_{G}(\textbf{W})}}$
and
$\alpha_0=p\dfrac{g_{G}(\textbf{W})}{\eta_{TG}(\textbf{W})K_{G}(\textbf{W})}$.\par

From equation (\ref{app_eq6}) we have

\begin{equation}
cG_{*}^{3}-\lambda G_{*}^{2}e^{\alpha_0 G_{*}}
+(a-b-d)G_{*}^{2}+c\alpha^{2}G_{*}+(a-b)\alpha^{2}=0.
\label{app_eq7}
\end{equation}

Set    $H(G_{*})=cG_{*}^{3}-\lambda G_{*}^{2}e^{\alpha_0 G_{*}}
+(a-b-d)G_{*}^{2}+c\alpha^{2}G_{*}+(a-b)\alpha^{2}$.  $H$ is a
function of one variable $G_{*}\in ]0,+\infty[$.  To find the number
of real positive roots of $H(G_{*})$, we will use the intermediate
values theorem  which is generally good for investigating real roots
of  differentiable and monotonous functions.
\par 
Below, we distinguish several cases.\\
\textbf{Case 1:} $\eta_{TG}(\textbf{W})>0$.\\
From the first equation of (\ref{app_eq1}) and from (\ref{app_eq2}) note that
one of the conditions to have a plausible savanna equilibrium is:
\begin{equation}
0<G^* \quad \mbox{and} \quad G^{*}>G_{*}.
\label{app_eq2*}
\end{equation}

In addition, we have

\begin{equation}
\left\{
\begin{array}{lcl}
\lim\limits_{G_{*}\longrightarrow 0}H(G_{*})=(a-b)\alpha^{2},\\
\lim\limits_{G_{*}\longrightarrow +\infty}H(G_{*})=-\infty.
\end{array}
\right.
\label{app_eq8}
\end{equation}

\comment{
\tcb{When $\eta_{TG}(\textbf{W})<0$ and $\mathcal{R}^2_{\textbf{W}}>1$, one has $(a-b)\alpha^{2}>0$. Hence,
based on the intermediate values theorem for continuous functions,
we deduce that there exists at least one positive zeros of $H$,
named $G_{*0}$. Therefore, there exists at least one savanna
equilibrium $\textbf{E}_{*0}=(G_{*0},T_{*0})$ whenever $G_{*0}\in]0,
G^*[$. Here and in the sequel, we recall that the tree's component
of stated savanna equilibrium or equilibria, is given by
(\ref{app_eq2}). For the rest of the proof, we assume that
$\eta_{TG}(\textbf{W})>0$.}
}

The derivative of $H$ is $H^{'}(G_{*})=3cG_{*}^{2}-\lambda(\alpha_0
G_{*}^{2}+2G_{*})e^{\alpha_0 G_{*}}+2(a-b-d)G_{*}+c\alpha^{2}.$ We
have

\begin{equation}
\left\{
\begin{array}{lcl}
\lim\limits_{G_{*}\longrightarrow 0}H^{'}(G_{*})=c\alpha^{2}>0,\\
\lim\limits_{G_{*}\longrightarrow +\infty}H^{'}(G_{*})=-\infty.
\end{array}
\right.
\label{app_eq9}
\end{equation}

Denote by $H^{(2)}$ the derivative of $H^{'}$. We have
$$H^{(2)}(G_{*})=6cG_{*}-\lambda(\alpha_0^{2} G_{*}^{2}+4\alpha_0
G_{*}+2)e^{\alpha_0 G_{*}}+2(a-b-d).$$ The limits of $H^{(2)}(G_{*})$
at $0$ and $+\infty$ are:

\begin{equation}
\left\{
\begin{array}{lcl}
\lim\limits_{G_{*}\longrightarrow 0}H^{(2)}(G_{*})=2(a-b-d-\lambda),\\
\lim\limits_{G_{*}\longrightarrow +\infty}H^{(2)}(G_{*})=-\infty.
\end{array}
\right.
\label{app_eq10}
\end{equation}

Denote by $H^{(3)}$ the derivative of $H^{(2)}$. We have
$$H^{(3)}(G_{*})=6c-\lambda(\alpha_0^{3} G_{*}^{2}+6\alpha_0^{2}
G_{*}+6\alpha_0)e^{\alpha_0 G_{*}}$$ and

\begin{equation}
\left\{
\begin{array}{lcl}
\lim\limits_{G_{*}\longrightarrow 0}H^{(3)}(G_{*})=6(c-\lambda\alpha_0),\\
\lim\limits_{G_{*}\longrightarrow +\infty}H^{(3)}(G_{*})=-\infty.
\end{array}
\right.
\label{app_eq11}
\end{equation}

We have $H^{(4)}(G_{*})=-\lambda(\alpha_0^{4}
G_{*}^{2}+8\alpha_0^{3} G_{*}+12\alpha_0)e^{\alpha_0 G_{*}}<0$. It
implies that $H^{(3)}$ decreases.

\begin{enumerate}
    \item[(I)] If $c-\lambda\alpha_0\leq0$, then $H^{(3)}\leq0$. It means that $H^{(2)}$ decreases.
    \begin{itemize}
        \item[1)] If $a-b-d-\lambda\leq0$, then $H^{(2)}\leq0$. It implies that $H^{'}$ decreases. Using (\ref{app_eq9}) and the intermediate values theorem, there exists a unique  $G_{*1}\in ]0,+\infty[$ such that $H^{'}(G_{*1})=0$.
        \begin{itemize}
            \item[a)] If $H(G_{*1})<0$, then there is no plausible savanna equilibrium.
            \item[b)] If $H(G_{*1})>0$ and $a>b$, then, at most, there exists a unique savanna equilibrium $\textbf{E}_{*}=(G_{*},T_{*})$
            whenever $G_{*1}< G^{*}$ and, such that $G_{*}\in]G_{*1}, G^{*}[$.
            \item[c)] If $H(G_{*1})>0$ and $a<b$, then, at most,  there are two savanna equilibria: $\textbf{E}^{1}_{*}=(G^{1}_{*},T^{1}_{*})$ and $\textbf{E}^{2}_{*}=(G^{2}_{*},T^{2}_{*})$
            whenever $G_{*1}< G^{*}$ and, such that $G^{1}_{*}\in]0, G_{*1}[$, $G^{2}_{*}\in]G_{*1}, G^{*}[$.
        \end{itemize}
        \item[2)] If $a-b-d-\lambda>0$, then using (\ref{app_eq10}) and the intermediate values theorem, there exists a unique
        $G_{*2}\in ]0,+\infty[$ such that $H^{(2)}(G_{*2})=0$. From (\ref{app_eq9}) we have $H^{'}(G_{*2})>0$.
        Then using (\ref{app_eq9}) and the intermediate values theorem, there exists a unique  $G_{*3}\in ]G_{*2},+\infty[$
        such that  $H^{'}(G_{*3})=0$. Similarly as in $1)$ we have the following results.
        \begin{itemize}
            \item[a)] If $H(G_{*3})<0$, then there is no plausible savanna equilibrium.
            \item[b)] If $H(G_{*3})>0$ and $a>b$, then, at most, there exists a unique savanna equilibrium $\textbf{E}_{**}=(G_{**},T_{**})$
            whenever $G_{*3}<G^{*}$ and, such that $G_{**}\in]G_{*3}, G^{*}[$.
            \item[c)] The remaining case is $H(G_{*3})>0$ and $a<b$. However it
            is unfeasible since $a-b-d-\lambda>0$.
        \end{itemize}
    \end{itemize}
    \item[(II)] If $c-\lambda\alpha_0>0$,  then using (\ref{app_eq11}) and the intermediate values theorem, there
    exists a unique  $\bar{G}_{*1}\in ]0,+\infty[$ such that $H^{(3)}(\bar{G}_{*1})=0$.
    \begin{itemize}
        \item[1)] If $H^{(2)}(\bar{G}_{*1})<0$, then $H^{(2)}(G_{*})<0$. It implies that $H^{'}$ decreases. Using (\ref{app_eq9}) and the intermediate values theorem, there exists a unique  $\bar{G}_{*2}\in ]0,+\infty[$ such that $H^{'}(\bar{G}_{*2})=0$.
        \begin{itemize}
            \item[a)] If $H(\bar{G}_{*2})<0$, then  there is no plausible savanna equilibrium.
            \item[b)] If $H(\bar{G}_{*2})>0$ and $a>b$, then, at most, there exists a unique savanna equilibrium $\bar{\textbf{E}}_{*}=(\bar{G}_{*},\bar{T}_{*})$
            whenever $\bar{G}_{*2}<G^{*}$ and, such that $\bar{G}_{*}\in]\bar{G}_{*2}, G^{*}[$.
            \item[c)] If $H(\bar{G}_{*2})>0$ and $a<b$, then, at most,  there are two savanna equilibria: $\bar{\textbf{E}}^{1}_{*}=(\bar{G}^{1}_{*},\bar{T}^{1}_{*})$ and $\bar{\textbf{E}}^{2}_{*}=(\bar{G}^{2}_{*},\bar{T}^{2}_{*})$
            whenever $\bar{G}_{*2}<G^{*}$ and, such that $\bar{G}^{1}_{*}\in]0, \bar{G}_{*2}[$ and $\bar{G}^{2}_{*}\in]\bar{G}_{*2}, G^{*}[$.
        \end{itemize}
        \item[2)] If $H^{(2)}(\bar{G}_{*1})>0$ and $a-b-d-\lambda>0$, then using (\ref{app_eq10}) and the intermediate values theorem, there exists a unique
         $\bar{G}_{*3}\in ]\bar{G}_{*1},+\infty[$ such that $H^{(2)}(\bar{G}_{*3})=0$.  Using (\ref{app_eq9}) there exists a unique $\bar{G}_{*4}\in ]\bar{G}_{*3},+\infty[$ such that $H^{'}(\bar{G}_{*4})=0$.
        \begin{itemize}
            \item[a)] If $H(\bar{G}_{*4})<0$, then  there is no plausible savanna equilibrium.
            \item[b)] If $H(\bar{G}_{*4})>0$ and $a>b$, then, at most, there exists a unique savanna equilibrium $\bar{\textbf{E}}_{**}=(\bar{G}_{**},\bar{T}_{**})$
            whenever $\bar{G}_{*4}<G^{*}$ and, such that $\bar{G}_{**}\in]\bar{G}_{*4}, G^{*}[$.
            \item[c)] The remaining case is $H(\bar{G}_{*4})>0$ and $a<b$. However it
            is unfeasible since $a-b-d-\lambda>0$.
        \end{itemize}
        \item[3)] If $H^{(2)}(\bar{G}_{*1})>0$ and $a-b-d-\lambda<0$, then using (\ref{app_eq10}) and the intermediate values theorem there are
         $\bar{G}_{*5}\in]0, \bar{G}_{*1}[$ and $\bar{G}_{*6}\in]\bar{G}_{*1}, +\infty[$ such that $H^{(2)}(\bar{G}_{*5})=0=H^{(2)}(\bar{G}_{*6})$.
        \begin{itemize}
            \item[a)] If $H^{'}(\bar{G}_{*5})>0$ and $H^{'}(\bar{G}_{*6})>0$, then using (\ref{app_eq9}) and the intermediate value theorem there exists a unique $\bar{G}_{*7}\in]\bar{G}_{*6}, +\infty[$ such that $H^{'}(\bar{G}_{*7})=0$.
            \begin{itemize}
                \item[1.] If  $H(\bar{G}_{*7})<0$, then there is no plausible savanna equilibrium.
                \item[2.] If $H(\bar{G}_{*7})>0$ and $a>b$, then, at most, there exists a unique savanna equilibrium $\bar{\textbf{E}}_{***}=(\bar{G}_{***},\bar{T}_{***})$
                whenever $\bar{G}_{*7}<G^{*}$ and, such that $\bar{G}_{***}\in]\bar{G}_{*7}, G^{*}[$.
                \item[3.] If $H(\bar{G}_{*7})>0$ and $a<b$, then, at most,  there are two savanna equilibria: $\bar{\textbf{E}}^{1}_{***}=(\bar{G}^{1}_{***},\bar{T}^{1}_{***})$ and $\bar{\textbf{E}}^{2}_{***}=(\bar{G}^{2}_{***},\bar{T}^{2}_{***})$
                whenever $\bar{G}_{*7}<G^{*}$ and, such that $\bar{G}^{1}_{***}\in]0, \bar{G}_{*7}[$, $\bar{G}^{2}_{***}\in]\bar{G}_{*7}, G^{*}[$.
            \end{itemize}

            \item[b)] If $H^{'}(\bar{G}_{*5})<0$ and $H^{'}(\bar{G}_{*6})>0$, then using (\ref{app_eq9}) and the intermediate value theorem
            there are $\bar{G}_{*8}\in]0, \bar{G}_{*5}[$, $\bar{G}_{*9}\in]\bar{G}_{*5}, \bar{G}_{*6}[$ and $\bar{G}_{*10}\in]\bar{G}_{*6}, +\infty[$
            such that
            $H^{'}(\bar{G}_{*8})=H^{'}(\bar{G}_{*9})=H^{'}(\bar{G}_{*10})=0$. Based on (\ref{app_eq8}), one deduces that $\bar{G}_{*8}$
            and $\bar{G}_{*10}$ are two local maxima while $\bar{G}_{*9}$ is a local minimum.
            Once more,  using (\ref{app_eq8}) and the intermediate values theorem we have:
            \begin{itemize}
                \item[1.] If
                $\max(H(\bar{G}_{*8},H(\bar{G}_{*10})))<0$, then
                there is no plausible savanna equilibrium.
                %
                \item[2.] If $H(\bar{G}_{*9})>0$ and $a<b$, then, at most, there exist two savanna equilibria: $\bar{\textbf{E}}^{1}_{****}=(\bar{G}^{1}_{****},\bar{T}^{1}_{****})$ and $\bar{\textbf{E}}^{2}_{****}=(\bar{G}^{2}_{****},\bar{T}^{2}_{****})$
                whenever $\bar{G}_{*10}<G^{*}$ and, such that $\bar{G}^{1}_{****}\in]0, \bar{G}_{*8}[$, $\bar{G}^{2}_{****}\in]\bar{G}_{*10}, G^{*}[$.
                %
                \item[3.] If $H(\bar{G}_{*9})>0$ and $a>b$, then, at most,  there is a unique savanna equilibrium  $\bar{\textbf{E}}_{****}=(\bar{G}_{****},\bar{T}_{****})$
                whenever $\bar{G}_{*10}<G^{*}$ and, such that $\bar{G}_{****}\in]\bar{G}_{*10}, G^{*}[$.
                %
                \item[4.] If $H(\bar{G}_{*8})<0$ and
                $H(\bar{G}_{*10})>0$, then, at most, there exist two savanna equilibria: $\bar{\textbf{E}}^{1}_{****}=(\bar{G}^{1}_{****},\bar{T}^{1}_{****})$ and $\bar{\textbf{E}}^{2}_{****}=(\bar{G}^{2}_{****},\bar{T}^{2}_{****})$
                whenever $\bar{G}_{*10}<G^{*}$ and, such that $\bar{G}^{1}_{****}\in]\bar{G}_{*9}, \bar{G}_{*10}[$, $\bar{G}^{2}_{****}\in]\bar{G}_{*10}, G^{*}[$.
                %
                \item[5.] If $H(\bar{G}_{*8})>0$, $H(\bar{G}_{*10})<0$ and
                $a<b$, then, at most, there exist two savanna equilibria: $\bar{\textbf{E}}^{1}_{****}=(\bar{G}^{1}_{****},\bar{T}^{1}_{****})$ and $\bar{\textbf{E}}^{2}_{****}=(\bar{G}^{2}_{****},\bar{T}^{2}_{****})$
                whenever $\bar{G}_{*9}<G^{*}$ and, such that $\bar{G}^{1}_{****}\in]0, \bar{G}_{*8}[$, $\bar{G}^{2}_{****}\in]\bar{G}_{*8}, \bar{G}_{*9}[$.
                %
                \item[6.] If $H(\bar{G}_{*8})>0$, $H(\bar{G}_{*10})<0$ and
                $a>b$, then, at most, there exist a unique savanna equilibrium: $\bar{\textbf{E}}_{****}=(\bar{G}_{****},\bar{T}_{****})$
                whenever $\bar{G}_{*9}<G^{*}$ and, such that $\bar{G}_{****}\in]\bar{G}_{*8}, \bar{G}_{*9}[$.
                \item[7.] If $\min(H(\bar{G}_{*8}),H(\bar{G}_{*10}))>0$, $H(\bar{G}_{*9})<0$ and
                $a<b$, then, at most, there are four savanna equilibria: $\bar{\textbf{E}}^{1}_{*****}=(\bar{G}^{1}_{*****},\bar{T}^{1}_{*****})$,
                $\bar{\textbf{E}}^{2}_{*****}=(\bar{G}^{2}_{****},\bar{T}^{2}_{*****})$,
                $\bar{\textbf{E}}^{3}_{*****}=(\bar{G}^{3}_{****},\bar{T}^{3}_{*****})$
                and  $\bar{\textbf{E}}^{4}_{*****}=(\bar{G}^{4}_{****},\bar{T}^{4}_{*****})$
                whenever $\bar{G}_{*10}<G^{*}$ and, such that $\bar{G}^{1}_{*****}\in]0, \bar{G}_{*8}[$, $\bar{G}^{2}_{****}\in]\bar{G}_{*8}, \bar{G}_{*9}[$,
                $\bar{G}^{3}_{****}\in]\bar{G}_{*9}, \bar{G}_{*10}[$ and $\bar{G}^{4}_{****}\in]\bar{G}_{*10}, G^{*}[$.
                \item[8.] If $\min(H(\bar{G}_{*8}),H(\bar{G}_{*10}))>0$, $H(\bar{G}_{*9})<0$ and
                $a>b$, then, at most, there are three savanna equilibria: $\bar{\textbf{E}}^{1}_{*****}=(\bar{G}^{1}_{*****},\bar{T}^{1}_{*****})$,
                $\bar{\textbf{E}}^{2}_{*****}=(\bar{G}^{2}_{****},\bar{T}^{2}_{*****})$
                and
                $\bar{\textbf{E}}^{3}_{*****}=(\bar{G}^{3}_{****},\bar{T}^{3}_{*****})$
                whenever $\bar{G}_{*10}<G^{*}$ and, such that $\bar{G}^{1}_{****}\in]\bar{G}_{*8}, \bar{G}_{*9}[$,
                $\bar{G}^{2}_{****}\in]\bar{G}_{*9}, \bar{G}_{*10}[$ and $\bar{G}^{3}_{****}\in]\bar{G}_{*10}, G^{*}[$.
            \end{itemize}
            \item[c)] If $H^{'}(\bar{G}_{*5})<0$ and $H^{'}(\bar{G}_{*6})<0$, then using (\ref{app_eq9}) and the
            intermediate values theorem there exists a unique $\bar{G}_{*11}\in]0, G_{*5}[$ such that $H^{'}(\bar{G}_{*11})=0.$ Using (\ref{app_eq8})     and the intermediate value theorem we have:
            \begin{itemize}
                \item[1.] If $H(\bar{G}_{*11})<0$, then  there is no plausible savanna equilibrium.
                \item[2.] If $H(\bar{G}_{*11})>0$ and $a>b$, then, at most, there exists a unique savanna equilibrium $\bar{\textbf{E}}=(\bar{G},\bar{T})$
                whenever $\bar{G}_{*11}<G^{*}$ and, such that $\bar{G}\in]\bar{G}_{*11}, G^{*}[$.
                \item[3.] If $H(\bar{G}_{*11})>0$ and $a<b$, then, at most,  there are two savanna equilibria: $\bar{\textbf{E}}^{1}=(\bar{G}^{1},\bar{T}^{1})$
                and $\bar{\textbf{E}}^{2}=(\bar{G}^{2},\bar{T}^{2})$
                whenever $\bar{G}_{*11}<G^{*}$ and, such that $\bar{G}^{1}\in]0, \bar{G}_{*11}[$ and $\bar{G}^{2}\in]\bar{G}_{*11},
                G^{*}[$.
            \end{itemize}
        \end{itemize}
    \end{itemize}
\end{enumerate}
This ends the case $\eta_{TG}(\textbf{W})>0$ or the competition case. In the sequel, we assume that $\eta_{TG}(\textbf{W})<0$; that is the facilitation case. 
\comment{
From the first equation of (\ref{app_eq1}),  $G^*>0$ is not a necessary condition to have a plausible savanna equilibrium but we will include that requirement in our forthcoming discussion.
}
\\

\textbf{Case 2:} $\eta_{TG}(\textbf{W})<0$.\\
Recall that the tree component's of a savanna equilibrium is given by (\ref{app_eq2}). Hence, in the sequel, a plausible savanna equilibrium is given by a positive $G_*$ which is a zero of the function $H$ and which is such that $T_*$ defined by (\ref{app_eq2}) is positive.
In this case, one has $a>0$, $c<0$, $d>0$, $\lambda>0$ and $\alpha_0<0$.

Let us set  $K(G)=\alpha_0^4G^2+8\alpha_0^3G+12\alpha_0$ such that $H^{(4)}(G_*)=-\lambda K(G_*) e^{\alpha_0G_*}$. One has $K''(G)=2\alpha_0^4>0$ and $K'(0)=8\alpha_0^3<0$. Hence, there exists a unique $\G_{1*}\in \R_+$ such that $K'(\G_{1*})=0$ and $K$ is decreasing on $[0,\G_{1*}]$ and, $K$ is increasing on $[\G_{1*}, +\infty)$. Since $K(0)=12\alpha_0<0$ and $\lim\limits_{G\rightarrow+\infty}K(G)=+\infty$, there exists a unique $\G_{1**}\in(\G_{1*},+\infty)$ such that $K(\G_{1**})=0$. Thus, $K(G)\leq0$ on $[0,\G_{1**}]$ and $K(G)>0$ on $[\G_{1**},+\infty)$. In other words,  $H^{(4)}(G_*)\geq0$ on $[0,\G_{1**}]$ and $H^{(4)}(G_*)<0$ on $[\G_{1**},+\infty)$. Hence, $H^{(3)}$ is increasing on $[0,\G_{1**}]$ and $H^{(3)}$ is decreasing on $[\G_{1**},+\infty)$. One has $H^{(3)}(0)=6(c-\lambda\alpha_0)$ and $\lim\limits_{G\rightarrow+\infty}H^{(3)}(G)=6c<0.$

\begin{enumerate}
    \item[(I)]Assume that $H^{(3)}(\G_{1**})\leq0$.
    \begin{enumerate}
        \item[1)] Assume that $H^{(2)}(0)=2(a-b-d-\lambda)\leq0$. Since $H^{(1)}(0)=c\alpha^2<0$ and $\lim\limits_{G\rightarrow+\infty}H^{(1)}(G)=-\infty$, then $H^{(1)}(G)<0$ on $\R_+$; i.e. $H$ is decreasing on $\R_+$. 
        \begin{enumerate}
            \item[a)] If $a-b<0$ i.e. $H(0)=(a-b)\alpha^2<0$, then no plausible savanna equilibria exist.
            \item[b)] If $a-b>0$ i.e. $H(0)=(a-b)\alpha^2>0$, then there exists a unique $G_{*1}\in[0,+\infty)$ such that $H(G_{*1}=0$. Hence, there exists at most one savanna equilibrium $\textbf{E}_*=(G_{*1},T_{*1})$ whenever $T_{*1}>0$, where $T_{*1}$ is computed from (\ref{app_eq2}).
        \end{enumerate}
        \item[2)] Assume that $2(a-b-d-\lambda)>0$. Note that, in this case, $a-b>0$. Then, there exists a unique $\G_{3*}\in\R_+$ such that $H^{(2)}(\G_{3*})=0$, $H^{(1)}$ is increasing on $[0, \G_{3*}]$ and is decreasing on $(\G_{3*},+\infty)$.  Since $H^{(1)}(0)=c\alpha^2<0$ and $\lim\limits_{G\rightarrow+\infty}H^{(1)}(G)=-\infty$, we have two sub-cases.
        \begin{enumerate}
            \item[a)] Assume that $H^{(1)}(\G_{3*})\leq0$. Since $H(0)=(a-b)\alpha^2>0$ and $\lim\limits_{G\rightarrow+\infty}H(G)=-\infty$, then there exists a unique $G_{*1}\in[0,+\infty)$ such that $H(G_{*1}=0$. Hence, there exists at most one savanna equilibrium $\textbf{E}_*=(G_{*1},T_{*1})$ whenever $T_{*1}>0$.
            \item[b)] Assume that $H^{(1)}(\G_{3*})>0$. Then, there exist $\G_{3**}\in(0,\G_{3*})$ and $\G_{3***}\in(\G_{3*},+\infty)$ that are zeros of $H^{(1)}$.
            \begin{enumerate}
                \item[i)] If $\min(H(\G_{3**}), H(\G_{3***}))>0$ then there exists at most one savanna equilibrium $\textbf{E}_*=(G_{*1},T_{*1})$ whenever $G_{*1}\in(\G_{3***},+\infty)$, $H(G_{*1})=0$ and $T_{*1}>0$.
                \item[ii)] If $H(\G_{3**})<0$ and $H(\G_{3***})>0$, then there exist at most three savanna equilibria $\textbf{E}_*^i=(G_{*i},T_{*i})$ whenever $G_{*1}\in(0,\G_{3**})$, $G_{*2}\in(\G_{3**},\G_{3***})$, $G_{*3}\in(\G_{3***},+\infty)$, $H(G_{*i})=0$ and $T_{*i}>0$, $i=1,2,3$.
                \item[iii)] If $\max(H(\G_{3**}), H(\G_{3***}))<0$ then there exists at most one savanna equilibrium $\textbf{E}_*=(G_{*1},T_{*1})$ whenever $G_{*1}\in(0,\G_{3**})$, $H(G_{*1})=0$ and $T_{*1}>0$.
            \end{enumerate}
        \end{enumerate}
    \end{enumerate}
    \item[(II)] Assume that $H^{(3)}(\G_{1**})>0$ and $c-\lambda\alpha_0>0$. Then there exists a unique $\G_{1***}\in(\G_{1**},+\infty)$, zero of $H^{(3)}$.
    \begin{enumerate}
        \item[1)] Assume that $H^{(2)}(\G_{1***})\leq0$. Since $H^{(1)}(0)=c\alpha^2<0$, then $H^{(1)}(G)<0$ on $\R_+$.
        \begin{enumerate}
            \item[a)] If $a-b<0$, then no plausible savanna equilibria exist.
            \item[b)] If $a-b>0$, then there exists a unique $G_{*1}\in[0,+\infty)$ such that $H(G_{*1}=0$. Hence, there exists at most one savanna equilibrium $\textbf{E}_*=(G_{*1},T_{*1})$ whenever $T_{*1}>0$.
        \end{enumerate}
        \item[2)] Assume that $H^{(2)}(\G_{1***})>0$ and $a-b-d-\lambda>0$. Then there exists a unique $\G_{4*}\in(\G_{1***},+\infty)$ such that $H^{(2)}(\G_{4*})=0$. One has two sub-cases.
        \begin{enumerate}
            \item[a)] Assume that $H^{(1)}(\G_{4*})\leq0$. Since $H(0)=(a-b)\alpha^2>0$ and $\lim\limits_{G\rightarrow+\infty}H(G)=-\infty$, then there exists a unique $G_{*1}\in[0,+\infty)$ such that $H(G_{*1}=0$. Hence, there exists at most one savanna equilibrium $\textbf{E}_*=(G_{*1},T_{*1})$ whenever $T_{*1}>0$.
            \item[b)] Assume that $H^{(1)}(\G_{4*})>0$. Then, there exist $\G_{5*}\in(0,\G_{4*})$ and $\G_{5**}\in(\G_{4*},+\infty)$ that are zeros of $H^{(1)}$.
            \begin{enumerate}
                \item[i)] If $\min(H(\G_{5*}), H(\G_{5**}))>0$ then there exists at most one savanna equilibrium $\textbf{E}_*=(G_{*1},T_{*1})$ whenever $G_{*1}\in(\G_{5**},+\infty)$, $H(G_{*1})=0$ and $T_{*1}>0$.
                \item[ii)] If $H(\G_{5*})<0$ and $H(\G_{5**})>0$, then there exist at most three savanna equilibria $\textbf{E}_*^i=(G_{*i},T_{*i})$ whenever $G_{*1}\in(0,\G_{5*})$, $G_{*2}\in(\G_{5*},\G_{5**})$, $G_{*3}\in(\G_{5**},+\infty)$, $T_{*i}>0$ and $H(G_{*i})=0$, $i=1,2,3$.
                \item[iii)] If $\max(H(\G_{5*}), H(\G_{5**}))<0$ then there exists at most one savanna equilibrium $\textbf{E}_*=(G_{*1},T_{*1})$ whenever $G_{*1}\in(0,\G_{5*})$, $H(G_{*1})=0$ and $T_{*1}>0$.
            \end{enumerate} 
        \end{enumerate}
        \item[3)] Assume that $H^{(2)}(\G_{1***})>0$ and $a-b-d-\lambda<0$. Then there exist $\G_{4**}\in(0,\G_{1***})$ and  $\G_{4***}\in(\G_{1***},+\infty)$ such that $H^{(2)}(\G_{4**})=H^{(2)}(\G_{4***})=0$. One has three sub-cases.
        \begin{enumerate}
            \item[a)] Assume that $H^{(1)}(\G_{4***})\leq0$. Then $H^{(1)}(G)\leq0$ on $\R_+$. Note that $\lim\limits_{G\rightarrow+\infty}H(G)=-\infty$.
            \begin{enumerate}
                \item[i)] If $a-b<0$, then no plausible savanna equilibria exist.
                \item[ii)] If $a-b>0$, then there exists a unique $G_{*1}\in[0,+\infty)$ such that $H(G_{*1}=0$. Hence, there exists at most one savanna equilibrium $\textbf{E}_*=(G_{*1},T_{*1})$ whenever $T_{*1}>0$.
            \end{enumerate}
            \comment{
            Since $H(0)=(a-b)\alpha^2>0$ and $\lim\limits_{G\rightarrow+\infty}H(G)=-\infty$, then there exists a unique $G_{*1}\in[0,+\infty)$ such that $H(G_{*1}=0$. Hence, there exists at most one savanna equilibrium $\textbf{E}_*=(G_{*1},T_{*1})$ whenever $G_{*1}>G^*$.
            }
            \item[b)] Assume that $H^{(1)}(\G_{4***})>0$ and $a-b>0$. Then, there exist $\G_{6*}\in(\G_{4**}, \G_{4***})$ and $\G_{6**}\in(\G_{4***},+\infty)$ that are zeros of $H^{(1)}$.
            \begin{enumerate}
                \item[i)] If $\min(H(\G_{6*}), H(\G_{6**}))>0$ then there exists at most one savanna equilibrium $\textbf{E}_*=(G_{*1},T_{*1})$ whenever $G_{*1}\in(\G_{6**},+\infty)$, $H(G_{*1})=0$ and $T_{*1}>0$.
                \item[ii)] If $H(\G_{6*})<0$ and $H(\G_{6**})>0$, then there exist at most three savanna equilibria $\textbf{E}_*^i=(G_{*i},T_{*i})$ whenever $G_{*1}\in(0,\G_{6*})$, $G_{*2}\in(\G_{6*},\G_{6**})$, $G_{*3}\in(\G_{6**},+\infty)$, $H(G_{*i})=0$ and $T_{*i}>0$, $i=1,2,3$.
                \item[iii)] If $\max(H(\G_{6*}), H(\G_{6**}))<0$ then there exists at most one savanna equilibrium $\textbf{E}_*=(G_{*1},T_{*1})$ whenever $G_{*1}\in(0,\G_{6*})$, $H(G_{*1})=0$ and $T_{*1}>0$.
            \end{enumerate}
            \item[c)] Assume that $H^{(1)}(\G_{4***})>0$ and $a-b<0$. Then, there exist $\G_{6*}\in(\G_{4**}, \G_{4***})$ and $\G_{6**}\in(\G_{4***},+\infty)$ that are zeros of $H^{(1)}$.
            \begin{enumerate}
                \item[i)] If $H(\G_{6**})<0$, then there is no plausible savanna equilibria.
                \item[ii)] If $H(\G_{6**})>0$, then there exist at most two savanna equilibria $\textbf{E}_*^i=(G_{*i},T_{*i})$ whenever $G_{*1}\in(\G_{6*},\G_{6**})$, $G_{*2}\in(\G_{6**},+\infty)$, $H(G_{*i})=0$, $T_{*i}>0$, $i=1,2$.
            \end{enumerate}
        \end{enumerate}
    \end{enumerate}
    \item[(III)] Assume that $H^{(3)}(\G_{1**})>0$ and $c-\lambda\alpha_0<0$. Then there exist $\G_{2*}\in(0,\G_{1**})$ and $\G_{2**}\in(\G_{1**},+\infty)$, zeros of $H^{(3)}$. We have five sub-cases.
    \begin{enumerate}
        \item[1)] Assume that $H^{(2)}(0)=a-b-d-\lambda>0$ and $\min(H^{(2)}(\G_{2*}),H^{(2)}(\G_{2**}))>0$. Then there exists a unique $\G_{7*}\in(\G_{2**},+\infty)$ such that $H^{(2)}(\G_{7*})=0$. One has two sub-cases.
        \begin{enumerate}
            \item[a)] Assume that $H^{(1)}(\G_{7*})\leq0$. Since $a-b>0$, then there exists a unique $G_{*1}\in[0,+\infty)$ such that $H(G_{*1})=0$. Hence, there exists at most one savanna equilibrium $\textbf{E}_*=(G_{*1},T_{*1})$ whenever $T_{*1}>0$.
\comment{
            \begin{enumerate}
                \item[i)] If $a-b<0$, then no plausible savanna equilibria exist.
                \item[ii)] If $a-b>0$, then there exists a unique $G_{*1}\in[0,+\infty)$ such that $H(G_{*1}=0$. Hence, there exists at most one savanna equilibrium $\textbf{E}_*=(G_{*1},T_{*1})$ whenever $T_{*1}>0$.
            \end{enumerate}
}
\comment{
            Since $H(0)=(a-b)\alpha^2>0$ and $\lim\limits_{G\rightarrow+\infty}H(G)=-\infty$, then there exists a unique $G_{*1}\in[0,+\infty)$ such that $H(G_{*1}=0$. Hence, there exists at most one savanna equilibrium $\textbf{E}_*=(G_{*1},T_{*1})$ whenever $G_{*1}>G^*$.
}
            \item[b)] Assume that $H^{(1)}(\G_{7*})>0$. Then, there exist $\G_{11*}\in(0, \G_{7*})$ and $\G_{11**}\in(\G_{7*},+\infty)$ that are zeros of $H^{(1)}$.
            \begin{enumerate}
                \item[i)] If $\min(H(\G_{11*}), H(\G_{11**}))>0$ then there exists at most one savanna equilibrium $\textbf{E}_*=(G_{*1},T_{*1})$ whenever $G_{*1}\in(\G_{11**},+\infty)$, $H(G_{*1})=0$ and $T_{*1}>0$.
                \item[ii)] If $H(\G_{11*})<0$ and $H(\G_{11**})>0$, then there exist at most three savanna equilibria $\textbf{E}_*^i=(G_{*i},T_{*i})$ whenever $G_{*1}\in(0,\G_{11*})$, $G_{*2}\in(\G_{11*},\G_{11**})$, $G_{*3}\in(\G_{11**},+\infty)$, $T_{*i}>0$ and $H(G_{*i})=0$, $i=1,2,3$.
                \item[iii)] If $\max(H(\G_{11*}), H(\G_{11**}))<0$ then there exists at most one savanna equilibrium $\textbf{E}_*=(G_{*1},T_{*1})$ whenever $G_{*1}\in(0,\G_{11*})$, $H(G_{*1})=0$ and $T_{*1}>0$.
            \end{enumerate}
        \end{enumerate}
        \item[2)] Assume that $a-b-d-\lambda>0$, $H^{(2)}(\G_{2*})<0$ and $H^{(2)}(\G_{2**})>0$. Then there exist $\G_{8*}\in(0,\G_{2*})$, $\G_{8**}\in(\G_{2*},\G_{2**})$ and $\G_{8***}\in(\G_{2**},+\infty)$ such that $H^{(2)}(\G_{8*})=H^{(2)}(\G_{8**})=H^{(2)}(\G_{8***})=0$. One has five sub-cases.
        \begin{itemize}
            \item[a)] Assume that $\max(H^{(1)}(\G_{8*}),H^{(1)}(\G_{8***}))\leq0$. Since $H(0)=(a-b)\alpha^2>0$ and $\lim\limits_{G\rightarrow+\infty}H(G)=-\infty$, then there exists a unique $G_{*1}\in[0,+\infty)$ such that $H(G_{*1}=0$. Hence, there exists at most one savanna equilibrium $\textbf{E}_*=(G_{*1},T_{*1})$ whenever $T_{*1}>0$.
            \item[b)] Assume that $H^{(1)}(\G_{8**})>0$. Then, there exist $\G_{12*}\in(0, \G_{8*})$ and $\G_{12**}\in(\G_{8***},+\infty)$ that are zeros of $H^{(1)}$.
            \begin{enumerate}
                \item[i)] If $\min(H(\G_{12*}), H(\G_{12**}))>0$ then there exists at most one savanna equilibrium $\textbf{E}_*=(G_{*1},T_{*1})$ whenever $G_{*1}\in(\G_{12**},+\infty)$, $H(G_{*1})=0$ and $T_{*1}>0$.
                \item[ii)] If $H(\G_{12*})<0$ and $H(\G_{12**})>0$, then there exist at most three savanna equilibria $\textbf{E}_*^i=(G_{*i},T_{*i})$ whenever $G_{*1}\in(0,\G_{12*})$, $G_{*2}\in(\G_{12*},\G_{12**})$, $G_{*3}\in(\G_{12**},+\infty)$, $T_{*i}>0$ and $H(G_{*i})=0$, $i=1,2,3$.
                \item[iii)] If $\max(H(\G_{12*}), H(\G_{12**}))<0$ then there exists at most one savanna equilibrium $\textbf{E}_*=(G_{*1},T_{*1})$ whenever $G_{*1}\in(0,\G_{12*})$, $H(G_{*1})=0$ and $T_{*1}>0$.
          \end{enumerate}
          \item[c)] Assume that $H^{(1)}(\G_{8*})<0$ and $H^{(1)}(\G_{8***})>0$. Then, there exist $\G_{13*}\in(\G_{8**}, \G_{8***})$ and $\G_{13**}\in(\G_{8***},+\infty)$ that are zeros of $H^{(1)}$.
            \begin{enumerate}
                \item[i)] If $\min(H(\G_{13*}), H(\G_{13**}))>0$ then there exists at most one savanna equilibrium $\textbf{E}_*=(G_{*1},T_{*1})$ whenever $G_{*1}\in(\G_{13**},+\infty)$, $H(G_{*1})=0$ and $T_{*1}>0$.
                \item[ii)] If $H(\G_{13*})<0$ and $H(\G_{13**})>0$, then there exist at most three savanna equilibria $\textbf{E}_*^i=(G_{*i},T_{*i})$ whenever $G_{*1}\in(0,\G_{13*})$, $G_{*2}\in(\G_{13*},\G_{13**})$, $G_{*3}\in(\G_{13**},+\infty)$, $T_{*i}>0$ and $H(G_{*i})=0$, $i=1,2,3$.
                \item[iii)] If $\max(H(\G_{13*}), H(\G_{13**}))<0$ then there exists at most one savanna equilibrium $\textbf{E}_*=(G_{*1},T_{*1})$ whenever $G_{*1}\in(0,\G_{13*})$, $H(G_{*1})=0$ and $T_{*1}>0$.
          \end{enumerate}
          \item[d)] Assume that $H^{(1)}(\G_{8*})>0$ and $H^{(1)}(\G_{8***})<0$. Then, there exist $\G_{14*}\in(0, \G_{8*})$ and $\G_{14**}\in(\G_{8*},\G_{8**})$ that are zeros of $H^{(1)}$.
            \begin{enumerate}
                \item[i)] If $\min(H(\G_{14*}), H(\G_{14**}))>0$ then there exists at most one savanna equilibrium $\textbf{E}_*=(G_{*1},T_{*1})$ whenever $G_{*1}\in(\G_{14**},+\infty)$, $H(G_{*1})=0$ and $T_{*1}>0$.
                \item[ii)] If $H(\G_{14*})<0$ and $H(\G_{14**})>0$, then there exist at most three savanna equilibria $\textbf{E}_*^i=(G_{*i},T_{*i})$ whenever $G_{*1}\in(0,\G_{14*})$, $G_{*2}\in(\G_{14*},\G_{14**})$, $G_{*3}\in(\G_{14**},+\infty)$, $T_{*i}>0$ and $H(G_{*i})=0$, $i=1,2,3$.
                \item[iii)] If $\max(H(\G_{14*}), H(\G_{14**}))<0$ then there exists at most one savanna equilibrium $\textbf{E}_*=(G_{*1},T_{*1})$ whenever $G_{*1}\in(0,\G_{14*})$, $H(G_{*1})=0$ and $T_{*1}>0$.
          \end{enumerate}
          \item[e)] Assume that $\min(H^{(1)}(\G_{8*}), H^{(1)}(\G_{8***}))>0$ and $H^{(1)}(\G_{8**})<0$. Then, there exist $\G_{15*}\in(0, \G_{8*})$,  $\G_{15**}\in(\G_{8*},\G_{8**})$, $\G_{15***}\in(\G_{8**},\G_{8***})$ and $\G_{15****}\in(\G_{8***},+\infty)$ that are zeros of $H^{(1)}$.
            \begin{enumerate}
                \item[i)] If $\min(H(\G_{15*}), H(\G_{15***}))>0$ then there exists at most one savanna equilibrium $\textbf{E}_*=(G_{*1},T_{*1})$ whenever $G_{*1}\in(\G_{15****},+\infty)$, $H(G_{*1})=0$ and $T_{*1}>0$.
                \item[ii)] If $H(\G_{15*})>0$, $H(\G_{15***})<0$ and $H(\G_{15****})>0$, then there exist at most three savanna equilibria $\textbf{E}_*^i=(G_{*i},T_{*i})$ whenever $G_{*1}\in(\G_{15**},\G_{15***})$, $G_{*2}\in(\G_{15***},\G_{15****})$, $G_{*3}\in(\G_{15****},+\infty)$, $T_{*i}>0$ and $H(G_{*i})=0$, $i=1,2,3$.
                \item[iii)] If $H(\G_{15*})>0$ and $H(\G_{15****})<0$ then there exists at most one savanna equilibrium $\textbf{E}_*=(G_{*1},T_{*1})$ whenever $G_{*1}\in(\G_{15**},\G_{15***})$, $H(G_{*1})=0$ and $T_{*1}>0$.
                \item[iv)] If $H(\G_{15**})<0$, $H(\G_{15****})<0$ then there exists at most one savanna equilibrium $\textbf{E}_*=(G_{*1},T_{*1})$ whenever $G_{*1}\in(0, \G_{15*})$, $H(G_{*1})=0$ and $G_{*1}>G^*$.
                \item[v)] If $H(\G_{15*})<0$ and $H(\G_{15***})>0$, then there exist at most two savanna equilibria $\textbf{E}_*^i=(G_{*i},T_{*i})$ whenever $G_{*1}\in(0, \G_{15*})$, $G_{*2}\in(\G_{15*},\G_{15**})$, $T_{*i}>0$ and $H(G_{*i})=0$, $i=1,2$.
                \item[vi)] If $H(\G_{15**})<0$ and $H(\G_{15****})>0$, then there exist at most two savanna equilibria $\textbf{E}_*^i=(G_{*i},T_{*i})$ whenever $G_{*1}\in(\G_{15***},\G_{15****})$, $G_{*2}\in(\G_{15****},+\infty)$, $T_{*i}>0$ and $H(G_{*i})=0$, $i=1,2$.
                \item[vii)] If $\max(H(\G_{15*}),H(\G_{15***}))<0$ and $\min(H(\G_{15**}),H(\G_{15****}))>0$, then there exist at most five savanna equilibria $\textbf{E}_*^i=(G_{*i},T_{*i})$ whenever $G_{*1}\in(0,\G_{15*})$, $G_{*2}\in(\G_{15*},\G_{15**})$, $G_{*3}\in(\G_{15**},\G_{15***})$, $G_{*4}\in(\G_{15***},\G_{15****})$ and $G_{*5}\in(\G_{15****},+\infty)$, $T_{*i}>0$ and $H(G_{*i})=0$, $i=1,2,3,4,5$.
                \item[viii)] If $\max(H(\G_{15*}),H(\G_{15****}))<0$ and $H(\G_{15**})>0$, then there exist at most three savanna equilibria $\textbf{E}_*^i=(G_{*i},T_{*i})$ whenever $G_{*1}\in(0,\G_{15*})$, $G_{*2}\in(\G_{15*},\G_{15**})$ and $G_{*3}\in(\G_{15**},\G_{15***})$, $T_{*i}>0$ and $H(G_{*i})=0$, $i=1,2,3$.
          \end{enumerate}
        \end{itemize}
        \item[3)] Assume that $a-b-d-\lambda>0$, $\max(H^{(2)}(\G_{2*}), H^{(2)}(\G_{2**}))<0$. Then there exists a unique $\G_{9*}\in(0,\G_{2*})$ such that $H^{(2)}(\G_{9*})=0$. One has two sub-cases.
        \begin{enumerate}
            \item[a)] Assume that $H^{(1)}(\G_{9*})\leq0$. Since $H(0)=(a-b)\alpha^2>0$ and $\lim\limits_{G\rightarrow+\infty}H(G)=-\infty$, then there exists a unique $G_{*1}\in[0,+\infty)$ such that $H(G_{*1}=0$. Hence, there exists at most one savanna equilibrium $\textbf{E}_*=(G_{*1},T_{*1})$ whenever $T_{*1}>0$.
            \item[b)] Assume that $H^{(1)}(\G_{9*})>0$. Then, there exist $\G_{16*}\in(0, \G_{9*})$ and $\G_{16**}\in(\G_{9*},+\infty)$ that are zeros of $H^{(1)}$.
            \begin{enumerate}
                \item[i)] If $\min(H(\G_{16*}), H(\G_{16**}))>0$ then there exists at most one savanna equilibrium $\textbf{E}_*=(G_{*1},T_{*1})$ whenever $G_{*1}\in(\G_{16**},+\infty)$, $H(G_{*1})=0$ and $T_{*1}>0$.
                \item[ii)] If $H(\G_{16*})<0$ and $H(\G_{16**})>0$, then there exist at most three savanna equilibria $\textbf{E}_*^i=(G_{*i},T_{*i})$ whenever $G_{*1}\in(0,\G_{16*})$, $G_{*2}\in(\G_{16*},\G_{16**})$, $G_{*3}\in(\G_{16**},+\infty)$, $T_{*i}>0$ and $H(G_{*i})=0$, $i=1,2,3$.
                \item[iii)] If $\max(H(\G_{16*}), H(\G_{16**}))<0$ then there exists at most one savanna equilibrium $\textbf{E}_*=(G_{*1},T_{*1})$ whenever $G_{*1}\in(0,\G_{16*})$, $H(G_{*1})=0$ and $T_{*1}>0$.
            \end{enumerate} 
        \end{enumerate}
        \item[4)] Assume that $a-b-d-\lambda\leq0$, $\max(H^{(2)}(\G_{2*}), H^{(2)}(\G_{2**}))<0$. Then, $H^{(1)}(G)\leq0$ on $\R_+$. Note that $\lim\limits_{G\rightarrow+\infty}H(G)=-\infty$.
        \begin{enumerate}
            \item[a)] If $a-b<0$, then no plausible savanna equilibria exist.
            \item[b)] If $a-b>0$, then there exists a unique $G_{*1}\in[0,+\infty)$ such that $H(G_{*1}=0$. Hence, there exists at most one savanna equilibrium $\textbf{E}_*=(G_{*1},T_{*1})$ whenever $T_{*1}>0$.
        \end{enumerate}
\comment{
        there exists a unique $G_{*1}\in[0,+\infty)$ such that $H(G_{*1}=0$. Hence, there exists at most one savanna equilibrium $\textbf{E}_*=(G_{*1},T_{*1})$ whenever $G_{*1}>G^*$.
}
        \item[5)] Assume that $a-b-d-\lambda\leq0$, $H^{(2)}(\G_{2*})<0$ and  $H^{(2)}(\G_{2**})>0$. Then there exist $\G_{10*}\in(\G_{2*}, \G_{2**})$ and $\G_{10**}\in(\G_{2**}, +\infty)$ such that $H^{(2)}(\G_{10*})=H^{(2)}(\G_{10**})=0$. One has two sub-cases.
        \begin{enumerate}
            \item[a)] Assume that $H^{(1)}(\G_{10**})\leq0$. Then $H^{(1)}(G)\leq0$ on $\R_+$. Note that $\lim\limits_{G\rightarrow+\infty}H(G)=-\infty$.
            \begin{enumerate}
                \item[i)] If $a-b<0$, then no plausible savanna equilibria exist.
                \item[ii)] If $a-b>0$, then there exists a unique $G_{*1}\in[0,+\infty)$ such that $H(G_{*1}=0$. Hence, there exists at most one savanna equilibrium $\textbf{E}_*=(G_{*1},T_{*1})$ whenever $T_{*1}>0$.
            \end{enumerate}
\comment{
            Since $H(0)=(a-b)\alpha^2>0$ and $\lim\limits_{G\rightarrow+\infty}H(G)=-\infty$, then there exists a unique $G_{*1}\in[0,+\infty)$ such that $H(G_{*1}=0$. Hence, there exists at most one savanna equilibrium $\textbf{E}_*=(G_{*1},T_{*1})$ whenever $G_{*1}>G^*$.
}
            \item[b)] Assume that $H^{(1)}(\G_{10**})>0$ and $a-b>0$. Then, there exist $\G_{17*}\in(\G_{10*}, \G_{10**})$ and $\G_{17**}\in(\G_{10**},+\infty)$ that are zeros of $H^{(1)}$.
            \begin{enumerate}
                \item[i)] If $\min(H(\G_{17*}), H(\G_{17**}))>0$ then there exists at most one savanna equilibrium $\textbf{E}_*=(G_{*1},T_{*1})$ whenever $G_{*1}\in(\G_{17**},+\infty)$, $H(G_{*1})=0$ and $T_{*1}>0$.
                \item[ii)] If $H(\G_{17*})<0$ and $H(\G_{17**})>0$, then there exist at most three savanna equilibria $\textbf{E}_*^i=(G_{*i},T_{*i})$ whenever $G_{*1}\in(0,\G_{17*})$, $G_{*2}\in(\G_{17*},\G_{17**})$, $G_{*3}\in(\G_{17**},+\infty)$, $T_{*i}>0$ and $H(G_{*i})=0$, $i=1,2,3$.
                \item[iii)] If $\max(H(\G_{17*}), H(\G_{17**}))<0$ then there exists at most one savanna equilibrium $\textbf{E}_*=(G_{*1},T_{*1})$ whenever $G_{*1}\in(0,\G_{17*})$, $H(G_{*1})=0$ and $T_{*1}>0$.
            \end{enumerate}
            \item[c)] Assume that $H^{(1)}(\G_{10**})>0$ and $a-b<0$. Then, there exist $\G_{17*}\in(\G_{10*}, \G_{10**})$ and $\G_{17**}\in(\G_{10**},+\infty)$ that are zeros of $H^{(1)}$.
            \begin{enumerate}
                \item[i)] If $H(\G_{17**})<0$ then no plausible savanna equilibria exist.
                \item[ii)] If $H(\G_{17**})>0$, then there exist at most two savanna equilibria $\textbf{E}_*^i=(G_{*i},T_{*i})$ whenever  $G_{*1}\in(\G_{17*},\G_{17**})$, $G_{*2}\in(\G_{17**},+\infty)$, $H(G_{*i})=0$ and $T_{*i}>0$, $i=1,2$.
            \end{enumerate}
        \end{enumerate}
    \end{enumerate}
\end{enumerate}

This ends the case $\eta_{TG}(\textbf{W})<0$.\\

\textbf{Case 3:} $\eta_{TG}(\textbf{W})=0$.\\
From system (\ref{app_eq1}), one has
\begin{equation}\label{app_eq100}
\left\{
\begin{array}{l}
 G_* = G^*, \\
T^*-T_*-\displaystyle\frac{K_T(\textbf{W})}{g_T(\textbf{W})}f\vartheta(T_*)\omega(G^*)=0. 
\end{array}
\right.
\end{equation}
From system (\ref{app_eq100}) one deduces that a necessary condition for the existence of plausible savanna equilibria includes
$$\mathcal{R}^1_{\textbf{W}}>1, \quad \mathcal{R}^2_{\textbf{W}}>1, \quad T_*<T^*.$$
Let us set
$$\begin{array}{l}
 u =\displaystyle\frac{K_T(\textbf{W})}{g_T(\textbf{W})}f\omega(G^*)\lambda_{fT}^{min}, \\
v =\displaystyle\frac{K_T(\textbf{W})}{g_T(\textbf{W})}f\omega(G^*)(\lambda_{fT}^{max}-\lambda_{fT}^{min}),\\
J(T)=T^*-T-u-ve^{-pT}.
\end{array}$$
One has $J^{(1)}(T)=-1+pve^{-pT}$ and
$J^{(2)}(T)=-p^2ve^{-pT}<0$. Hence $J^{(1)}$ is decreasing on $\R_+$ and $\lim\limits_{T\rightarrow+\infty}J^{(1)}(T)=-1$. 
\begin{enumerate}
    \item[(I)] Assume that $J^{(1)}(0)=-1+pv>0$. Then there exists a unique $\bar{T}_{1*}\in\R_+$ such that $J^{(1)}(\bar{T}_{1*})=0$.
    \begin{enumerate}
        \item[1)] Assume that $J(\bar{T}_{1*})<0$. Then no plausible savanna equilibria exist.
        \item[2)] Assume that $J(\bar{T}_{1*})>0$ and $J(0)=T^*-u-v<0$. Then there exist at most two savanna equilibria $\textbf{E}_*^i=(G_{*},T_{*i})$ whenever  $T_{*1}\in(0,\bar{T}_{1*})$, $T_{*2}\in(\bar{T}_{1*},+\infty)$, $J(T_{*i})=0$ and $T_{*i}<T^*$, $i=1,2$.
        \item[3)] Assume that $J(\bar{T}_{1*})>0$ and $J(0)=T^*-u-v>0$. Then there exist at most one savanna equilibrium $\textbf{E}_*=(G_{*},T_{*1})$ whenever  $T_{*1}\in(\bar{T}_{1*},+\infty)$, $J(T_{*1})=0$ and $T_{*1}<T^*$.
    \end{enumerate}
    \item[(II)] Assume that $J^{(1)}(0)=-1+pv\leq0$. Then  $J$ is decreasing on $\R_+$. Note that $\lim\limits_{T\rightarrow+\infty}J(T)=-\infty$.
    \begin{enumerate}
        \item[1)] Assume that $J(0)=T^*-u-v<0$. Then no plausible savanna equilibria exist.
        \item[2)] Assume that $J(0)=T^*-u-v>0$. Then there exist at most one savanna equilibrium $\textbf{E}_*=(G_{*},T_{*1})$ whenever  $T_{*1}\in(0,+\infty)$, $J(T_{*1})=0$ and $T_{*1}<T^*$.
    \end{enumerate}
\end{enumerate}

This ends the case $\eta_{TG}(\textbf{W})=0$ and the proof of the theorem.
\end{proof}
}

\section{Existence of a savanna equilibria}
\label{al_AppendixA}

Let us set:
\begin{equation}\label{definition-a-b}
\begin{array}{lcl}
\mathcal{A} &=&\dfrac{g_{T}(\textbf{W})}{K_{T}(\textbf{W})}T^{*},  \\
\mathcal{B}&=&\dfrac{g_{G}(\textbf{W})g_{T}(\textbf{W})}{\eta_{TG}(\textbf{W})K_{G}(\textbf{W})K_{T}(\textbf{W})}G^{*},\\
\mathcal{C}&=&\dfrac{\mathcal{B}}{G^{*}},\\ 
\mathcal{D}&=&f\lambda_{fT}^{min},\\
\lambda&=&f(\lambda_{fT}^{max}-\lambda_{fT}^{min})\times
e^{-p\dfrac{g_{G}(\textbf{W})G^{*}}{\eta_{TG}(\textbf{W})K_{G}(\textbf{W})}},\\
\alpha_0&=&p\dfrac{g_{G}(\textbf{W})}{\eta_{TG}(\textbf{W})K_{G}(\textbf{W})}
\end{array}
\end{equation}

where $T^*$ and $G^{*}$ are given by (\ref{swv_T_G}).

The existence of positive savanna equilibria is given in Theorem \ref{al_thm1}.
\begin{thm} (Existence of savanna equilibria)\\
	A savanna equilibrium $\textbf{E}_{S}=(G_{*},T_{*})'$ satisfies
	
	\begin{equation}
	\left\{
	\begin{array}{lcl}
	g_{G}(\textbf{W})\left(1-\displaystyle\frac{G_{*}}{K_{G}(\textbf{W})}\right)-(\delta_{G}+\lambda_{fG}f)-\eta_{TG}(\textbf{W})T_{*}=0,\\
	\\
	g_{T}(\textbf{W})\left(1-\displaystyle\frac{T_{*}}{K_{T}(\textbf{W})}\right)-\delta_{T}-f\vartheta(T_{*})\omega(G_{*})=0.\\
	\end{array}
	\right.
	\label{app_eq1-bis}
	\end{equation}
	
	Using the first equation of (\ref{app_eq1-bis}), we have
	
	\begin{equation}
	T_{*}=\dfrac{g_{G}(\textbf{W})}{\eta_{TG}(\textbf{W})K_{G}(\textbf{W})}(G^{*}-G_{*}).
	\label{app_eq2-bis}
	\end{equation}
	From (\ref{app_eq2-bis}) we deduce that,
	a condition to have a (positive) savanna equilibrium in the case $\eta_{TG}(\textbf{W})>0$ is:
	\begin{equation}
	G^{*}>G_{*}.
	\label{app_eq2*-bis}
	\end{equation}
	When $\eta_{TG}(\textbf{W})<0$, savanna equilibria are computed with positive $G_*$ such that $T_*$ is also positive.
	Substituting (\ref{app_eq2-bis}) in the second equation of (\ref{app_eq1-bis}) leads that $G_*$ must satisfy: 
	
	\begin{equation}
	\mathcal{C}G_{*}^{3}-\lambda G_{*}^{2}e^{\alpha_0 G_{*}}
	+(\mathcal{A}-\mathcal{B}-\mathcal{D})G_{*}^{2}+\mathcal{C}\alpha^{2}G_{*}+(\mathcal{A}-\mathcal{B})\alpha^{2}=0.
	\label{app_eq7-bis}
	\end{equation}
	
	Table \ref{swv_tab_1} summarizes the conditions of existence of positive solutions $G_*$ of (\ref{app_eq7-bis}), when $\eta_{TG}(\textbf{W})>0$, and that verify (\ref{app_eq2*-bis}). Hence, its summarizes the conditions of existence of savanna equilibria in the case of tree biomass vs. grass biomass competition.    
	
	\begin{table}[H]
		\begin{center}
			\renewcommand{\arraystretch}{1}
			\begin{tabular}{|c|c|c|c|c|}
				\hline
				$\eta_{TG}(\textbf{W})$   &   $\mathcal{C}-\lambda\alpha_0$   & $\mathcal{A}-\mathcal{B}-\mathcal{D}-\lambda$ & $\mathcal{A}-\mathcal{B}$ & Number of savanna equilibria \\
				\hline
				&    \multirow{2}{1cm}{$<0$} & \multirow{2}{1cm}{$<0$} & $<0$ & $0$, $1$ or $2$\\
				\cline{4-5}
				&    & &$>0$ & $0$ or $1$\\
				\cline{3-5}
				&    & $>0$ &$>0$ & $0$ or $1$\\
				\cline{2-5}
				$>0$   &    \multirow{4}{1cm}{$>0$} &\multirow{2}{1cm}{$-$}& $<0$ & $0$, $1$ or $2$\\
				\cline{4-5}
				&    & &$>0$ & $0$ or $1$\\
				\cline{3-5}
				&    & $>0$& $>0$ & $0$ or $1$ \\
				\cline{3-5}
				&    & \multirow{2}{1cm}{$<0$} &$<0$ & $0$, $1$, $2$,  $3$ or $4$ \\
				\cline{4-5}
				&    &  &$>0$ & $0$, $1$, $2$ or  $3$ \\
				\hline
			\end{tabular}
		\end{center}
		\caption{Existence of savanna equilibria in the case of the tree biomass vs. grass biomass competition. ``$-$" stands for any value.}\label{swv_tab_1}
	\end{table}
	
	Table \ref{swv_tab_1bis1} summarizes the conditions of existence of positive solutions $G_*$ of (\ref{app_eq7-bis}), when $\eta_{TG}(\textbf{W})<0$, and that are such that $T_*>0$ (see (\ref{app_eq2-bis})). Hence, its summarizes the conditions of existence of savanna equilibria in the case of tree biomass vs. grass biomass facilitation.    
	
	\begin{table}[H]
		\begin{center}
			\renewcommand{\arraystretch}{1}
			\begin{tabular}{|c|c|c|c|c|}
				\hline
				$\eta_{TG}(\textbf{W})$   &   $\mathcal{C}-\lambda\alpha_0$   & $\mathcal{A}-\mathcal{B}-\mathcal{D}-\lambda$ & $\mathcal{A}-\mathcal{B}$ & Number of savanna equilibria \\
				\hline
				&    \multirow{2}{1cm}{$<0$} & \multirow{2}{1cm}{$<0$} & $<0$ & $0$, $1$ or $2$\\
				\cline{4-5}
				&    & &$>0$ & $0$, $1$, $2$ or $3$\\
				\cline{3-5}
				&    & $>0$ &$>0$ & $0$, $2$, $3$, $4$ or $5$\\
				\cline{2-5}
				$<0$   &    \multirow{4}{1cm}{$>0$} &\multirow{2}{1cm}{$-$}& $<0$ & $0$\\
				\cline{4-5}
				&    & &$>0$ & $0$ or $1$\\
				\cline{3-5}
				&    & $>0$& $>0$ & $0$, $1$, $2$ or $3$ \\
				\cline{3-5}
				&    & \multirow{2}{1cm}{$<0$} &$<0$ & $0$, $1$ or $2$\\
				\cline{4-5}
				&    &  &$>0$ & $0$, $1$, $2$ or  $3$ \\
				\cline{2-5}
				& \multirow{2}{1cm}{$-$}& $>0$ & $>0$ & $0$, $1$, $2$ or $3$\\
				\cline{3-5}
				& &       \multirow{2}{1cm}{$<0$} &$<0$ & $0$\\
				\cline{4-5}
				& & & $>0$ & $0$ or $1$\\
				\hline
			\end{tabular}
		\end{center}
		\caption{Existence of savanna equilibria in the case of tree biomass vs. grass biomass facilitation. ``$-$" stands for any value.}\label{swv_tab_1bis1}
	\end{table}   
	
	When $\eta_{TG}(\textbf{W})=0$, one has
	\begin{equation}
	\left\{
	\begin{array}{l}
	G_* = G^*, \\
	T^*-T_*-\displaystyle\frac{K_T(\textbf{W})}{g_T(\textbf{W})}f\vartheta(T_*)\omega(G^*)=0. 
	\end{array}
	\right.
	\end{equation}
	Let us set
	\begin{equation}
	\begin{array}{l}
	u =\displaystyle\frac{K_T(\textbf{W})}{g_T(\textbf{W})}f\omega(G^*)\lambda_{fT}^{min}, \\
	v =\displaystyle\frac{K_T(\textbf{W})}{g_T(\textbf{W})}f\omega(G^*)(\lambda_{fT}^{max}-\lambda_{fT}^{min}),\\
	J(T)=T^*-T-u-ve^{-pT}.
	\end{array}
	\end{equation}
	Hence,
	\begin{enumerate}
		\item[1.] if $-1+pv>0$ then, there may exist $0$, $1$ or $2$ savanna equilibria.
		\item[2.] if $-1+pv\leq0$ then, there may exist $0$ or $1$ savanna equilibrium.
	\end{enumerate}
	\label{al_thm1}
\end{thm}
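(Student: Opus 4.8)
The plan is to collapse the two-dimensional fixed-point problem for $\textbf{E}_{S}=(G_{*},T_{*})'$ into the problem of counting positive roots of a single scalar function, and then to control that count through the sign pattern of a high-order derivative. First I would take the defining system \eqref{app_eq1-bis} and, whenever $\eta_{TG}(\textbf{W})\neq0$, use its first equation to eliminate the tree component, writing $T_{*}$ as the affine function of $G_{*}$ recorded in \eqref{app_eq2-bis}. Substituting this into the second equation and clearing the denominator of $\omega(G_{*})=G_{*}^{2}/(G_{*}^{2}+\alpha^{2})$ yields, after collecting powers of $G_{*}$, the transcendental equation $H(G_{*})=0$ with
$$H(G_{*})=\mathcal{C}G_{*}^{3}-\lambda G_{*}^{2}e^{\alpha_0 G_{*}}+(\mathcal{A}-\mathcal{B}-\mathcal{D})G_{*}^{2}+\mathcal{C}\alpha^{2}G_{*}+(\mathcal{A}-\mathcal{B})\alpha^{2},$$
which is exactly \eqref{app_eq7-bis}, the coefficients being those of \eqref{definition-a-b}. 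A savanna equilibrium then corresponds to a positive root $G_{*}$ of $H$ for which the $T_{*}$ given by \eqref{app_eq2-bis} is positive; since the prefactor $g_{G}(\textbf{W})/(\eta_{TG}(\textbf{W})K_{G}(\textbf{W}))$ has the sign of $\eta_{TG}(\textbf{W})$, this positivity becomes the window $0<G_{*}<G^{*}$ in the competition case and $G_{*}>G^{*}$ in the facilitation case.

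The core of the argument is a repeated application of the intermediate value theorem, carried out from the top derivative downward. The decisive observation is that the fourth derivative factorises as $H^{(4)}(G_{*})=-\lambda(\alpha_0^{4}G_{*}^{2}+8\alpha_0^{3}G_{*}+12\alpha_0)e^{\alpha_0 G_{*}}$, whose sign is governed entirely by the sign of $\alpha_0$, and $\alpha_0$ shares the sign of $\eta_{TG}(\textbf{W})^{-1}$ because $p,g_{G}(\textbf{W}),K_{G}(\textbf{W})>0$. In the competition case $\eta_{TG}(\textbf{W})>0$ gives $\alpha_0>0$, so $H^{(4)}<0$ on $(0,\infty)$; hence $H^{(3)}$ is strictly decreasing with at most one zero, and cascading upward $H''$ has at most two, $H'$ at most three and $H$ at most four positive roots, the ceiling appearing in Table \ref{swv_tab_1}. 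In the facilitation case $\eta_{TG}(\textbf{W})<0$ gives $\alpha_0<0$, and the auxiliary quadratic $K(G)=\alpha_0^{4}G^{2}+8\alpha_0^{3}G+12\alpha_0$ satisfies $K(0)=12\alpha_0<0$, $K'(0)=8\alpha_0^{3}<0$ and $K''\equiv2\alpha_0^{4}>0$, so $K$ has a single positive zero; consequently $H^{(4)}$ changes sign exactly once, $H^{(3)}$ becomes unimodal with at most two zeros, and cascading upward $H$ acquires at most five positive roots, the ceiling in Table \ref{swv_tab_1bis1}.

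To obtain the exact table entries rather than mere upper bounds, I would read off the one-sided limits of $H$ and its derivatives. At the origin one finds $H(0^{+})=(\mathcal{A}-\mathcal{B})\alpha^{2}$, $H'(0^{+})=\mathcal{C}\alpha^{2}$, $H''(0^{+})=2(\mathcal{A}-\mathcal{B}-\mathcal{D}-\lambda)$ and $H^{(3)}(0^{+})=6(\mathcal{C}-\lambda\alpha_0)$, so the three column headers of Tables \ref{swv_tab_1}--\ref{swv_tab_1bis1}, namely $\mathcal{C}-\lambda\alpha_0$, $\mathcal{A}-\mathcal{B}-\mathcal{D}-\lambda$ and $\mathcal{A}-\mathcal{B}$, are precisely the signs of the initial values of $H^{(3)}$, $H''$ and $H$. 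In the competition case $\mathcal{C}>0$ forces $H'(0^{+})>0$ while every derivative tends to $-\infty$ at infinity, so those three signs pin down the full monotonicity profile of $H$ and thus the number of crossings admissible inside $0<G_{*}<G^{*}$; in the facilitation case $\mathcal{C}<0$ together with the extra sign change of $H^{(4)}$ enlarges the family of profiles, which is exactly what lifts the maximal count to five inside $G_{*}>G^{*}$. The neutral case $\eta_{TG}(\textbf{W})=0$ is handled separately and is easier: the first equation of \eqref{app_eq1-bis} forces $G_{*}=G^{*}$ directly, and $T_{*}$ must solve $J(T)=T^{*}-T-u-ve^{-pT}=0$; since $J''(T)=-p^{2}ve^{-pT}<0$, the map $J$ is unimodal when $J'(0)=-1+pv>0$, giving at most two roots, and strictly decreasing when $J'(0)\le0$, giving at most one, which is the asserted count.

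I expect the main obstacle to lie in the exhaustive bookkeeping of the facilitation branch. Because $H^{(4)}$ changes sign there, $H'$ may possess up to four critical points whose relative order varies with the signs of $\mathcal{C}-\lambda\alpha_0$, $\mathcal{A}-\mathcal{B}-\mathcal{D}-\lambda$ and $\mathcal{A}-\mathcal{B}$, and for each admissible arrangement one must determine which local extrema of $H$ lie above or below zero and whether the resulting roots fall in the window $G_{*}>G^{*}$ (equivalently $T_{*}>0$). Producing every row of Table \ref{swv_tab_1bis1}, including the case yielding five equilibria, amounts to tracking all these configurations; the individual steps are elementary intermediate-value-theorem applications, but their number is large and keeping the enumeration both complete and non-redundant is the delicate part.
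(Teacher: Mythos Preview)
Your proposal is correct and follows essentially the same route as the paper: eliminate $T_{*}$ via \eqref{app_eq2-bis}, reduce to the scalar function $H(G_{*})$, then control its root count by cascading the intermediate value theorem downward from the sign of $H^{(4)}$ (strictly negative when $\alpha_0>0$, unimodal via the quadratic $K$ when $\alpha_0<0$), with the boundary values $H(0^{+}),H'(0^{+}),H''(0^{+}),H^{(3)}(0^{+})$ supplying exactly the table headers; the neutral case is handled identically through the concave function $J$. The paper's proof is precisely the exhaustive case-by-case enumeration you anticipate as the main obstacle, and it contains no additional idea beyond what you have outlined.
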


\begin{proof}
From system (\ref{swv_eq1}),  a savanna equilibrium $\textbf{E}_{S}=(G_{*},T_{*})'$ satisfies

\begin{equation}
\left\{
\begin{array}{lcl}
g_{G}(\textbf{W})\left(1-\displaystyle\frac{G_{*}}{K_{G}(\textbf{W})}\right)-(\delta_{G}+\lambda_{fG}f)-\eta_{TG}(\textbf{W})T_{*}=0,\\
\\
g_{T}(\textbf{W})\left(1-\displaystyle\frac{T_{*}}{K_{T}(\textbf{W})}\right)-\delta_{T}-f\vartheta(T_{*})\omega(G_{*})=0.\\
\end{array}
\right.
\label{app_eq1}
\end{equation}

Using the first equation of (\ref{app_eq1}), we have

\begin{equation}
T_{*}=\dfrac{1}{\eta_{TG}(\textbf{W})}\left(g_{G}(\textbf{W})-(\delta_{G}+\lambda_{fG}f)-\dfrac{g_{G}(\textbf{W})}{K_{G}(\textbf{W})}G_{*}\right)=\dfrac{g_{G}(\textbf{W})}{\eta_{TG}(\textbf{W})K_{G}(\textbf{W})}(G^{*}-G_{*}).
\label{app_eq2}
\end{equation}

Substituting (\ref{app_eq2}) in the second equation of (\ref{app_eq1}) gives

\begin{equation}
\dfrac{(g_{T}(\textbf{W})-\delta_{T})-\dfrac{g_{G}(\textbf{W})g_{T}(\textbf{W})}{\eta_{TG}(\textbf{W})K_{G}(\textbf{W})K_{T}(\textbf{W})}(G^{*}-G_{*})}{\omega(G_{*})}=f\vartheta(T_{*}).
\label{app_eq3}
\end{equation}

From (\ref{app_eq3}), introducing the expression of $\omega(G)$, we have

\begin{equation}
\dfrac{\dfrac{g_{T}(\textbf{W})}{K_{T}(\textbf{W})}T^{*}-\dfrac{g_{G}(\textbf{W})g_{T}(\textbf{W})}{\eta_{TG}(\textbf{W})K_{G}(\textbf{W})K_{T}(\textbf{W})}G^{*}+\dfrac{g_{G}(\textbf{W})g_{T}(\textbf{W})}{\eta_{TG}(\textbf{W})K_{G}(\textbf{W})K_{T}(\textbf{W})}G_{*}}{\dfrac{G_{*}^{2}}{G_{*}^{2}+\alpha^{2}}}=f\vartheta(T_{*}),
\label{app_eq4}
\end{equation}
where

\begin{equation}
f\vartheta(T_{*})=f\lambda_{fT}^{min}+f(\lambda_{fT}^{max}-\lambda_{fT}^{min})\times
e^{-p\dfrac{g_{G}(\textbf{W})G^{*}}{\eta_{TG}(\textbf{W})K_{G}(\textbf{W})}}\times
e^{p\dfrac{g_{G}(\textbf{W})G_{*}}{\eta_{TG}(\textbf{W})K_{G}(\textbf{W})}}.
\label{app_eq5}
\end{equation}

From (\ref{app_eq4}) and (\ref{app_eq5}) we have:

\begin{equation}
(\mathcal{A}-\mathcal{B}+\mathcal{C}G_{*})\left(1+\dfrac{\alpha^{2}}{G_{*}^{2}}\right)=\mathcal{D}+\lambda
e^{\alpha_0 G_{*}}, \label{app_eq6}
\end{equation}
where,\\
$\mathcal{A}=\dfrac{g_{T}(\textbf{W})}{K_{T}(\textbf{W})}T^{*}$,
$\mathcal{B}=\dfrac{g_{G}(\textbf{W})g_{T}(\textbf{W})}{\eta_{TG}(\textbf{W})K_{G}(\textbf{W})K_{T}(\textbf{W})}G^{*}$,
$\mathcal{C}=\dfrac{\mathcal{B}}{G^{*}}$, $\mathcal{D}=f\lambda_{fT}^{min}$,
$\lambda=f(\lambda_{fT}^{max}-\lambda_{fT}^{min})\times
e^{-p\dfrac{g_{G}(\textbf{W})G^{*}}{\eta_{TG}(\textbf{W})K_{G}(\textbf{W})}}$
and
$\alpha_0=p\dfrac{g_{G}(\textbf{W})}{\eta_{TG}(\textbf{W})K_{G}(\textbf{W})}$.\par

From equation (\ref{app_eq6}) we have

\begin{equation}
\mathcal{C}G_{*}^{3}-\lambda G_{*}^{2}e^{\alpha_0 G_{*}}
+(\mathcal{A}-\mathcal{B}-\mathcal{D})G_{*}^{2}+\mathcal{C}\alpha^{2}G_{*}+(\mathcal{A}-\mathcal{B})\alpha^{2}=0.
\label{app_eq7}
\end{equation}

Set    $H(G_{*})=\mathcal{C}G_{*}^{3}-\lambda G_{*}^{2}e^{\alpha_0 G_{*}}
+(\mathcal{A}-\mathcal{B}-\mathcal{D})G_{*}^{2}+c\alpha^{2}G_{*}+(\mathcal{A}-\mathcal{B})\alpha^{2}$.  $H$ is a
function of one variable $G_{*}\in ]0,+\infty[$.  To find the number
of real positive roots of $H(G_{*})$, we will use the intermediate
values theorem  which is generally good for investigating real roots
of  differentiable and monotonous functions.
\par 
\noindent Below, we distinguish several cases.

\noindent \textbf{Case 1:} $\eta_{TG}(\textbf{W})>0$.\\
From the first equation of (\ref{app_eq1}) and from (\ref{app_eq2}) note that
one of the conditions to have a plausible savanna equilibrium is:
\begin{equation}
0<G^* \quad \mbox{and} \quad G^{*}>G_{*}.
\label{app_eq2*}
\end{equation}

In addition, we have

\begin{equation}
\left\{
\begin{array}{lcl}
\lim\limits_{G_{*}\longrightarrow 0}H(G_{*})=(\mathcal{A}-\mathcal{B})\alpha^{2},\\
\lim\limits_{G_{*}\longrightarrow +\infty}H(G_{*})=-\infty.
\end{array}
\right.
\label{app_eq8}
\end{equation}

\comment{
\tcb{When $\eta_{TG}(\textbf{W})<0$ and $\mathcal{R}^2_{\textbf{W}}>1$, one has $(a-b)\alpha^{2}>0$. Hence,
based on the intermediate values theorem for continuous functions,
we deduce that there exists at least one positive zeros of $H$,
named $G_{*0}$. Therefore, there exists at least one savanna
equilibrium $\textbf{E}_{*0}=(G_{*0},T_{*0})$ whenever $G_{*0}\in]0,
G^*[$. Here and in the sequel, we recall that the tree's component
of stated savanna equilibrium or equilibria, is given by
(\ref{app_eq2}). For the rest of the proof, we assume that
$\eta_{TG}(\textbf{W})>0$.}
}

The derivative of $H$ is $H^{'}(G_{*})=3\mathcal{C}G_{*}^{2}-\lambda(\alpha_0
G_{*}^{2}+2G_{*})e^{\alpha_0 G_{*}}+2(\mathcal{A}-\mathcal{B}-\mathcal{D})G_{*}+\mathcal{C}\alpha^{2}.$ We
have

\begin{equation}
\left\{
\begin{array}{lcl}
\lim\limits_{G_{*}\longrightarrow 0}H^{'}(G_{*})=\mathcal{C}\alpha^{2}>0,\\
\lim\limits_{G_{*}\longrightarrow +\infty}H^{'}(G_{*})=-\infty.
\end{array}
\right.
\label{app_eq9}
\end{equation}

Denote by $H^{(2)}$ the derivative of $H^{'}$. We have
$$H^{(2)}(G_{*})=6\mathcal{C}G_{*}-\lambda(\alpha_0^{2} G_{*}^{2}+4\alpha_0
G_{*}+2)e^{\alpha_0 G_{*}}+2(\mathcal{A}-\mathcal{B}-\mathcal{D}).$$ The limits of $H^{(2)}(G_{*})$
at $0$ and $+\infty$ are:

\begin{equation}
\left\{
\begin{array}{lcl}
\lim\limits_{G_{*}\longrightarrow 0}H^{(2)}(G_{*})=2(\mathcal{A}-\mathcal{B}-\mathcal{D}-\lambda),\\
\lim\limits_{G_{*}\longrightarrow +\infty}H^{(2)}(G_{*})=-\infty.
\end{array}
\right.
\label{app_eq10}
\end{equation}

Denote by $H^{(3)}$ the derivative of $H^{(2)}$. We have
$$H^{(3)}(G_{*})=6\mathcal{C}-\lambda(\alpha_0^{3} G_{*}^{2}+6\alpha_0^{2}
G_{*}+6\alpha_0)e^{\alpha_0 G_{*}}$$ and

\begin{equation}
\left\{
\begin{array}{lcl}
\lim\limits_{G_{*}\longrightarrow 0}H^{(3)}(G_{*})=6(\mathcal{C}-\lambda\alpha_0),\\
\lim\limits_{G_{*}\longrightarrow +\infty}H^{(3)}(G_{*})=-\infty.
\end{array}
\right.
\label{app_eq11}
\end{equation}

We have $H^{(4)}(G_{*})=-\lambda(\alpha_0^{4}
G_{*}^{2}+8\alpha_0^{3} G_{*}+12\alpha_0)e^{\alpha_0 G_{*}}<0$. It
implies that $H^{(3)}$ decreases.

\begin{enumerate}
    \item[(I)] If $\mathcal{C}-\lambda\alpha_0\leq0$, then $H^{(3)}\leq0$. It means that $H^{(2)}$ decreases.
    \begin{itemize}
        \item[1)] If $\mathcal{A}-\mathcal{B}-\mathcal{D}-\lambda\leq0$, then $H^{(2)}\leq0$. It implies that $H^{'}$ decreases. Using (\ref{app_eq9}) and the intermediate values theorem, there exists a unique  $G_{*1}\in ]0,+\infty[$ such that $H^{'}(G_{*1})=0$.
        \begin{itemize}
            \item[a)] If $H(G_{*1})<0$, then there is no plausible savanna equilibrium.
            \item[b)] If $H(G_{*1})>0$ and $\mathcal{A}>\mathcal{B}$, then, at most, there exists a unique savanna equilibrium $\textbf{E}_{*}=(G_{*},T_{*})'$
            whenever $G_{*1}< G^{*}$ and, such that $G_{*}\in]G_{*1}, G^{*}[$.
            \item[c)] If $H(G_{*1})>0$ and $\mathcal{A}<\mathcal{B}$, then, at most,  there are two savanna equilibria: $\textbf{E}^{1}_{*}=(G^{1}_{*},T^{1}_{*})'$ and $\textbf{E}^{2}_{*}=(G^{2}_{*},T^{2}_{*})'$
            whenever $G_{*1}< G^{*}$ and, such that $G^{1}_{*}\in]0, G_{*1}[$, $G^{2}_{*}\in]G_{*1}, G^{*}[$.
        \end{itemize}
        \item[2)] If $\mathcal{A}-\mathcal{B}-\mathcal{D}-\lambda>0$, then using (\ref{app_eq10}) and the intermediate values theorem, there exists a unique
        $G_{*2}\in ]0,+\infty[$ such that $H^{(2)}(G_{*2})=0$. From (\ref{app_eq9}) we have $H^{'}(G_{*2})>0$.
        Then using (\ref{app_eq9}) and the intermediate values theorem, there exists a unique  $G_{*3}\in ]G_{*2},+\infty[$
        such that  $H^{'}(G_{*3})=0$. Similarly as in $1)$ we have the following results.
        \begin{itemize}
            \item[a)] If $H(G_{*3})<0$, then there is no plausible savanna equilibrium.
            \item[b)] If $H(G_{*3})>0$ and $\mathcal{A}>\mathcal{B}$, then, at most, there exists a unique savanna equilibrium $\textbf{E}_{**}=(G_{**},T_{**})'$
            whenever $G_{*3}<G^{*}$ and, such that $G_{**}\in]G_{*3}, G^{*}[$.
            \item[c)] The remaining case is $H(G_{*3})>0$ and $\mathcal{A}<\mathcal{B}$. However it
            is unfeasible since $\mathcal{A}-\mathcal{B}-\mathcal{D}-\lambda>0$.
        \end{itemize}
    \end{itemize}
    \item[(II)] If $\mathcal{C}-\lambda\alpha_0>0$,  then using (\ref{app_eq11}) and the intermediate values theorem, there
    exists a unique  $\bar{G}_{*1}\in ]0,+\infty[$ such that $H^{(3)}(\bar{G}_{*1})=0$.
    \begin{itemize}
        \item[1)] If $H^{(2)}(\bar{G}_{*1})<0$, then $H^{(2)}(G_{*})<0$. It implies that $H^{'}$ decreases. Using (\ref{app_eq9}) and the intermediate values theorem, there exists a unique  $\bar{G}_{*2}\in ]0,+\infty[$ such that $H^{'}(\bar{G}_{*2})=0$.
        \begin{itemize}
            \item[a)] If $H(\bar{G}_{*2})<0$, then  there is no plausible savanna equilibrium.
            \item[b)] If $H(\bar{G}_{*2})>0$ and $\mathcal{A}>\mathcal{B}$, then, at most, there exists a unique savanna equilibrium $\bar{\textbf{E}}_{*}=(\bar{G}_{*},\bar{T}_{*})'$
            whenever $\bar{G}_{*2}<G^{*}$ and, such that $\bar{G}_{*}\in]\bar{G}_{*2}, G^{*}[$.
            \item[c)] If $H(\bar{G}_{*2})>0$ and $\mathcal{A}<\mathcal{B}$, then, at most,  there are two savanna equilibria: $\bar{\textbf{E}}^{1}_{*}=(\bar{G}^{1}_{*},\bar{T}^{1}_{*})'$ and $\bar{\textbf{E}}^{2}_{*}=(\bar{G}^{2}_{*},\bar{T}^{2}_{*})'$
            whenever $\bar{G}_{*2}<G^{*}$ and, such that $\bar{G}^{1}_{*}\in]0, \bar{G}_{*2}[$ and $\bar{G}^{2}_{*}\in]\bar{G}_{*2}, G^{*}[$.
        \end{itemize}
        \item[2)] If $H^{(2)}(\bar{G}_{*1})>0$ and $\mathcal{A}-\mathcal{B}-\mathcal{D}-\lambda>0$, then using (\ref{app_eq10}) and the intermediate values theorem, there exists a unique
         $\bar{G}_{*3}\in ]\bar{G}_{*1},+\infty[$ such that $H^{(2)}(\bar{G}_{*3})=0$.  Using (\ref{app_eq9}) there exists a unique $\bar{G}_{*4}\in ]\bar{G}_{*3},+\infty[$ such that $H^{'}(\bar{G}_{*4})=0$.
        \begin{itemize}
            \item[a)] If $H(\bar{G}_{*4})<0$, then  there is no plausible savanna equilibrium.
            \item[b)] If $H(\bar{G}_{*4})>0$ and $\mathcal{A}>\mathcal{B}$, then, at most, there exists a unique savanna equilibrium $\bar{\textbf{E}}_{**}=(\bar{G}_{**},\bar{T}_{**})'$
            whenever $\bar{G}_{*4}<G^{*}$ and, such that $\bar{G}_{**}\in]\bar{G}_{*4}, G^{*}[$.
            \item[c)] The remaining case is $H(\bar{G}_{*4})>0$ and $\mathcal{A}<\mathcal{B}$. However it
            is unfeasible since $\mathcal{A}-\mathcal{B}-\mathcal{D}-\lambda>0$.
        \end{itemize}
        \item[3)] If $H^{(2)}(\bar{G}_{*1})>0$ and $\mathcal{A}-\mathcal{B}-\mathcal{D}-\lambda<0$, then using (\ref{app_eq10}) and the intermediate values theorem there are
         $\bar{G}_{*5}\in]0, \bar{G}_{*1}[$ and $\bar{G}_{*6}\in]\bar{G}_{*1}, +\infty[$ such that $H^{(2)}(\bar{G}_{*5})=0=H^{(2)}(\bar{G}_{*6})$.
        \begin{itemize}
            \item[a)] If $H^{'}(\bar{G}_{*5})>0$ and $H^{'}(\bar{G}_{*6})>0$, then using (\ref{app_eq9}) and the intermediate value theorem there exists a unique $\bar{G}_{*7}\in]\bar{G}_{*6}, +\infty[$ such that $H^{'}(\bar{G}_{*7})=0$.
            \begin{itemize}
                \item[1.] If  $H(\bar{G}_{*7})<0$, then there is no plausible savanna equilibrium.
                \item[2.] If $H(\bar{G}_{*7})>0$ and $\mathcal{A}>\mathcal{B}$, then, at most, there exists a unique savanna equilibrium $\bar{\textbf{E}}_{***}=(\bar{G}_{***},\bar{T}_{***})'$
                whenever $\bar{G}_{*7}<G^{*}$ and, such that $\bar{G}_{***}\in]\bar{G}_{*7}, G^{*}[$.
                \item[3.] If $H(\bar{G}_{*7})>0$ and $\mathcal{A}<\mathcal{B}$, then, at most,  there are two savanna equilibria: $\bar{\textbf{E}}^{1}_{***}=(\bar{G}^{1}_{***},\bar{T}^{1}_{***})'$ and $\bar{\textbf{E}}^{2}_{***}=(\bar{G}^{2}_{***},\bar{T}^{2}_{***})'$
                whenever $\bar{G}_{*7}<G^{*}$ and, such that $\bar{G}^{1}_{***}\in]0, \bar{G}_{*7}[$, $\bar{G}^{2}_{***}\in]\bar{G}_{*7}, G^{*}[$.
            \end{itemize}

            \item[b)] If $H^{'}(\bar{G}_{*5})<0$ and $H^{'}(\bar{G}_{*6})>0$, then using (\ref{app_eq9}) and the intermediate value theorem
            there are $\bar{G}_{*8}\in]0, \bar{G}_{*5}[$, $\bar{G}_{*9}\in]\bar{G}_{*5}, \bar{G}_{*6}[$ and $\bar{G}_{*10}\in]\bar{G}_{*6}, +\infty[$
            such that
            $H^{'}(\bar{G}_{*8})=H^{'}(\bar{G}_{*9})=H^{'}(\bar{G}_{*10})=0$. Based on (\ref{app_eq8}), one deduces that $\bar{G}_{*8}$
            and $\bar{G}_{*10}$ are two local maxima while $\bar{G}_{*9}$ is a local minimum.
            Once more,  using (\ref{app_eq8}) and the intermediate values theorem we have:
            \begin{itemize}
                \item[1.] If
                $\max(H(\bar{G}_{*8},H(\bar{G}_{*10})))<0$, then
                there is no plausible savanna equilibrium.
                %
                \item[2.] If $H(\bar{G}_{*9})>0$ and $\mathcal{A}<\mathcal{B}$, then, at most, there exist two savanna equilibria: $\bar{\textbf{E}}^{1}_{****}=(\bar{G}^{1}_{****},\bar{T}^{1}_{****})'$ and $\bar{\textbf{E}}^{2}_{****}=(\bar{G}^{2}_{****},\bar{T}^{2}_{****})'$
                whenever $\bar{G}_{*10}<G^{*}$ and, such that $\bar{G}^{1}_{****}\in]0, \bar{G}_{*8}[$, $\bar{G}^{2}_{****}\in]\bar{G}_{*10}, G^{*}[$.
                %
                \item[3.] If $H(\bar{G}_{*9})>0$ and $\mathcal{A}>\mathcal{B}$, then, at most,  there is a unique savanna equilibrium  $\bar{\textbf{E}}_{****}=(\bar{G}_{****},\bar{T}_{****})'$
                whenever $\bar{G}_{*10}<G^{*}$ and, such that $\bar{G}_{****}\in]\bar{G}_{*10}, G^{*}[$.
                %
                \item[4.] If $H(\bar{G}_{*8})<0$ and
                $H(\bar{G}_{*10})>0$, then, at most, there exist two savanna equilibria: $\bar{\textbf{E}}^{1}_{****}=(\bar{G}^{1}_{****},\bar{T}^{1}_{****})'$ and $\bar{\textbf{E}}^{2}_{****}=(\bar{G}^{2}_{****},\bar{T}^{2}_{****})'$
                whenever $\bar{G}_{*10}<G^{*}$ and, such that $\bar{G}^{1}_{****}\in]\bar{G}_{*9}, \bar{G}_{*10}[$, $\bar{G}^{2}_{****}\in]\bar{G}_{*10}, G^{*}[$.
                %
                \item[5.] If $H(\bar{G}_{*8})>0$, $H(\bar{G}_{*10})<0$ and
                $\mathcal{A}<\mathcal{B}$, then, at most, there exist two savanna equilibria: $\bar{\textbf{E}}^{1}_{****}=(\bar{G}^{1}_{****},\bar{T}^{1}_{****})'$ and $\bar{\textbf{E}}^{2}_{****}=(\bar{G}^{2}_{****},\bar{T}^{2}_{****})'$
                whenever $\bar{G}_{*9}<G^{*}$ and, such that $\bar{G}^{1}_{****}\in]0, \bar{G}_{*8}[$, $\bar{G}^{2}_{****}\in]\bar{G}_{*8}, \bar{G}_{*9}[$.
                %
                \item[6.] If $H(\bar{G}_{*8})>0$, $H(\bar{G}_{*10})<0$ and
                $\mathcal{A}>\mathcal{B}$, then, at most, there exist a unique savanna equilibrium: $\bar{\textbf{E}}_{****}=(\bar{G}_{****},\bar{T}_{****})'$
                whenever $\bar{G}_{*9}<G^{*}$ and, such that $\bar{G}_{****}\in]\bar{G}_{*8}, \bar{G}_{*9}[$.
                \item[7.] If $\min(H(\bar{G}_{*8}),H(\bar{G}_{*10}))>0$, $H(\bar{G}_{*9})<0$ and
                $\mathcal{A}<\mathcal{B}$, then, at most, there are four savanna equilibria: $\bar{\textbf{E}}^{1}_{*****}=(\bar{G}^{1}_{*****},\bar{T}^{1}_{*****})'$,
                $\bar{\textbf{E}}^{2}_{*****}=(\bar{G}^{2}_{****},\bar{T}^{2}_{*****})'$,
                $\bar{\textbf{E}}^{3}_{*****}=(\bar{G}^{3}_{****},\bar{T}^{3}_{*****})'$
                and  $\bar{\textbf{E}}^{4}_{*****}=(\bar{G}^{4}_{****},\bar{T}^{4}_{*****})'$
                whenever $\bar{G}_{*10}<G^{*}$ and, such that $\bar{G}^{1}_{*****}\in]0, \bar{G}_{*8}[$, $\bar{G}^{2}_{****}\in]\bar{G}_{*8}, \bar{G}_{*9}[$,
                $\bar{G}^{3}_{****}\in]\bar{G}_{*9}, \bar{G}_{*10}[$ and $\bar{G}^{4}_{****}\in]\bar{G}_{*10}, G^{*}[$.
                \item[8.] If $\min(H(\bar{G}_{*8}),H(\bar{G}_{*10}))>0$, $H(\bar{G}_{*9})<0$ and
                $\mathcal{A}>\mathcal{B}$, then, at most, there are three savanna equilibria: $\bar{\textbf{E}}^{1}_{*****}=(\bar{G}^{1}_{*****},\bar{T}^{1}_{*****})'$,
                $\bar{\textbf{E}}^{2}_{*****}=(\bar{G}^{2}_{****},\bar{T}^{2}_{*****})'$
                and
                $\bar{\textbf{E}}^{3}_{*****}=(\bar{G}^{3}_{****},\bar{T}^{3}_{*****})'$
                whenever $\bar{G}_{*10}<G^{*}$ and, such that $\bar{G}^{1}_{****}\in]\bar{G}_{*8}, \bar{G}_{*9}[$,
                $\bar{G}^{2}_{****}\in]\bar{G}_{*9}, \bar{G}_{*10}[$ and $\bar{G}^{3}_{****}\in]\bar{G}_{*10}, G^{*}[$.
            \end{itemize}
            \item[c)] If $H^{'}(\bar{G}_{*5})<0$ and $H^{'}(\bar{G}_{*6})<0$, then using (\ref{app_eq9}) and the
            intermediate values theorem there exists a unique $\bar{G}_{*11}\in]0, G_{*5}[$ such that $H^{'}(\bar{G}_{*11})=0.$ Using (\ref{app_eq8})     and the intermediate value theorem we have:
            \begin{itemize}
                \item[1.] If $H(\bar{G}_{*11})<0$, then  there is no plausible savanna equilibrium.
                \item[2.] If $H(\bar{G}_{*11})>0$ and $\mathcal{A}>\mathcal{B}$, then, at most, there exists a unique savanna equilibrium $\bar{\textbf{E}}=(\bar{G},\bar{T})'$
                whenever $\bar{G}_{*11}<G^{*}$ and, such that $\bar{G}\in]\bar{G}_{*11}, G^{*}[$.
                \item[3.] If $H(\bar{G}_{*11})>0$ and $\mathcal{A}<\mathcal{B}$, then, at most,  there are two savanna equilibria: $\bar{\textbf{E}}^{1}=(\bar{G}^{1},\bar{T}^{1})'$
                and $\bar{\textbf{E}}^{2}=(\bar{G}^{2},\bar{T}^{2})'$
                whenever $\bar{G}_{*11}<G^{*}$ and, such that $\bar{G}^{1}\in]0, \bar{G}_{*11}[$ and $\bar{G}^{2}\in]\bar{G}_{*11},
                G^{*}[$.
            \end{itemize}
        \end{itemize}
    \end{itemize}
\end{enumerate}
This ends the case $\eta_{TG}(\textbf{W})>0$ or the competition case. In the sequel, we assume that $\eta_{TG}(\textbf{W})<0$; that is the facilitation case. 

\vspace{1cm}

\noindent \textbf{Case 2:} $\eta_{TG}(\textbf{W})<0$.\\
Recall that the tree component's of a savanna equilibrium is given by (\ref{app_eq2}). Hence, in the sequel, a plausible savanna equilibrium is given by a positive $G_*$ which is a zero of the function $H$ and which is such that $T_*$ defined by (\ref{app_eq2}) is positive.
In this case, one has $\mathcal{A}>0$, $\mathcal{C}<0$, $\mathcal{D}>0$, $\lambda>0$ and $\alpha_0<0$.

Let us set  $K(G)=\alpha_0^4G^2+8\alpha_0^3G+12\alpha_0$ such that $H^{(4)}(G_*)=-\lambda K(G_*) e^{\alpha_0G_*}$. One has $K''(G)=2\alpha_0^4>0$ and $K'(0)=8\alpha_0^3<0$. Hence, there exists a unique $\G_{1*}\in \R_+$ such that $K'(\G_{1*})=0$ and $K$ is decreasing on $[0,\G_{1*}]$ and, $K$ is increasing on $[\G_{1*}, +\infty)$. Since $K(0)=12\alpha_0<0$ and $\lim\limits_{G\rightarrow+\infty}K(G)=+\infty$, there exists a unique $\G_{1**}\in(\G_{1*},+\infty)$ such that $K(\G_{1**})=0$. Thus, $K(G)\leq0$ on $[0,\G_{1**}]$ and $K(G)>0$ on $[\G_{1**},+\infty)$. In other words,  $H^{(4)}(G_*)\geq0$ on $[0,\G_{1**}]$ and $H^{(4)}(G_*)<0$ on $[\G_{1**},+\infty)$. Hence, $H^{(3)}$ is increasing on $[0,\G_{1**}]$ and $H^{(3)}$ is decreasing on $[\G_{1**},+\infty)$. One has $H^{(3)}(0)=6(\mathcal{C}-\lambda\alpha_0)$ and $\lim\limits_{G\rightarrow+\infty}H^{(3)}(G)=6\mathcal{C}<0.$

\begin{enumerate}
    \item[(I)]Assume that $H^{(3)}(\G_{1**})\leq0$.
    \begin{enumerate}
        \item[1)] Assume that $H^{(2)}(0)=2(\mathcal{A}-\mathcal{B}-\mathcal{D}-\lambda)\leq0$. Since $H^{(1)}(0)=\mathcal{C}\alpha^2<0$ and $\lim\limits_{G\rightarrow+\infty}H^{(1)}(G)=-\infty$, then $H^{(1)}(G)<0$ on $\R_+$; i.e. $H$ is decreasing on $\R_+$. 
        \begin{enumerate}
            \item[a)] If $\mathcal{A}-\mathcal{B}<0$ i.e. $H(0)=(\mathcal{A}-\mathcal{B})\alpha^2<0$, then no plausible savanna equilibria exist.
            \item[b)] If $\mathcal{A}-\mathcal{B}>0$ i.e. $H(0)=(\mathcal{A}-\mathcal{B})\alpha^2>0$, then there exists a unique $G_{*1}\in[0,+\infty)$ such that $H(G_{*1}=0$. Hence, there exists at most one savanna equilibrium $\textbf{E}_*=(G_{*1},T_{*1})'$ whenever $T_{*1}>0$, where $T_{*1}$ is computed from (\ref{app_eq2}).
        \end{enumerate}
        \item[2)] Assume that $2(\mathcal{A}-\mathcal{B}-\mathcal{D}-\lambda)>0$. Note that, in this case, $\mathcal{A}-\mathcal{B}>0$. Then, there exists a unique $\G_{3*}\in\R_+$ such that $H^{(2)}(\G_{3*})=0$, $H^{(1)}$ is increasing on $[0, \G_{3*}]$ and is decreasing on $(\G_{3*},+\infty)$.  Since $H^{(1)}(0)=\mathcal{C}\alpha^2<0$ and $\lim\limits_{G\rightarrow+\infty}H^{(1)}(G)=-\infty$, we have two sub-cases.
        \begin{enumerate}
            \item[a)] Assume that $H^{(1)}(\G_{3*})\leq0$. Since $H(0)=(\mathcal{A}-\mathcal{B})\alpha^2>0$ and $\lim\limits_{G\rightarrow+\infty}H(G)=-\infty$, then there exists a unique $G_{*1}\in[0,+\infty)$ such that $H(G_{*1}=0$. Hence, there exists at most one savanna equilibrium $\textbf{E}_*=(G_{*1},T_{*1})'$ whenever $T_{*1}>0$.
            \item[b)] Assume that $H^{(1)}(\G_{3*})>0$. Then, there exist $\G_{3**}\in(0,\G_{3*})$ and $\G_{3***}\in(\G_{3*},+\infty)$ that are zeros of $H^{(1)}$.
            \begin{enumerate}
                \item[i)] If $\min(H(\G_{3**}), H(\G_{3***}))>0$ then there exists at most one savanna equilibrium $\textbf{E}_*=(G_{*1},T_{*1})'$ whenever $G_{*1}\in(\G_{3***},+\infty)$, $H(G_{*1})=0$ and $T_{*1}>0$.
                \item[ii)] If $H(\G_{3**})<0$ and $H(\G_{3***})>0$, then there exist at most three savanna equilibria $\textbf{E}_*^i=(G_{*i},T_{*i})'$ whenever $G_{*1}\in(0,\G_{3**})$, $G_{*2}\in(\G_{3**},\G_{3***})$, $G_{*3}\in(\G_{3***},+\infty)$, $H(G_{*i})=0$ and $T_{*i}>0$, $i=1,2,3$.
                \item[iii)] If $\max(H(\G_{3**}), H(\G_{3***}))<0$ then there exists at most one savanna equilibrium $\textbf{E}_*=(G_{*1},T_{*1})'$ whenever $G_{*1}\in(0,\G_{3**})$, $H(G_{*1})=0$ and $T_{*1}>0$.
            \end{enumerate}
        \end{enumerate}
    \end{enumerate}
    \item[(II)] Assume that $H^{(3)}(\G_{1**})>0$ and $\mathcal{C}-\lambda\alpha_0>0$. Then there exists a unique $\G_{1***}\in(\G_{1**},+\infty)$, zero of $H^{(3)}$.
    \begin{enumerate}
        \item[1)] Assume that $H^{(2)}(\G_{1***})\leq0$. Since $H^{(1)}(0)=\mathcal{C}\alpha^2<0$, then $H^{(1)}(G)<0$ on $\R_+$.
        \begin{enumerate}
            \item[a)] If $\mathcal{A}-\mathcal{B}<0$, then no plausible savanna equilibria exist.
            \item[b)] If $\mathcal{A}-\mathcal{B}>0$, then there exists a unique $G_{*1}\in[0,+\infty)$ such that $H(G_{*1}=0$. Hence, there exists at most one savanna equilibrium $\textbf{E}_*=(G_{*1},T_{*1})'$ whenever $T_{*1}>0$.
        \end{enumerate}
        \item[2)] Assume that $H^{(2)}(\G_{1***})>0$ and $\mathcal{A}-\mathcal{B}-\mathcal{D}-\lambda>0$. Then there exists a unique $\G_{4*}\in(\G_{1***},+\infty)$ such that $H^{(2)}(\G_{4*})=0$. One has two sub-cases.
        \begin{enumerate}
            \item[a)] Assume that $H^{(1)}(\G_{4*})\leq0$. Since $H(0)=(\mathcal{A}-\mathcal{B})\alpha^2>0$ and $\lim\limits_{G\rightarrow+\infty}H(G)=-\infty$, then there exists a unique $G_{*1}\in[0,+\infty)$ such that $H(G_{*1}=0$. Hence, there exists at most one savanna equilibrium $\textbf{E}_*=(G_{*1},T_{*1})'$ whenever $T_{*1}>0$.
            \item[b)] Assume that $H^{(1)}(\G_{4*})>0$. Then, there exist $\G_{5*}\in(0,\G_{4*})$ and $\G_{5**}\in(\G_{4*},+\infty)$ that are zeros of $H^{(1)}$.
            \begin{enumerate}
                \item[i)] If $\min(H(\G_{5*}), H(\G_{5**}))>0$ then there exists at most one savanna equilibrium $\textbf{E}_*=(G_{*1},T_{*1})'$ whenever $G_{*1}\in(\G_{5**},+\infty)$, $H(G_{*1})=0$ and $T_{*1}>0$.
                \item[ii)] If $H(\G_{5*})<0$ and $H(\G_{5**})>0$, then there exist at most three savanna equilibria $\textbf{E}_*^i=(G_{*i},T_{*i})'$ whenever $G_{*1}\in(0,\G_{5*})$, $G_{*2}\in(\G_{5*},\G_{5**})$, $G_{*3}\in(\G_{5**},+\infty)$, $T_{*i}>0$ and $H(G_{*i})=0$, $i=1,2,3$.
                \item[iii)] If $\max(H(\G_{5*}), H(\G_{5**}))<0$ then there exists at most one savanna equilibrium $\textbf{E}_*=(G_{*1},T_{*1})'$ whenever $G_{*1}\in(0,\G_{5*})$, $H(G_{*1})=0$ and $T_{*1}>0$.
            \end{enumerate} 
        \end{enumerate}
        \item[3)] Assume that $H^{(2)}(\G_{1***})>0$ and $\mathcal{A}-\mathcal{B}-\mathcal{D}-\lambda<0$. Then there exist $\G_{4**}\in(0,\G_{1***})$ and  $\G_{4***}\in(\G_{1***},+\infty)$ such that $H^{(2)}(\G_{4**})=H^{(2)}(\G_{4***})=0$. One has three sub-cases.
        \begin{enumerate}
            \item[a)] Assume that $H^{(1)}(\G_{4***})\leq0$. Then $H^{(1)}(G)\leq0$ on $\R_+$. Note that $\lim\limits_{G\rightarrow+\infty}H(G)=-\infty$.
            \begin{enumerate}
                \item[i)] If $\mathcal{A}-\mathcal{B}<0$, then no plausible savanna equilibria exist.
                \item[ii)] If $\mathcal{A}-\mathcal{B}>0$, then there exists a unique $G_{*1}\in[0,+\infty)$ such that $H(G_{*1}=0$. Hence, there exists at most one savanna equilibrium $\textbf{E}_*=(G_{*1},T_{*1})'$ whenever $T_{*1}>0$.
            \end{enumerate}
            \comment{
            Since $H(0)=(a-b)\alpha^2>0$ and $\lim\limits_{G\rightarrow+\infty}H(G)=-\infty$, then there exists a unique $G_{*1}\in[0,+\infty)$ such that $H(G_{*1}=0$. Hence, there exists at most one savanna equilibrium $\textbf{E}_*=(G_{*1},T_{*1})$ whenever $G_{*1}>G^*$.
            }
            \item[b)] Assume that $H^{(1)}(\G_{4***})>0$ and $\mathcal{A}-\mathcal{B}>0$. Then, there exist $\G_{6*}\in(\G_{4**}, \G_{4***})$ and $\G_{6**}\in(\G_{4***},+\infty)$ that are zeros of $H^{(1)}$.
            \begin{enumerate}
                \item[i)] If $\min(H(\G_{6*}), H(\G_{6**}))>0$ then there exists at most one savanna equilibrium $\textbf{E}_*=(G_{*1},T_{*1})'$ whenever $G_{*1}\in(\G_{6**},+\infty)$, $H(G_{*1})=0$ and $T_{*1}>0$.
                \item[ii)] If $H(\G_{6*})<0$ and $H(\G_{6**})>0$, then there exist at most three savanna equilibria $\textbf{E}_*^i=(G_{*i},T_{*i})'$ whenever $G_{*1}\in(0,\G_{6*})$, $G_{*2}\in(\G_{6*},\G_{6**})$, $G_{*3}\in(\G_{6**},+\infty)$, $H(G_{*i})=0$ and $T_{*i}>0$, $i=1,2,3$.
                \item[iii)] If $\max(H(\G_{6*}), H(\G_{6**}))<0$ then there exists at most one savanna equilibrium $\textbf{E}_*=(G_{*1},T_{*1})'$ whenever $G_{*1}\in(0,\G_{6*})$, $H(G_{*1})=0$ and $T_{*1}>0$.
            \end{enumerate}
            \item[c)] Assume that $H^{(1)}(\G_{4***})>0$ and $\mathcal{A}-\mathcal{B}<0$. Then, there exist $\G_{6*}\in(\G_{4**}, \G_{4***})$ and $\G_{6**}\in(\G_{4***},+\infty)$ that are zeros of $H^{(1)}$.
            \begin{enumerate}
                \item[i)] If $H(\G_{6**})<0$, then there is no plausible savanna equilibria.
                \item[ii)] If $H(\G_{6**})>0$, then there exist at most two savanna equilibria $\textbf{E}_*^i=(G_{*i},T_{*i})'$ whenever $G_{*1}\in(\G_{6*},\G_{6**})$, $G_{*2}\in(\G_{6**},+\infty)$, $H(G_{*i})=0$, $T_{*i}>0$, $i=1,2$.
            \end{enumerate}
        \end{enumerate}
    \end{enumerate}
    \item[(III)] Assume that $H^{(3)}(\G_{1**})>0$ and $\mathcal{C}-\lambda\alpha_0<0$. Then there exist $\G_{2*}\in(0,\G_{1**})$ and $\G_{2**}\in(\G_{1**},+\infty)$, zeros of $H^{(3)}$. We have five sub-cases.
    \begin{enumerate}
        \item[1)] Assume that $H^{(2)}(0)=\mathcal{A}-\mathcal{B}-\mathcal{D}-\lambda>0$ and $\min(H^{(2)}(\G_{2*}),H^{(2)}(\G_{2**}))>0$. Then there exists a unique $\G_{7*}\in(\G_{2**},+\infty)$ such that $H^{(2)}(\G_{7*})=0$. One has two sub-cases.
        \begin{enumerate}
            \item[a)] Assume that $H^{(1)}(\G_{7*})\leq0$. Since $\mathcal{A}-\mathcal{B}>0$, then there exists a unique $G_{*1}\in[0,+\infty)$ such that $H(G_{*1})=0$. Hence, there exists at most one savanna equilibrium $\textbf{E}_*=(G_{*1},T_{*1})'$ whenever $T_{*1}>0$.
\comment{
            \begin{enumerate}
                \item[i)] If $a-b<0$, then no plausible savanna equilibria exist.
                \item[ii)] If $a-b>0$, then there exists a unique $G_{*1}\in[0,+\infty)$ such that $H(G_{*1}=0$. Hence, there exists at most one savanna equilibrium $\textbf{E}_*=(G_{*1},T_{*1})$ whenever $T_{*1}>0$.
            \end{enumerate}
}
\comment{
            Since $H(0)=(a-b)\alpha^2>0$ and $\lim\limits_{G\rightarrow+\infty}H(G)=-\infty$, then there exists a unique $G_{*1}\in[0,+\infty)$ such that $H(G_{*1}=0$. Hence, there exists at most one savanna equilibrium $\textbf{E}_*=(G_{*1},T_{*1})$ whenever $G_{*1}>G^*$.
}
            \item[b)] Assume that $H^{(1)}(\G_{7*})>0$. Then, there exist $\G_{11*}\in(0, \G_{7*})$ and $\G_{11**}\in(\G_{7*},+\infty)$ that are zeros of $H^{(1)}$.
            \begin{enumerate}
                \item[i)] If $\min(H(\G_{11*}), H(\G_{11**}))>0$ then there exists at most one savanna equilibrium $\textbf{E}_*=(G_{*1},T_{*1})'$ whenever $G_{*1}\in(\G_{11**},+\infty)$, $H(G_{*1})=0$ and $T_{*1}>0$.
                \item[ii)] If $H(\G_{11*})<0$ and $H(\G_{11**})>0$, then there exist at most three savanna equilibria $\textbf{E}_*^i=(G_{*i},T_{*i})'$ whenever $G_{*1}\in(0,\G_{11*})$, $G_{*2}\in(\G_{11*},\G_{11**})$, $G_{*3}\in(\G_{11**},+\infty)$, $T_{*i}>0$ and $H(G_{*i})=0$, $i=1,2,3$.
                \item[iii)] If $\max(H(\G_{11*}), H(\G_{11**}))<0$ then there exists at most one savanna equilibrium $\textbf{E}_*=(G_{*1},T_{*1})'$ whenever $G_{*1}\in(0,\G_{11*})$, $H(G_{*1})=0$ and $T_{*1}>0$.
            \end{enumerate}
        \end{enumerate}
        \item[2)] Assume that $\mathcal{A}-\mathcal{B}-\mathcal{D}-\lambda>0$, $H^{(2)}(\G_{2*})<0$ and $H^{(2)}(\G_{2**})>0$. Then there exist $\G_{8*}\in(0,\G_{2*})$, $\G_{8**}\in(\G_{2*},\G_{2**})$ and $\G_{8***}\in(\G_{2**},+\infty)$ such that $H^{(2)}(\G_{8*})=H^{(2)}(\G_{8**})=H^{(2)}(\G_{8***})=0$. One has five sub-cases.
        \begin{itemize}
            \item[a)] Assume that $\max(H^{(1)}(\G_{8*}),H^{(1)}(\G_{8***}))\leq0$. Since $H(0)=(\mathcal{A}-\mathcal{B})\alpha^2>0$ and $\lim\limits_{G\rightarrow+\infty}H(G)=-\infty$, then there exists a unique $G_{*1}\in[0,+\infty)$ such that $H(G_{*1}=0$. Hence, there exists at most one savanna equilibrium $\textbf{E}_*=(G_{*1},T_{*1})'$ whenever $T_{*1}>0$.
            \item[b)] Assume that $H^{(1)}(\G_{8**})>0$. Then, there exist $\G_{12*}\in(0, \G_{8*})$ and $\G_{12**}\in(\G_{8***},+\infty)$ that are zeros of $H^{(1)}$.
            \begin{enumerate}
                \item[i)] If $\min(H(\G_{12*}), H(\G_{12**}))>0$ then there exists at most one savanna equilibrium $\textbf{E}_*=(G_{*1},T_{*1})'$ whenever $G_{*1}\in(\G_{12**},+\infty)$, $H(G_{*1})=0$ and $T_{*1}>0$.
                \item[ii)] If $H(\G_{12*})<0$ and $H(\G_{12**})>0$, then there exist at most three savanna equilibria $\textbf{E}_*^i=(G_{*i},T_{*i})'$ whenever $G_{*1}\in(0,\G_{12*})$, $G_{*2}\in(\G_{12*},\G_{12**})$, $G_{*3}\in(\G_{12**},+\infty)$, $T_{*i}>0$ and $H(G_{*i})=0$, $i=1,2,3$.
                \item[iii)] If $\max(H(\G_{12*}), H(\G_{12**}))<0$ then there exists at most one savanna equilibrium $\textbf{E}_*=(G_{*1},T_{*1})'$ whenever $G_{*1}\in(0,\G_{12*})$, $H(G_{*1})=0$ and $T_{*1}>0$.
          \end{enumerate}
          \item[c)] Assume that $H^{(1)}(\G_{8*})<0$ and $H^{(1)}(\G_{8***})>0$. Then, there exist $\G_{13*}\in(\G_{8**}, \G_{8***})$ and $\G_{13**}\in(\G_{8***},+\infty)$ that are zeros of $H^{(1)}$.
            \begin{enumerate}
                \item[i)] If $\min(H(\G_{13*}), H(\G_{13**}))>0$ then there exists at most one savanna equilibrium $\textbf{E}_*=(G_{*1},T_{*1})'$ whenever $G_{*1}\in(\G_{13**},+\infty)$, $H(G_{*1})=0$ and $T_{*1}>0$.
                \item[ii)] If $H(\G_{13*})<0$ and $H(\G_{13**})>0$, then there exist at most three savanna equilibria $\textbf{E}_*^i=(G_{*i},T_{*i})'$ whenever $G_{*1}\in(0,\G_{13*})$, $G_{*2}\in(\G_{13*},\G_{13**})$, $G_{*3}\in(\G_{13**},+\infty)$, $T_{*i}>0$ and $H(G_{*i})=0$, $i=1,2,3$.
                \item[iii)] If $\max(H(\G_{13*}), H(\G_{13**}))<0$ then there exists at most one savanna equilibrium $\textbf{E}_*=(G_{*1},T_{*1})'$ whenever $G_{*1}\in(0,\G_{13*})$, $H(G_{*1})=0$ and $T_{*1}>0$.
          \end{enumerate}
          \item[d)] Assume that $H^{(1)}(\G_{8*})>0$ and $H^{(1)}(\G_{8***})<0$. Then, there exist $\G_{14*}\in(0, \G_{8*})$ and $\G_{14**}\in(\G_{8*},\G_{8**})$ that are zeros of $H^{(1)}$.
            \begin{enumerate}
                \item[i)] If $\min(H(\G_{14*}), H(\G_{14**}))>0$ then there exists at most one savanna equilibrium $\textbf{E}_*=(G_{*1},T_{*1})'$ whenever $G_{*1}\in(\G_{14**},+\infty)$, $H(G_{*1})=0$ and $T_{*1}>0$.
                \item[ii)] If $H(\G_{14*})<0$ and $H(\G_{14**})>0$, then there exist at most three savanna equilibria $\textbf{E}_*^i=(G_{*i},T_{*i})'$ whenever $G_{*1}\in(0,\G_{14*})$, $G_{*2}\in(\G_{14*},\G_{14**})$, $G_{*3}\in(\G_{14**},+\infty)$, $T_{*i}>0$ and $H(G_{*i})=0$, $i=1,2,3$.
                \item[iii)] If $\max(H(\G_{14*}), H(\G_{14**}))<0$ then there exists at most one savanna equilibrium $\textbf{E}_*=(G_{*1},T_{*1})'$ whenever $G_{*1}\in(0,\G_{14*})$, $H(G_{*1})=0$ and $T_{*1}>0$.
          \end{enumerate}
          \item[e)] Assume that $\min(H^{(1)}(\G_{8*}), H^{(1)}(\G_{8***}))>0$ and $H^{(1)}(\G_{8**})<0$. Then, there exist $\G_{15*}\in(0, \G_{8*})$,  $\G_{15**}\in(\G_{8*},\G_{8**})$, $\G_{15***}\in(\G_{8**},\G_{8***})$ and $\G_{15****}\in(\G_{8***},+\infty)$ that are zeros of $H^{(1)}$.
            \begin{enumerate}
                \item[i)] If $\min(H(\G_{15*}), H(\G_{15***}))>0$ then there exists at most one savanna equilibrium $\textbf{E}_*=(G_{*1},T_{*1})'$ whenever $G_{*1}\in(\G_{15****},+\infty)$, $H(G_{*1})=0$ and $T_{*1}>0$.
                \item[ii)] If $H(\G_{15*})>0$, $H(\G_{15***})<0$ and $H(\G_{15****})>0$, then there exist at most three savanna equilibria $\textbf{E}_*^i=(G_{*i},T_{*i})'$ whenever $G_{*1}\in(\G_{15**},\G_{15***})$, $G_{*2}\in(\G_{15***},\G_{15****})$, $G_{*3}\in(\G_{15****},+\infty)$, $T_{*i}>0$ and $H(G_{*i})=0$, $i=1,2,3$.
                \item[iii)] If $H(\G_{15*})>0$ and $H(\G_{15****})<0$ then there exists at most one savanna equilibrium $\textbf{E}_*=(G_{*1},T_{*1})'$ whenever $G_{*1}\in(\G_{15**},\G_{15***})$, $H(G_{*1})=0$ and $T_{*1}>0$.
                \item[iv)] If $H(\G_{15**})<0$, $H(\G_{15****})<0$ then there exists at most one savanna equilibrium $\textbf{E}_*=(G_{*1},T_{*1})'$ whenever $G_{*1}\in(0, \G_{15*})$, $H(G_{*1})=0$ and $G_{*1}>G^*$.
                \item[v)] If $H(\G_{15*})<0$ and $H(\G_{15***})>0$, then there exist at most two savanna equilibria $\textbf{E}_*^i=(G_{*i},T_{*i})'$ whenever $G_{*1}\in(0, \G_{15*})$, $G_{*2}\in(\G_{15*},\G_{15**})$, $T_{*i}>0$ and $H(G_{*i})=0$, $i=1,2$.
                \item[vi)] If $H(\G_{15**})<0$ and $H(\G_{15****})>0$, then there exist at most two savanna equilibria $\textbf{E}_*^i=(G_{*i},T_{*i})'$ whenever $G_{*1}\in(\G_{15***},\G_{15****})$, $G_{*2}\in(\G_{15****},+\infty)$, $T_{*i}>0$ and $H(G_{*i})=0$, $i=1,2$.
                \item[vii)] If $\max(H(\G_{15*}),H(\G_{15***}))<0$ and $\min(H(\G_{15**}),H(\G_{15****}))>0$, then there exist at most five savanna equilibria $\textbf{E}_*^i=(G_{*i},T_{*i})'$ whenever $G_{*1}\in(0,\G_{15*})$, $G_{*2}\in(\G_{15*},\G_{15**})$, $G_{*3}\in(\G_{15**},\G_{15***})$, $G_{*4}\in(\G_{15***},\G_{15****})$ and $G_{*5}\in(\G_{15****},+\infty)$, $T_{*i}>0$ and $H(G_{*i})=0$, $i=1,2,3,4,5$.
                \item[viii)] If $\max(H(\G_{15*}),H(\G_{15****}))<0$ and $H(\G_{15**})>0$, then there exist at most three savanna equilibria $\textbf{E}_*^i=(G_{*i},T_{*i})'$ whenever $G_{*1}\in(0,\G_{15*})$, $G_{*2}\in(\G_{15*},\G_{15**})$ and $G_{*3}\in(\G_{15**},\G_{15***})$, $T_{*i}>0$ and $H(G_{*i})=0$, $i=1,2,3$.
          \end{enumerate}
        \end{itemize}
        \item[3)] Assume that $\mathcal{A}-\mathcal{B}-\mathcal{D}-\lambda>0$, $\max(H^{(2)}(\G_{2*}), H^{(2)}(\G_{2**}))<0$. Then there exists a unique $\G_{9*}\in(0,\G_{2*})$ such that $H^{(2)}(\G_{9*})=0$. One has two sub-cases.
        \begin{enumerate}
            \item[a)] Assume that $H^{(1)}(\G_{9*})\leq0$. Since $H(0)=(\mathcal{A}-\mathcal{B})\alpha^2>0$ and $\lim\limits_{G\rightarrow+\infty}H(G)=-\infty$, then there exists a unique $G_{*1}\in[0,+\infty)$ such that $H(G_{*1}=0$. Hence, there exists at most one savanna equilibrium $\textbf{E}_*=(G_{*1},T_{*1})'$ whenever $T_{*1}>0$.
            \item[b)] Assume that $H^{(1)}(\G_{9*})>0$. Then, there exist $\G_{16*}\in(0, \G_{9*})$ and $\G_{16**}\in(\G_{9*},+\infty)$ that are zeros of $H^{(1)}$.
            \begin{enumerate}
                \item[i)] If $\min(H(\G_{16*}), H(\G_{16**}))>0$ then there exists at most one savanna equilibrium $\textbf{E}_*=(G_{*1},T_{*1})'$ whenever $G_{*1}\in(\G_{16**},+\infty)$, $H(G_{*1})=0$ and $T_{*1}>0$.
                \item[ii)] If $H(\G_{16*})<0$ and $H(\G_{16**})>0$, then there exist at most three savanna equilibria $\textbf{E}_*^i=(G_{*i},T_{*i})'$ whenever $G_{*1}\in(0,\G_{16*})$, $G_{*2}\in(\G_{16*},\G_{16**})$, $G_{*3}\in(\G_{16**},+\infty)$, $T_{*i}>0$ and $H(G_{*i})=0$, $i=1,2,3$.
                \item[iii)] If $\max(H(\G_{16*}), H(\G_{16**}))<0$ then there exists at most one savanna equilibrium $\textbf{E}_*=(G_{*1},T_{*1})'$ whenever $G_{*1}\in(0,\G_{16*})$, $H(G_{*1})=0$ and $T_{*1}>0$.
            \end{enumerate} 
        \end{enumerate}
        \item[4)] Assume that $\mathcal{A}-\mathcal{B}-\mathcal{D}-\lambda\leq0$, $\max(H^{(2)}(\G_{2*}), H^{(2)}(\G_{2**}))<0$. Then, $H^{(1)}(G)\leq0$ on $\R_+$. Note that $\lim\limits_{G\rightarrow+\infty}H(G)=-\infty$.
        \begin{enumerate}
            \item[a)] If $\mathcal{A}-\mathcal{B}<0$, then no plausible savanna equilibria exist.
            \item[b)] If $\mathcal{A}-\mathcal{B}>0$, then there exists a unique $G_{*1}\in[0,+\infty)$ such that $H(G_{*1})=0$. Hence, there exists at most one savanna equilibrium $\textbf{E}_*=(G_{*1},T_{*1})'$ whenever $T_{*1}>0$.
        \end{enumerate}
\comment{
        there exists a unique $G_{*1}\in[0,+\infty)$ such that $H(G_{*1}=0$. Hence, there exists at most one savanna equilibrium $\textbf{E}_*=(G_{*1},T_{*1})$ whenever $G_{*1}>G^*$.
}
        \item[5)] Assume that $\mathcal{A}-\mathcal{B}-\mathcal{D}-\lambda\leq0$, $H^{(2)}(\G_{2*})<0$ and  $H^{(2)}(\G_{2**})>0$. Then there exist $\G_{10*}\in(\G_{2*}, \G_{2**})$ and $\G_{10**}\in(\G_{2**}, +\infty)$ such that $H^{(2)}(\G_{10*})=H^{(2)}(\G_{10**})=0$. One has two sub-cases.
        \begin{enumerate}
            \item[a)] Assume that $H^{(1)}(\G_{10**})\leq0$. Then $H^{(1)}(G)\leq0$ on $\R_+$. Note that $\lim\limits_{G\rightarrow+\infty}H(G)=-\infty$.
            \begin{enumerate}
                \item[i)] If $\mathcal{A}-\mathcal{B}<0$, then no plausible savanna equilibria exist.
                \item[ii)] If $\mathcal{A}-\mathcal{B}>0$, then there exists a unique $G_{*1}\in[0,+\infty)$ such that $H(G_{*1}=0$. Hence, there exists at most one savanna equilibrium $\textbf{E}_*=(G_{*1},T_{*1})'$ whenever $T_{*1}>0$.
            \end{enumerate}
\comment{
            Since $H(0)=(a-b)\alpha^2>0$ and $\lim\limits_{G\rightarrow+\infty}H(G)=-\infty$, then there exists a unique $G_{*1}\in[0,+\infty)$ such that $H(G_{*1}=0$. Hence, there exists at most one savanna equilibrium $\textbf{E}_*=(G_{*1},T_{*1})$ whenever $G_{*1}>G^*$.
}
            \item[b)] Assume that $H^{(1)}(\G_{10**})>0$ and $\mathcal{A}-\mathcal{B}>0$. Then, there exist $\G_{17*}\in(\G_{10*}, \G_{10**})$ and $\G_{17**}\in(\G_{10**},+\infty)$ that are zeros of $H^{(1)}$.
            \begin{enumerate}
                \item[i)] If $\min(H(\G_{17*}), H(\G_{17**}))>0$ then there exists at most one savanna equilibrium $\textbf{E}_*=(G_{*1},T_{*1})'$ whenever $G_{*1}\in(\G_{17**},+\infty)$, $H(G_{*1})=0$ and $T_{*1}>0$.
                \item[ii)] If $H(\G_{17*})<0$ and $H(\G_{17**})>0$, then there exist at most three savanna equilibria $\textbf{E}_*^i=(G_{*i},T_{*i})'$ whenever $G_{*1}\in(0,\G_{17*})$, $G_{*2}\in(\G_{17*},\G_{17**})$, $G_{*3}\in(\G_{17**},+\infty)$, $T_{*i}>0$ and $H(G_{*i})=0$, $i=1,2,3$.
                \item[iii)] If $\max(H(\G_{17*}), H(\G_{17**}))<0$ then there exists at most one savanna equilibrium $\textbf{E}_*=(G_{*1},T_{*1})'$ whenever $G_{*1}\in(0,\G_{17*})$, $H(G_{*1})=0$ and $T_{*1}>0$.
            \end{enumerate}
            \item[c)] Assume that $H^{(1)}(\G_{10**})>0$ and $\mathcal{A}-\mathcal{B}<0$. Then, there exist $\G_{17*}\in(\G_{10*}, \G_{10**})$ and $\G_{17**}\in(\G_{10**},+\infty)$ that are zeros of $H^{(1)}$.
            \begin{enumerate}
                \item[i)] If $H(\G_{17**})<0$ then no plausible savanna equilibria exist.
                \item[ii)] If $H(\G_{17**})>0$, then there exist at most two savanna equilibria $\textbf{E}_*^i=(G_{*i},T_{*i})'$ whenever  $G_{*1}\in(\G_{17*},\G_{17**})$, $G_{*2}\in(\G_{17**},+\infty)$, $H(G_{*i})=0$ and $T_{*i}>0$, $i=1,2$.
            \end{enumerate}
        \end{enumerate}
    \end{enumerate}
\end{enumerate}

This ends the case $\eta_{TG}(\textbf{W})<0$.\\

\noindent \textbf{Case 3:} $\eta_{TG}(\textbf{W})=0$.\\
From system (\ref{app_eq1}), one has
\begin{equation}\label{app_eq100}
\left\{
\begin{array}{l}
 G_* = G^*, \\
T^*-T_*-\displaystyle\frac{K_T(\textbf{W})}{g_T(\textbf{W})}f\vartheta(T_*)\omega(G^*)=0. 
\end{array}
\right.
\end{equation}
From system (\ref{app_eq100}) one deduces that a necessary condition for the existence of plausible savanna equilibria includes
$$\mathcal{R}^1_{\textbf{W}}>1, \quad \mathcal{R}^2_{\textbf{W}}>1, \quad T_*<T^*.$$
Let us set
$$\begin{array}{l}
 u =\displaystyle\frac{K_T(\textbf{W})}{g_T(\textbf{W})}f\omega(G^*)\lambda_{fT}^{min}, \\
v =\displaystyle\frac{K_T(\textbf{W})}{g_T(\textbf{W})}f\omega(G^*)(\lambda_{fT}^{max}-\lambda_{fT}^{min}),\\
J(T)=T^*-T-u-ve^{-pT}.
\end{array}$$
One has $J^{(1)}(T)=-1+pve^{-pT}$ and
$J^{(2)}(T)=-p^2ve^{-pT}<0$. Hence $J^{(1)}$ is decreasing on $\R_+$ and $\lim\limits_{T\rightarrow+\infty}J^{(1)}(T)=-1$. 
\begin{enumerate}
    \item[(I)] Assume that $J^{(1)}(0)=-1+pv>0$. Then there exists a unique $\bar{T}_{1*}\in\R_+$ such that $J^{(1)}(\bar{T}_{1*})=0$, that is $\bar{T}_{1*}=\ln{(pv)}/p$.
    \begin{enumerate}
        \item[1)] Assume that $J(\bar{T}_{1*})<0$. Then no plausible savanna equilibria exist.
        \item[2)] Assume that $J(\bar{T}_{1*})>0$ and $J(0)=T^*-u-v<0$. Then there exist at most two savanna equilibria $\textbf{E}_*^i=(G_{*},T_{*i})'$ whenever  $T_{*1}\in(0,\bar{T}_{1*})$, $T_{*2}\in(\bar{T}_{1*},+\infty)$, $J(T_{*i})=0$ and $T_{*i}<T^*$, $i=1,2$.
        \item[3)] Assume that $J(\bar{T}_{1*})>0$ and $J(0)=T^*-u-v>0$. Then there exist at most one savanna equilibrium $\textbf{E}_*=(G_{*},T_{*1})'$ whenever  $T_{*1}\in(\bar{T}_{1*},+\infty)$, $J(T_{*1})=0$ and $T_{*1}<T^*$.
    \end{enumerate}
    \item[(II)] Assume that $J^{(1)}(0)=-1+pv\leq0$. Then  $J$ is decreasing on $\R_+$. Note that $\lim\limits_{T\rightarrow+\infty}J(T)=-\infty$.
    \begin{enumerate}
        \item[1)] Assume that $J(0)=T^*-u-v<0$. Then no plausible savanna equilibria exist.
        \item[2)] Assume that $J(0)=T^*-u-v>0$. Then there exist at most one savanna equilibrium $\textbf{E}_*=(G_{*},T_{*1})'$ whenever  $T_{*1}\in(0,+\infty)$, $J(T_{*1})=0$ and $T_{*1}<T^*$.
    \end{enumerate}
\end{enumerate}

This ends the case $\eta_{TG}(\textbf{W})=0$ and the proof of the theorem.
\end{proof}

\section{Proof of Theorem \ref{al_thm2-bis} (Stability of non-hyperbolic equilibria)}\label{proff-al_thm2-bis}
In this section we give the proof of point (1) of Theorem \ref{al_thm2-bis}. Points (2) and (3) are done in the same way.
\begin{itemize}
    \item[(a)] Assume that $\mathcal{R}^{1}_{\textbf{W}}<1$ and $\mathcal{R}^{2}_{\textbf{W}}=1$. Hence, system (\ref{swv_eq1}) becomes
\begin{equation}
\left\{
\begin{array}{l}
\displaystyle \frac{dG}{dt}=-\displaystyle\frac{\gamma_{G}\textbf{W}}{b_{G}+\textbf{W}}\displaystyle\frac{G^2}{K_{G}(\textbf{W})}-\eta_{TG}(\textbf{W})TG,\\
\\
\displaystyle\frac{dT}{dt}=\displaystyle\frac{\gamma_{T}\textbf{W}}{b_{T}+\textbf{W}}T\left(1-\displaystyle\frac{T}{K_{T}(\textbf{W})}\right)-\delta_{T}T-f\vartheta(T)\omega(G)T,\\
\end{array}
\right.
\label{swv_eq1-reduit}
\end{equation}

and the Jacobian matrix of system (\ref{swv_eq1-reduit}) computed at $\textbf{E}_{0}=(0,0)'$ is 
    $$J_{\textbf{E}_{0}}=\left(
  \begin{array}{cc}
  0 & 0\\
  0 & \delta_T(\mathcal{R}^{1}_{\textbf{W}}-1
  \end{array}
\right).$$
Obviously, eigenvalues of $J_{\textbf{E}_{0}}$ are $\xi_1=0$ and $\xi_2=\delta_T(\mathcal{R}^{1}_{\textbf{W}}-1)<0$. An eigenvector corresponding to $\xi_1$ (resp. $\xi_2$) is $u_1=(1,0)'$ (resp. $u_2=(0,1)'$). Therefore, the linear stable manifold is $E^s=\{\alpha u_2,\quad \alpha \in \R\}$ and the linear center manifold is $E^c=\{\alpha u_1,\quad \alpha \in \R\}$. Since both $E^s$ and $E^c$ are invariant by system  (\ref{swv_eq1-reduit}), one deduces that the stable manifold is $W^s=E^s$ and the center manifold is $W^c=E^c$. On the center manifold, that is when $T=0$, we have from the first equation of system (\ref{swv_eq1-reduit}) that $\displaystyle \frac{dG}{dt}<0$. Hence, the non-hyperbolic equilibrium $\textbf{E}_{0}$ is locally stable in the positive orthant of $\R^2$; that is $\R^2_{+}$.

\item[(b)] Assume that $\mathcal{R}^{1}_{\textbf{W}}=1$ and $\mathcal{R}^{2}_{\textbf{W}}=1$. Hence, system (\ref{swv_eq1}) becomes
\begin{equation}
\left\{
\begin{array}{l}
\displaystyle \frac{dG}{dt}=-\displaystyle\frac{\gamma_{G}\textbf{W}}{b_{G}+\textbf{W}}\displaystyle\frac{G^2}{K_{G}(\textbf{W})}-\eta_{TG}(\textbf{W})TG,\\
\\
\displaystyle\frac{dT}{dt}=-\displaystyle\frac{\gamma_{T}\textbf{W}}{b_{T}+\textbf{W}}\displaystyle\frac{T^2}{K_{T}(\textbf{W})}-f\vartheta(T)\omega(G)T,\\
\end{array}
\right.
\label{swv_eq1-reduit-2}
\end{equation}

and the Jacobian matrix of system (\ref{swv_eq1-reduit-2}) computed at $\textbf{E}_{0}=(0,0)'$ is 
    $$J_{\textbf{E}_{0}}=\left(
  \begin{array}{cc}
  0 & 0 \\
  0 &0
  \end{array}
\right).$$
Obviously, eigenvalue of $J_{\textbf{E}_{0}}$ is $\xi=0$ which is double. Every non zero vectors of $\R^2$ is an eigenvector corresponding to $\xi$. Therefore, the center manifold is $W^c=\R^2$. On the set $\{T=0\}$, one has $\displaystyle \frac{dG}{dt}<0$ and on the set $\{G=0\}$, one has $\displaystyle \frac{dT}{dt}<0$. Hence, the non-hyperbolic equilibrium $\textbf{E}_{0}$ is locally stable in the positive orthant of $\R^2$; that is $\R^2_{+}$.

\item[(c)] The case where $\mathcal{R}^{1}_{\textbf{W}}=1$ and $\mathcal{R}^{2}_{\textbf{W}}<1$ is done like item $(a)$.
\end{itemize}

\section{Proof of Theorem \ref{al_thm3} (Stability of the savanna equilibrium)}
\label{al_AppendixB}

The Jacobian matrix at the savanna equilibrium
$\textbf{E}_{S}=(G_{*}, T_{*})'$ is given by

\begin{displaymath}
J_{*}=J(G_{*}, T_{*})=\left(
\begin{array}{ccc}
J_{*}^{11} & J_{*}^{12}\\
J_{*}^{21} & J_{*}^{22}\\
\end{array}
\right),
\end{displaymath}

where,

\begin{equation}
\left\{
\begin{array}{lcl}
J_{*}^{11}&=&g_{G}(\textbf{W})-(\delta_{G}+\lambda_{fG}f)-2\dfrac{g_{G}(\textbf{W})}{K_{G}(\textbf{W})}G_{*}-\eta_{TG}(\textbf{W})T_{*},\\
 &=&-\dfrac{g_{G}(\textbf{W})}{K_{G}(\textbf{W})}G_{*}.\\
J_{*}^{21}&=&-f\vartheta(T_{*})\omega^{'}(G_{*})T_{*}.\\
J_{*}^{12}&=&-\eta_{TG}(\textbf{W})G_{*}.\\
J_{*}^{22}&=&g_{T}(\textbf{W})-\delta_{T}-2\dfrac{g_{T}(\textbf{W})}{K_{T}(\textbf{W})}T_{*}-f\omega(G_{*})[\vartheta(T_{*})+T_{*}\vartheta^{'}(T_{*})],\\
 &=&-\dfrac{g_{T}(\textbf{W})}{K_{T}(\textbf{W})}T_{*}-f\omega(G_{*})T_{*}\vartheta^{'}(T_{*}).
\end{array}
\right.
\label{app_eq13}
\end{equation}
Recall that $$\vartheta^{'}(T_{*})<0.$$ The characteristic
equation of $J_{*}$ is

\begin{equation}
\mu^{2}-tr(J_{*})\mu+det(J_{*})=0,
\label{app_eq14}
\end{equation}
where, $tr(J_{*})=J_{*}^{11}+J_{*}^{22}$ and
$det(J_{*})=J_{*}^{11}J_{*}^{22}-J_{*}^{21}J_{*}^{12}$. It follows
that all eigenvalues of  the characteristic equation have negative
real part if and only if $tr(J_{*})<0$ and $det(J_{*})>0$.

\par
We have
\begin{equation}\label{trace-jacobien}
\begin{array}{cl}
tr(J_{*})&=J_{*}^{11}+J_{*}^{22}\\
&=-\left(\dfrac{g_{G}(\textbf{W})}{K_{G}(\textbf{W})}G_{*}+\dfrac{g_{T}(\textbf{W})}{K_{T}(\textbf{W})}T_{*}\right)-f\omega(G_{*})T_{*}\vartheta^{'}(T_{*})\\
&=\left(\dfrac{g_{G}(\textbf{W})}{K_{G}(\textbf{W})}G_{*}+\dfrac{g_{T}(\textbf{W})}{K_{T}(\textbf{W})}T_{*}\right)(\mathcal{R}^{1}_{*}-1),
\end{array}
\end{equation}

where,

$$\mathcal{R}^{1}_{*}=\dfrac{-f\omega(G_{*})T_{*}\vartheta^{'}(T_{*})}
{\left(\dfrac{g_{G}(\textbf{W})}{K_{G}(\textbf{W})}G_{*}+\dfrac{g_{T}(\textbf{W})}{K_{T}(\textbf{W})}T_{*}\right)}.$$

When $\eta_{TG}(\textbf{W})>0$, we have:

\begin{equation}\label{determinant-jacobien}
\begin{array}{cl}
det(J_{*})&=J_{*}^{11}J_{*}^{22}-J_{*}^{21}J_{*}^{12}\\
&=\dfrac{g_{G}(\textbf{W})}{K_{G}(\textbf{W})}G_{*}\left(\dfrac{g_{T}(\textbf{W})}{K_{T}(\textbf{W})}T_{*}+f\omega(G_*)\vartheta'(T_*)T_*\right)-f\eta_{TG}(\textbf{W})T_*G_*\vartheta(T_*)\omega'(G_*),\\
&=T_*G_*\left[\dfrac{g_{G}(\textbf{W})g_{T}(\textbf{W})}{K_{G}(\textbf{W})K_{T}(\textbf{W})}+f\dfrac{g_{G}(\textbf{W})}{K_{G}(\textbf{W})}\omega(G_*)\vartheta'(T_*)-f\eta_{TG}(\textbf{W})\vartheta(T_*)\omega'(G_*)\right],\\
&=\eta_{TG}(\textbf{W})T_*G_*\left[\dfrac{g_{G}(\textbf{W})g_{T}(\textbf{W})}{\eta_{TG}(\textbf{W})K_{G}(\textbf{W})K_{T}(\textbf{W})}-\left(\left.\dfrac{d}{dG}(f\vartheta(T(G))\omega(G))\right|_{G=G_*}\right)\right],\\
&=T_*G_*\left[-f\dfrac{g_{G}(\textbf{W})}{K_{G}(\textbf{W})}\omega(G_*)\vartheta'(T_*)+f\eta_{TG}(\textbf{W})\vartheta(T_*)\omega'(G_*)\right](\mathcal{R}^2_*-1),
\end{array}
\end{equation}
where
$$\mathcal{R}^{2}_{*}=\dfrac{\dfrac{g_{G}(\textbf{W})g_{T}(\textbf{W})}{K_{G}(\textbf{W})K_{T}(\textbf{W})}}
{\left(-f\dfrac{g_{G}(\textbf{W})}{K_{G}(\textbf{W})}\omega(G_*)\vartheta'(T_*)+f\eta_{TG}(\textbf{W})\vartheta(T_*)\omega'(G_*)
\right)}.$$

 Recall that, the expression of $T(G)$ is given by
(\ref{app_eq2}), page \pageref{app_eq2}. Based on the chain rule, we
prove that
$$\dfrac{d}{dG}\vartheta(T(G))=\dfrac{d\vartheta(T)}{dT}\dfrac{dT}{dG}=-\vartheta'(T)\dfrac{g_{G}(\textbf{W})}{\eta_{TG}(\textbf{W})K_{G}(\textbf{W})}.$$

When $\eta_{TG}(\textbf{W})<0$, we have:

\begin{equation}\label{determinant-jacobien-enta-negatif}
\begin{array}{cl}
det(J_{*})&=J_{*}^{11}J_{*}^{22}-J_{*}^{21}J_{*}^{12}\\
&=T_*G_*\left[\dfrac{g_{G}(\textbf{W})g_{T}(\textbf{W})}{K_{G}(\textbf{W})K_{T}(\textbf{W})}+f\dfrac{g_{G}(\textbf{W})}{K_{G}(\textbf{W})}\omega(G_*)\vartheta'(T_*)-f\eta_{TG}(\textbf{W})\vartheta(T_*)\omega'(G_*)\right],\\
&=T_*G_*\left[-f\dfrac{g_{G}(\textbf{W})}{K_{G}(\textbf{W})}\omega(G_*)\vartheta'(T_*)\right](\mathcal{Q}^2_*-1),
\end{array}
\end{equation}
where
$$\mathcal{Q}^{2}_{*}=\dfrac{\dfrac{g_{G}(\textbf{W})g_{T}(\textbf{W})}{K_{G}(\textbf{W})K_{T}(\textbf{W})}-f\eta_{TG}(\textbf{W})\vartheta(T_*)\omega'(G_*)}
{-f\dfrac{g_{G}(\textbf{W})}{K_{G}(\textbf{W})}\omega(G_*)\vartheta'(T_*)}.$$

Thus, in the case $\eta_{TG}(\textbf{W})>0$, the savanna
equilibrium $\textbf{E}_S=(G_{*}, T_{*})'$ is locally asymptotically
stable whenever $\mathcal{R}^{1}_{*}<1$ and $\mathcal{R}^{2}_{*}>1$.
Similarly,  in the case $\eta_{TG}(\textbf{W})<0$, the savanna
equilibrium $\textbf{E}_S=(G_{*}, T_{*})'$ is locally asymptotically
stable whenever $\mathcal{R}^{1}_{*}<1$ and $\mathcal{Q}^{2}_{*}>1$.
This ends the proof of Theorem \ref{al_thm3}.

\section{Proof of Theorem \ref{Hopf-Lyapunov} (Lyapunov Number)}
\label{AppendixD}

Introducing perturbations
$$x=G-G_*  \quad \mbox{and}\quad y=T-T_*$$
in system
(\ref{swv_eq1}) and then expanding in Taylor series, we have
\begin{equation}\label{New-system-Hopf}
    \begin{array}{l}
      \dfrac{dx}{dt}=a_{10}x+a_{01}y+a_{20}x^2+a_{11}xy+a_{02}y^2+a_{30}x^3+a_{21}x^2y+a_{12}xy^2+a_{03}y^3+\cdot\cdot\cdot, \\
      \dfrac{dy}{dt}=b_{10}x+b_{01}y+b_{20}x^2+b_{11}xy+b_{02}y^2+b_{30}x^3+b_{21}x^2y+b_{12}xy^2+b_{03}y^3+\cdot\cdot\cdot,
    \end{array}
\end{equation}
where $a_{10}=J^{11}_*:=a$, $a_{01}=J^{12}_*:=b$,
$b_{10}=J^{21}_*:=c$ and $b_{01}=J^{22}_*:=d$ are the elements of
the Jacobian matrix evaluated at the savanna equilibrium
$\textbf{E}_{S}=(G_{*}, T_{*})'$ with $f=f_h$ (see equation
(\ref{app_eq13}), page \pageref{app_eq13}). Hence, together with
(\ref{condition-trace}), (\ref{condition-determinant}), we have
$$a_{10}+b_{01}=0 \quad \mbox{and}\quad
\Delta=a_{10}b_{01}-a_{01}b_{10}>0.$$ Let $(F_1,F_2)'$ denotes the
right hand side of system (\ref{swv_eq1}). The expressions of the coefficients
$a_{ij}$ and $b_{ij}$ with $i,j\in\{1,2,3\}$ are given below:
\begin{equation}\label{coeficient}
\begin{array}{l}
a_{20}=\left.\dfrac{1}{2}\dfrac{\partial^2F_1}{\partial
G^2}\right|_{(\textbf{E}_{S},f=f_h)}=-\dfrac{g_{G}(\textbf{W})}{K_{G}(\textbf{W})},
a_{02}=\left.\dfrac{1}{2}\dfrac{\partial^2F_1}{\partial
T^2}\right|_{(\textbf{E}_{S},f=f_h)}=0,
a_{11}=\left.\dfrac{\partial^2F_1}{\partial G\partial T}\right|_{(\textbf{E}_{S},f=f_h)}=-\eta_{TG}(\textbf{W}), \\
a_{12}=\left.\dfrac{1}{2}\dfrac{\partial^3F_1}{\partial G\partial
T^2}\right|_{(\textbf{E}_{S},f=f_h)}=0,
a_{21}=\left.\dfrac{1}{2}\dfrac{\partial^3F_1}{\partial G^2\partial
T}\right|_{(\textbf{E}_{S},f=f_h)}=0,\\
a_{30}=\left.\dfrac{1}{6}\dfrac{\partial^3F_1}{\partial
G^3}\right|_{(\textbf{E}_{S},f=f_h)}=0,
a_{03}=\left.\dfrac{1}{6}\dfrac{\partial^3F_1}{\partial T^3}\right|_{(\textbf{E}_{S},f=f_h)}=0, \\
b_{20}=\left.\dfrac{1}{2}\dfrac{\partial^2F_2}{\partial
G^2}\right|_{(\textbf{E}_{S},f=f_h)},
b_{02}=\left.\dfrac{1}{2}\dfrac{\partial^2F_2}{\partial
T^2}\right|_{(\textbf{E}_{S},f=f_h)},
b_{11}=\left.\dfrac{\partial^2F_2}{\partial G\partial T}\right|_{(\textbf{E}_{S},f=f_h)}, \\
b_{12}=\left.\dfrac{1}{2}\dfrac{\partial^3F_2}{\partial G\partial
T^2}\right|_{(\textbf{E}_{S},f=f_h)},
b_{21}=\left.\dfrac{1}{2}\dfrac{\partial^3F_2}{\partial G^2\partial
T}\right|_{(\textbf{E}_{S},f=f_h)},
b_{30}=\left.\dfrac{1}{6}\dfrac{\partial^3F_2}{\partial
G^3}\right|_{(\textbf{E}_{S},f=f_h)},
b_{03}=\left.\dfrac{1}{6}\dfrac{\partial^3F_2}{\partial T^3}\right|_{(\textbf{E}_{S},f=f_h)}. \\
\end{array}
\end{equation}
The value of the first Lyapunov number, which helps to determine the
nature of the stability of limit cycle arising through Hopf
bifurcation is given by (\citet[page 253]{Andronov1971},
\citet[page 353]{Perko2001})
\begin{equation}\label{sigma -Lyapounov}
\begin{array}{ccl}
\sigma &=&
-\dfrac{3\pi}{2b\Delta^{3/2}}\left\{[ac(a^2_{11}+a_{11}b_{02}+a_{02}b_{11})+ab(b^2_{11}+a_{20}b_{11}+a_{11}b_{02})\right.\\
&&+c^2(a_{11}a_{02}+2a_{02}b_{02})-2ac(b^2_{02}-a_{20}a_{02})-2ab(a^2_{20}-b_{20}b_{02})\\
&&-b^2(2a_{20}b_{20}+b_{11}b_{20})+(bc-2a^2)(b_{11}b_{02}-a_{11}a_{20})]\\
&&\left.-(a^2+bc)[3(cb_{03}-ba_{30})+2a(a_{21}+b_{12})+(ca_{12}-bb_{21})]\right\}\\
&=&
-\dfrac{3\pi}{2b\Delta^{3/2}}\left\{[ac(a^2_{11}+a_{11}b_{02})+ab(b^2_{11}+a_{20}b_{11}+a_{11}b_{02})\right.\\
&&-2acb^2_{02}-2ab(a^2_{20}-b_{20}b_{02})-b^2(2a_{20}b_{20}+b_{11}b_{20})+(bc-2a^2)(b_{11}b_{02}-a_{11}a_{20})]\\
&&\left.-(a^2+bc)[3cb_{03}+2ab_{12}-bb_{21}]\right\}.\\
\end{array}
\end{equation}
Hence, conclusions of Theorem \ref{Hopf-Lyapunov} follow from \citet[Theorem 1, page 352]{Perko2001}.
\end{document}